\pgfplotsset{compat = newest, width = 10cm}
\def\ER{Erd\H{o}s-R\'{e}nyi }
\def\E{\mathbb{E}}
\def\N{\mathbb{N}}
\def\E{\mathbb{E}}
\def\R{\mathbb{R}}
\def\eps{\varepsilon}
\def\del{\delta}
\def\1{\mathbf{1}}
\def\tce{t_c + \eps}
\def\tce2{t_c + \frac{\eps}{2}}
\def\ER{Erd\H{o}s-R\'{e}nyi }
\def\Gdb{G_{n,d}^{\mathrm{bip}}}
\def\Xg{X_{\gamma}}
\def\bT{\mathbf{T}}
\newtheorem{Theorem}{Theorem}
\newtheorem{Proposition}[Theorem]{Proposition}
\newtheorem{Corollary}[Theorem]{Corollary}
\newtheorem{lemma}[Theorem]{Lemma}
\newtheorem{Conjecture}{Conjecture}
\newtheorem*{observation}{Observation}
\newtheorem{Definition}[Theorem]{Definition}
\title{On the hardness of finding balanced independent sets in random bipartite graphs}
\author{Will Perkins\thanks{School of Computer Science, Georgia Institute of Technology; supported in part by NSF grant CCF-2309708.}, Yuzhou Wang\thanks{School of Mathematics, Georgia Institute of Technology; supported in part by a Georgia Tech ARC Fellowship.}}
\date{\today}
\begin{document}

\maketitle

\begin{abstract}
 We consider the algorithmic problem of finding large \textit{balanced} independent sets in sparse random bipartite graphs, and more generally the problem of finding independent sets with specified proportions of vertices on each side of the bipartition.  In a bipartite graph it is trivial to find an independent set of density at least half (take one of the partition classes). In contrast, in a random bipartite graph of average degree $d$, the largest balanced independent sets (containing equal number of vertices from each class) are typically of density $(2+o_d(1)) \frac{\log d}{d}$.  Can we find such  large balanced independent sets in these graphs efficiently?  By utilizing the overlap gap property  and the low-degree algorithmic framework, we prove that local and low-degree algorithms (even those that know the bipartition) cannot find balanced independent sets of density greater than $(1+\eps) \frac{\log d}{d}$ for any $\eps>0$ fixed and $d$ large but constant.  This factor $2$ \textit{statistical--computational gap} between what exists and what local algorithms can achieve is analogous to the gap for finding large independent sets in (non-bipartite) random graphs.  Our results therefor suggest that this gap is pervasive in many models, and that hard computational problems can lurk inside otherwise tractable ones.   A particularly striking aspect of the gap in bipartite graphs is that the algorithm achieving the lower bound is extremely simple and can be implemented as a $1$-local algorithm and a degree-$1$ polynomial (a linear  function).

 More generally, we provide a tight characterization of the power of local and low-degree algorithms to find $\gamma$-balanced independent sets in random $d$-regular bipartite graphs (with $\gamma \le 1/2$ proportion of vertices on one side of the partition): for large $d$, local algorithms can find $\gamma$-balanced independent sets a factor  $(1-\gamma)$ smaller than those that exist whp in random bipartite graphs, and no larger.
\end{abstract}

\section{Introduction}

    It is well known that finding large independent sets in graphs is a  hard computational problem; not only is finding a maximum-size independent set in a graph NP-hard, it is also NP-hard to approximate the maximum size to within a factor $n^{1- \epsilon}$~\cite{hastad1996clique,khot2001improved}.  This motivates the question of what hard instances of max independent set look like; or alternatively, whether finding large independent sets in `typical' instances is tractable or not.

    This leads to one of the most striking and well studied problems in average-case complexity: the problem of finding large cliques or independent sets in random graphs. There is a notorious gap between what is known existentially and what can be found by known efficient algorithms. In the \ER random graph $G(n,1/2)$ the largest independent set (or clique) is typically of size $\sim 2 \log_2 n$ while the best known polynomial-time algorithms (in fact very simple greedy algorithms) can only find independent sets of size $\sim \log_2 n$ whp (with probability $1-o(1)$ over the instance).  In 1976 Karp proposed the conjecture  that for all $\eps>0$ fixed, no polynomial-time algorithm can find an independent set of size $(1+\eps) \log_2 n$ whp~\cite{karp1976probabilistic}; and since then, this factor 2 gap between the existential and algorithmic results has resisted all attacks; in modern language, the problem exhibits a \textit{statistical--computational gap}.  Attempts to understand this problem include the introduction of the planted clique problem by Jerrum~\cite{jerrum1992large}.

The factor 2 statistical--computational gap for independent sets extends to  sparse \ER random graphs, $G(n,d/n)$,  for $d$ fixed but large.  Frieze~\cite{frieze1990independence} showed that whp the density of the largest independent set in these graphs is $(2+o_d(1))\log d/d$ as $d \to \infty$. Yet no known polynomial-time algorithms can find an independent set of density  $(1+\eps) \log d/d$ whp for any $\eps>0$ fixed independent of $d$, and it is conjectured that there is no polynomial-time algorithm that achieves this.  

Proving this conjecture unconditionally would amount to proving a stronger statement than $\text{P} \ne \text{NP}$, an ambitious goal to say the least.  Instead, researchers have focused on providing rigorous evidence of computational intractability in various forms.  In the broader studying of average-case complexity and statistical--computational gaps, such evidence includes reductions between problems (often starting with planted clique)~\cite{berthet2013complexity,brennan2018reducibility,brennan2019optimal}; slow-mixing results for Markov chains~\cite{jerrum1992large,chen2023almost}; integrality gaps for convex relaxations~\cite{ma2015sum,hopkins2017power,barak2019nearly,jones2022sum}; and impossibility results for restricted classes of algorithms, including statistical query algorithms~\cite{feldman2017statistical,feldman2018complexity}, low-degree algorithms~\cite{hopkins2017efficient,gamarnik2020low,schramm2022computational,wein2022optimal} and local or stable algorithms~\cite{gamarnik2017limits,gamarnik2017performance,rahman2017local}.  

Proving any rigorous hardness results for finding large independent sets in sparse random graphs is challenging for two reasons.  First,  methods that lose small polynomial or even logarithmic factors  are ineffective in understanding   a constant factor  statistical--computational gap.  Second, many hardness techniques work in the setting of `planted problems' in which a good solution is planted and obscured by random noise; this includes planted clique, sparse PCA, and community detection.  However, the framework of the `Overlap Gap Property' (OGP), as pioneered by Gamarnik and Sudan~\cite{gamarnik2017limits} is effective in this setting and has led to tight lower bounds for both local algorithms~\cite{rahman2017local} and low-degree polynomials~\cite{wein2022optimal}.

    Our main question is on the universality of this statistical--computational gap for finding large independent sets.   In what other random graph models (and for what kind of independent sets) do we see such a gap?

    We start by considering a very `easy' independent set problem: finding large independent sets in bipartite graphs.  Finding a maximum size independent set in a bipartite graph has a polynomial-time algorithm based on solving a max-flow problem, and in a graph with some additional structure the problem can be trivial (take one of the parts of the bipartition).    The natural model of a sparse random  bipartite graph is a bipartite \ER random graph:  $\Gdb$ is the random bipartite graph on vertex sets $(L,R)$, each of size $n$, with  every possible edge  in $L \times R$ included independently with probability $p = d/n$.   Locally, $\Gdb$ looks like the  sparse \ER random graph $G(n,d/n)$: the local neighborhood structure of a vertex converges to a Poisson$(d)$ Galton-Watson tree, but globally these distributions are very different:  the largest independent set in $\Gdb$ typically   has density  $1/2$ as opposed to $(2+o_d(1)) \log d/d$ for $G(n,d/n)$.  Note that we define $\Gdb$ on a set of $2n$ vertices while $G(n,d/n)$ is defined on a set of $n$ vertices; we will refer to the sizes of independent sets in terms of their density (fraction of vertices occupied) to prevent any confusion in comparing the models.  
    
What can we say from a probabilistic or algorithmic perspective about the independent sets of $\Gdb$? Finding very large independent sets in $\Gdb$ efficiently is trivial, and in fact  it is even possible to approximately count independent sets in such graphs efficiently and sample    them uniformly (or with sufficiently large exponential weights)  ~\cite{jenssen2020algorithms,liao2019counting,chen2022sampling}.  In a sense the situation is as nice as possible, with little hint of computational hardness. However, there is a powerful idea from statistical physics that suggests that imposing a global constraint in this model can induce `glassy' equilibrium behavior~\cite{mezard1987mean} and perhaps introduce computational intractability.  

A natural global constraint in the context of independent sets in bipartite graphs is requiring them to be \textit{balanced}: with an equal number of vertices from each side of the bipartition.  Balanced independent sets arise in several different contexts in graph theory and combinatorics~\cite{axenovich2021bipartite,barber2012note,ramras2010balanced,park2022note,chakraborti2021extremal} and in statistical physics and optimization they play a similar role in the context of independent sets that bisections play in the context of cuts or the Ising model~\cite{zdeborova2010conjecture,alaoui2021local}. 
 In fact, just as finding a min bisection is NP-hard while finding a min cut is in P, finding a maximum balanced independent set in a bipartite graph is NP-hard~\cite{garey1979computers,feige2002relations,feige2004hardness} while finding a maximum independent set in a bipartite graph is in P.     

In Theorem~\ref{Thm:balexist} below, we show that the largest balanced independent set in $\Gdb$ is of density $(2+o_d(1)) \frac{\log d}{d}$ whp.  Moreover, there are simple, efficient algorithms that can find balanced independent sets of density $(1+o_d(1)) \frac{\log d}{d}$ in $\Gdb$~\cite{lauer2007large,rahman2017local} and even in the much more general setting of bipartite graphs of average degree at most $d$~\cite{chakraborti2021extremal}.  Can one do any better in $\Gdb$, or is there a statistical--computational gap?

We conjecture that this problem is indeed computationally intractable.  
    \begin{Conjecture}
    \label{conjHardness}
        For every $\eps>0$ there is $d$ large enough so that there is no polynomial-time algorithm that whp finds  a balanced independent set of density $ (1+ \eps) \frac{\log d}{d}$ in $\Gdb$.  
    \end{Conjecture}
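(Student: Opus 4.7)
The plan is to pursue a \emph{conditional} hardness result, since an unconditional proof of Conjecture~\ref{conjHardness} would in particular establish $\text{P} \ne \text{NP}$ (and more, since it is an average-case statement). The most natural hypothesis to reduce from is the analogous folklore conjecture — the sparse version of Karp's — that no polynomial-time algorithm finds an independent set of density $(1+\eps)\frac{\log d}{d}$ whp in the non-bipartite sparse \ER graph $G(n,d/n)$. The goal would be an average-case reduction producing an output distribution within $o(1)$ total variation of $\Gdb$, together with a decoder that turns any claimed $(1+\eps)\frac{\log d}{d}$-density balanced independent set in the bipartite output into a $(1+\eps/2)\frac{\log d}{d}$-density independent set in the input $G(n,d/n)$, so that a whp-correct algorithm for the bipartite problem yields a whp-correct algorithm for the non-bipartite one.

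A concrete first attempt is a bipartite double-cover style construction. Given $G \sim G(n,d'/n)$ with $d'$ slightly smaller than $d$, form $L$ and $R$ as disjoint copies of $V(G)$, place a bipartite edge between the copies $u_L \in L$ and $v_R \in R$ whenever $\{u,v\}$ is an edge of $G$, and then sprinkle additional independent Bernoulli$(q)$ edges across $L \times R$ with $q$ calibrated so that the marginal edge probability in the final bipartite graph is exactly $d/n$. Any balanced independent set $S_L \cup S_R$ of density $(1+\eps)\frac{\log d}{d}$ in the output has both $S_L$ and $S_R$ independent in the corresponding copy of $G$, so by pigeonhole one of them gives a density-$(1+\eps)\frac{\log d}{d}$ independent set in $G$; thus the decoder is immediate once the distributional matching is in place. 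The difficulty is that the true $\Gdb$ has edges independent of the underlying $G$-edges, while the naive construction correlates them, so one must either argue a coupling that shows this correlation is washed out by the sprinkling, or instead randomize which $G$-edges are embedded in order to recover product-distribution structure.

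If a tight reduction cannot be engineered, a secondary objective is to strengthen the restricted-algorithm lower bounds from this paper into evidence closer to the full conjecture. Concretely, one would establish a branching or $m$-overlap-gap property for balanced independent sets in $\Gdb$ with $m$ slowly growing in $n$, which via now-standard noise-stability bridges would rule out stable algorithms in the sense of Rahman--Vir\'ag, bounded-iteration approximate message passing, and plausibly certain constant-degree sum-of-squares and spectral relaxations, thereby accumulating rigorous evidence for Conjecture~\ref{conjHardness} across the main known algorithmic paradigms.

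The main obstacle is precisely the one that has blocked all progress on the original sparse \ER independent set conjecture: no known average-case reduction simultaneously preserves an exact product edge distribution and a constant-factor solution gap, and no restricted-algorithm framework captures \emph{every} polynomial-time algorithm. Any complete resolution of Conjecture~\ref{conjHardness} will therefore either (i) produce a genuinely new average-case-to-average-case reduction for sparse graph problems at constant-factor resolution, or (ii) extend the overlap-gap and low-degree machinery to a strictly larger algorithmic class than is currently accessible — each of which would be a significant advance beyond the state of the art, which is why we state this as a conjecture rather than a theorem.
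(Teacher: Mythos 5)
You should first be clear about the status of the statement: Conjecture~\ref{conjHardness} is deliberately left open in the paper. The paper's contribution is \emph{evidence} for it -- OGP-based impossibility results for local and low-degree algorithms (Theorems~\ref{Thm:balhalfopt} and~\ref{Thm:imposs_achieve_lowdeg}) -- so there is no proof of the conjecture to compare against, and your proposal, which is a research program rather than an argument, does not prove it either. That is acceptable in spirit, but the one concrete step you commit to contains two genuine errors.

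First, the decoder is wrong. In your bipartite double-cover there are no edges inside $L$ or inside $R$, so a balanced independent set $S_L \cup S_R$ of the output only forces that no $G$-edge runs \emph{between} $S_L$ and $S_R$ viewed as subsets of $V(G)$; neither $S_L$ nor $S_R$ individually need be independent in $G$ (two $G$-adjacent vertices can both sit in $S_L$ with nothing forbidding it), so the pigeonhole step collapses. The only set forced to be independent in $G$ is the intersection $S_L \cap S_R$ (as vertex subsets of $V(G)$), which can be empty; indeed cross-independent pairs of density well above $\log d/d$ on each side exist in $G(n,d/n)$ without yielding any comparably large independent set, which is exactly the distinction the paper's Lemma~\ref{Lemma:balancex+y>xy} and Figure~\ref{figPhase} are about. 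Second, the distributional matching fails by far more than you allow: the edges $u_Lv_R$ and $v_Lu_R$ are perfectly correlated through the shared $G$-edge, so the number of symmetric pairs present is $\Theta(n)$ in your construction versus $O(1)$ in expectation under $\Gdb$; the total variation distance is $1-o(1)$, the difference is detectable in polynomial time, and sprinkling additional independent edges only adds edges and cannot destroy this correlation -- while any randomization of which $G$-edges are embedded that does destroy it would also destroy the structure your decoder needs. The secondary objective (an ensemble or branching OGP ruling out stable, AMP-type, and low-degree algorithms) is sensible but is a strengthening of the evidence the paper already provides, not a route to the conjecture itself.
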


The main result of this paper is  rigorous evidence towards Conjecture~\ref{conjHardness}: that both \textit{local algorithms} and \textit{low-degree algorithms} cannot find such large balanced independent sets, but can find balanced independent sets of density $(1+o_d(1)) \log d/d$; that is, such algorithms are only half optimal.  In the following sections we formally define local and low-degree algorithms for finding independent sets in bipartite graphs; in particular, in defining local algorithms we allow each vertex knowledge of which partition it lies in.   But first we state our main results informally.

\begin{Theorem}[informal]
For every $\eps>0$ there is $d$ large enough so that there are no local or low-degree algorithms that whp find  a balanced independent set of density $ (1+ \eps) \frac{\log d}{d}$ in $\Gdb$.  
\end{Theorem}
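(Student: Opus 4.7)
The strategy is to follow the Overlap Gap Property (OGP) framework of Gamarnik--Sudan, Rahman--Vir\'ag, and Wein, adapted to balanced independent sets in $\Gdb$. The plan has three parts: establish an ensemble OGP at density $(1+\eps)\frac{\log d}{d}$ via a first-moment calculation on coupled random bipartite graphs, then use this OGP to derive hardness for local algorithms via a combinatorial interpolation argument, and finally for low-degree polynomials via a noise-stability argument on the same coupling.

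\textbf{Step 1: Ensemble OGP via first moment.} I would introduce a natural coupled family $(G^{(t)})_{t \in [0,1]}$ of random bipartite graphs distributed as $\Gdb$, obtained by resampling each edge independently at a small rate controlled by $t$, so that $G^{(0)}$ and $G^{(1)}$ are essentially independent. For a target density $\alpha = (1+\eps)\frac{\log d}{d}$, I would compute the expected number of pairs $(I^{(0)}, I^{(1)})$ of balanced independent sets of density $\alpha$ in $G^{(s)}$ and $G^{(t)}$ with a prescribed overlap profile. Since a balanced set has $\alpha n$ vertices on each side, the overlap naturally splits as $(\beta_L, \beta_R)$, one per side. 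An entropy-based calculation analogous to Frieze's and Rahman--Vir\'ag's should show that this expectation is exponentially small in $n$ whenever $(\beta_L + \beta_R)/2$ lies in some forbidden interval $(\nu_1(\alpha), \nu_2(\alpha))$, uniformly in $s,t$. Markov's inequality together with a union bound over a fine discretization of $[0,1]$ then yields: whp over the coupled family, no pair of balanced independent sets of density $\alpha$ achieves intermediate overlap.

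\textbf{Step 2: Ruling out local and low-degree algorithms.} For a local algorithm that knows the bipartition, the output is determined at each vertex by its constant-radius neighborhood and i.i.d.\ vertex labels, so resampling a single edge changes only $O(1)$ output bits. Consequently the overlap between the outputs on $G^{(0)}$ and $G^{(t)}$ is a $1$-Lipschitz function of $t$ along the discretization, valued in $[0,1]$. If the algorithm produced a balanced independent set of density $\alpha$ whp, then at $t=0$ the two outputs coincide (overlap $\approx \alpha$), while at $t=1$ near-independence forces overlap $\approx \alpha^2$; a discrete intermediate-value argument produces a $t^\ast$ at which the overlap lies in the forbidden interval, contradicting Step 1. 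The same coupling fuels the low-degree argument: following Wein, the noise-stability of low-degree polynomials under edge resampling implies that their (rounded) outputs also interpolate smoothly in overlap between the two endpoints, and the OGP of Step 1 again rules out density $\alpha$. The balance constraint can be enforced either by conditioning or by a post-processing that does not affect stability.

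\textbf{Main obstacle.} The technical crux is the first-moment calculation with the sharp constant $(1+\eps)$. Unlike the non-bipartite case where a single overlap parameter suffices, here one has to optimize an entropy functional in two variables $(\beta_L, \beta_R)$ subject to the balance constraint $|I \cap L| = |I \cap R|$, and show that the forbidden-overlap regime is non-empty for every $\alpha > (1+\eps)\frac{\log d}{d}$. The worry is that some asymmetric choice $\beta_L \ne \beta_R$ could evade the forbidden region while still allowing the first moment to be positive; checking that this does not happen requires a careful convexity or Lagrange-multiplier analysis of the entropy functional on the bipartite side. A secondary difficulty is that for low-degree algorithms the output need not be exactly balanced on every instance, only whp; the argument must therefore be robust to $o(n)$ imbalance, which should be absorbable into the $\eps$ slack but requires verification.
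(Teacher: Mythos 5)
Your high-level framework (couple a family of bipartite graphs by edge-resampling, use an OGP with a two-sided overlap $(\beta_L,\beta_R)$, and exploit the stability of local and low-degree outputs under single-edge resampling) is the right one, but the version of OGP you invoke in Step 1 is not strong enough to reach the sharp constant. The pair (or ``2-set'') OGP you describe --- compute the first moment of \emph{pairs} of independent sets with intermediate overlap, then run a Lipschitz-plus-intermediate-value argument along $t \mapsto$ overlap --- is the Gamarnik--Sudan argument, and in $G(n,d/n)$ it is known to rule out densities only above $\sqrt{2}\,\log d/d$, not $(1+\eps)\log d/d$. The reason is that for $\alpha$ between $\log d/d$ and $\sqrt{2}\log d/d$, the first moment of pairs at every overlap value remains exponentially large: there is simply no forbidden overlap interval to hit. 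The same defect transfers to $\Gdb$ (the calculations there mirror the $G(n,d/n)$ ones, as the paper notes). So your ``main obstacle'' --- the two-variable convexity analysis of the entropy functional subject to balance --- is a real but secondary concern; the primary gap is that your Step 1 would stall at $\sqrt{2}$.

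What the paper does instead is the ensemble multi-OGP \`a la Rahman--Vir\'ag and Wein. The forbidden structure is not a pair with intermediate overlap but a sequence $S_1,\ldots,S_K$ of $K = \lceil 9/\eps^2\rceil+1$ sets, each an independent set in some graph along the interpolation path, each with at least $(1+\eps)\frac{\log d}{d}n$ vertices on \emph{both} sides of the bipartition, and with the incremental novelties $\bigl|S_k \setminus \bigcup_{i<k} S_i\bigr|$ confined to a narrow band $[\frac{\eps}{4},\frac{\eps}{2}]\frac{\log d}{d}n$. The first-moment bound over all such $K$-tuples is exponentially small precisely because the entropy gained by each new increment is dominated by the independence cost paid across both sides, and $K$ is chosen large enough (in terms of $1/\eps$) for the accumulated deficit to overcome the leading $\phi\log d$ term. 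A stable algorithm that found dense-on-both-sides outputs would then generate such a chain greedily along the interpolation path, contradicting the first moment. Two further points where the paper's approach differs usefully from yours: (i) rather than enforcing exact balance (by conditioning or post-processing, which would threaten stability), it asks only that density exceed $(1+\eps)\log d/d$ on each side separately --- exactly what an algorithm finding a balanced IS of that density must achieve, and robust to $o(n)$ imbalance for free; (ii) Lemma~\ref{Lemma:balancex+y>xy}, that $\alpha_L+\alpha_R \ge \alpha_L\alpha_R$ whp for all independent sets, is what controls the asymmetric overlap worry you raise, by capping $\alpha_L,\alpha_R$ at $1+1/\eps$ before the supremum over overlap profiles is taken.
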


More generally, we could ask for algorithms to find $\gamma$-balanced independent sets: independent sets with $\gamma$ proportion of their vertices in $L$ and $(1-\gamma)$ proportion in $R$.  

\begin{Definition}
    Given a bipartite graph $G$ with a specified bipartition $(L,R)$, a $\gamma$-balanced independent set $I$ is an independent set of $G$ so that  $ \big | | I \cap L| - \gamma | I|  \big| < 1$.
\end{Definition}
In particular a $\frac{1}{2}$-balanced independent set is a (nearly) balanced independent set.   We will assume throughout WLOG that $\gamma \in [0,1/2]$.  Our next result is a tight characterization of the performance of local and low-degree algorithms for finding $\gamma$-balanced independent sets: there is a factor $1/(1-\gamma)$ statistical--computational gap.

\begin{Theorem}[informal]
   Fix $\gamma \in (0,1/2]$.  The following hold.
   \begin{itemize}
       \item Whp the largest $\gamma$-balanced independent set in $\Gdb$ has density $\left( \frac{1}{2 \gamma (1- \gamma) } +o_d(1) \right) \frac{\log d}{d}$.

\item There are local and low-degree algorithms that whp find $\gamma$-balanced independent sets of density $\left( \frac{1}{2 \gamma } +o_d(1) \right) \frac{\log d}{d}$.

   \item For every $\eps>0$ there is $d$ large enough so that there are no local or low-degree algorithms that whp find  a $\gamma$-balanced independent set of density $\left( \frac{1}{2 \gamma } +\eps \right) \frac{\log d}{d}$ in $\Gdb$.
   \end{itemize}
\end{Theorem}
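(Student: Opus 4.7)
Let $N_\alpha$ count the $\gamma$-balanced independent sets in $\Gdb$ with density exactly $\alpha$, i.e.\ with $|I\cap L| = 2\gamma\alpha n$ and $|I\cap R| = 2(1-\gamma)\alpha n$. A first-moment computation with Stirling gives
\begin{align*}
\tfrac{1}{n}\log\E[N_\alpha] = H(2\gamma\alpha) + H(2(1-\gamma)\alpha) - 4\gamma(1-\gamma)d\alpha^2 + o(1),
\end{align*}
where $H$ is the binary entropy. With $\alpha = C\log d/d$ and the small-$\alpha$ expansion $H(x)=x\log(1/x)+O(x)$, the leading behaviour is $2(C-2\gamma(1-\gamma)C^2)(\log d)^2/d$, which changes sign at $C=1/(2\gamma(1-\gamma))$, yielding the existential upper bound by Markov's inequality. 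For the matching lower bound I would run a bilateral second-moment argument at a slightly smaller $C$, decomposing $\E[N_\alpha^2]$ by the joint overlaps $(|I_1\cap I_2\cap L|,|I_1\cap I_2\cap R|)$ and verifying that the independent product profile dominates, so that $\E[N_\alpha^2]=(1+o(1))\E[N_\alpha]^2$ and Paley-Zygmund gives $N_\alpha>0$ whp.

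\textbf{Algorithmic upper bound.} Give each $v\in L$ an independent label $x_v\sim\mathrm{Unif}[0,1]$, fix a threshold $a\in(0,1)$, and output
\begin{align*}
I = \{v\in L : x_v<a\}\, \cup\, \{v\in R : x_u\ge a \text{ for every } u\in L \text{ with } u\sim v\}.
\end{align*}
Then $I$ is automatically an independent set. The rule has radius $1$; moreover the $R$-inclusion indicator is the rounded value of $1-\sum_{u\in L} A_{uv}\mathbf{1}[x_u<a]$, a linear function of the edge indicators $A_{uv}$, so $I$ is also realised by a degree-$1$ polynomial with auxiliary randomness. Linearity of expectation and Poisson approximation give $\E|I\cap L|=an$ and $\E|I\cap R|=(1+o(1))e^{-ad}n$, both concentrated by bounded differences. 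Enforcing the balance condition $a/(a+e^{-ad})=\gamma$ amounts to $(ad)e^{ad}=\gamma d/(1-\gamma)$, so $ad=(1+o_d(1))\log d$ and the density is $(a+e^{-ad})/2 = e^{-ad}/(2(1-\gamma)) = (1/(2\gamma)+o_d(1))\log d/d$. A prune of $o(|I|)$ vertices enforces the strict balance of the definition.

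\textbf{Hardness via OGP.} I would follow the two-step overlap-gap strategy. Step one is a \emph{bilateral OGP}: for $\alpha^\star:=(1/(2\gamma)+\eps)\log d/d$ there exist $0<\nu_1<\nu_2<\alpha^\star$ such that whp every pair $I_1,I_2$ of $\gamma$-balanced independent sets in $\Gdb$ with densities $\ge\alpha^\star$ has normalised overlap $|I_1\cap I_2|/(2n)\notin(\nu_1,\nu_2)$. This is proved by upper bounding the first moment of the number of pairs with prescribed bilateral overlap $(\beta_L,\beta_R)$: the exponent is a two-parameter entropy-minus-edge-exclusion functional, and one checks it is strictly negative on an annular region of $(\beta_L,\beta_R)$-space. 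The threshold shifts from the existential $1/(2\gamma(1-\gamma))$ to $1/(2\gamma)$ because the two-replica edge exclusion now enforces independent-set constraints between the symmetric differences of $I_1$ and $I_2$ on each side of the bipartition, reproducing the Frieze / Gamarnik-Sudan shift familiar from $G(n,d/n)$. Step two is stability: for $r$-local algorithms I would apply the Rahman-Virag interpolation (one-edge resampling changes an $r$-local output on only $o(1)$ fraction of vertices, so the overlap of outputs across an interpolation between two independent copies of $\Gdb$ sweeps continuously from $\approx\alpha^\star$ down to $\approx 0$ and must enter the forbidden window); for degree-$D$ polynomials I would invoke the Schramm-Wein framework (degree-$D$ $L^2$-stability on $\rho$-correlated graphs, combined with an ensemble/branching version of OGP obtained by applying the same bilateral first-moment estimate to $m$-tuples, rules out polynomials of degree up to $n^c$ for some $c>0$).

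\textbf{Main obstacle.} The substantive new work is the bilateral OGP. Unlike the single-variable OGP for $G(n,d/n)$, one must optimise a two-parameter entropy-vs-edge-exclusion functional over $(\beta_L,\beta_R)\in[0,2\gamma\alpha^\star]\times[0,2(1-\gamma)\alpha^\star]$ and identify the precise constant $1/(2\gamma)$ at which the forbidden annulus first appears; this is what gives the sharp factor $1/(1-\gamma)$ computational gap, and is the heart of the proof.
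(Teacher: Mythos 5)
Your existence upper bound and your algorithmic lower bound match the paper's approach essentially exactly (the paper picks $p=(1-\eps)\log d/d$ and prunes; you tune the threshold $a$ so the output is already $\gamma$-balanced, which amounts to the same density in the end). The other two steps, however, have genuine gaps.

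\textbf{Existence lower bound.} You claim that after decomposing $\E[N_\alpha^2]$ by bilateral overlaps the ``independent product profile dominates, so that $\E[N_\alpha^2]=(1+o(1))\E[N_\alpha]^2$,'' and then apply Paley--Zygmund. This is not true at densities close to $1/(2\gamma(1-\gamma))$: even in the single-sided model $G(n,d/n)$, Frieze's second-moment calculation gives $\E[Z_k^2]/\E[Z_k]^2 = \exp\bigl(\Theta\bigl(n(\log d)^{5/2}/d^{3/2}\bigr)\bigr)$, exponentially large in $n$ for fixed $d$, so Paley--Zygmund only yields an exponentially small positive-probability bound, not a whp bound. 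The paper (Section 4) adapts Frieze's actual workaround: restrict to $\gamma$-balanced $P$-independent sets (at most one vertex per block of size $\approx d/(\log d)^2$ on each side), obtain the weak second-moment bound (Lemma 4.3), prove $O(1/\gamma)$-Lipschitz concentration of the maximum over $P$-independent sets via Azuma (Lemma 4.2), and combine the two to bootstrap the exponentially small probability into a whp statement. Without the auxiliary $P$-structure and a concentration step, the second moment alone does not close the argument.

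\textbf{Hardness.} You propose a \emph{pairwise} overlap gap property at $\alpha^\star=(1/(2\gamma)+\eps)\log d/d$ as Step 1, invoking multi-replica/ensemble OGP only afterwards for the low-degree case. Pairwise OGP does not hold at this threshold; in the $G(n,d/n)$ analogue it kicks in only at $(\sqrt 2+\eps)\log d/d$ (Gamarnik--Sudan), and the sharp $(1+\eps)\log d/d$ threshold requires the $m$-tuple/ensemble OGP of Rahman--Vir\'ag and Wein with $m$ growing as $\eps\to 0$. The paper uses a single $K$-tuple forbidden structure with $K\geq 9/\eps^2+1$ (Proposition 3.2) across an interpolation path, and applies the \emph{same} forbidden structure to both local and low-degree algorithms via the $(D,\Gamma,c)$-stability framework (Lemmas 3.7 and 3.8, Proposition 3.9). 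Also, the reason the hardness threshold lands at $1/(2\gamma)$ is not a ``two-replica edge exclusion between symmetric differences'' as you suggest; it is that the multi-OGP (which needs no cardinality constraint at all) rules out finding independent sets of density $\geq(1+\eps)\log d/d$ simultaneously in both $L$ and $R$, and a $\gamma$-balanced set of density $\geq(1/(2\gamma)+\eps)\log d/d$ forces both sides to exceed $(1+\eps')\log d/d$. The auxiliary fact that all independent sets satisfy $\alpha_L+\alpha_R\geq\alpha_L\alpha_R$ (Lemma 3.1) is what makes the $K$-tuple first-moment bound in Proposition 3.2 tractable, rather than being the source of the threshold.
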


We can visualize all the main results in the  diagram below.   The diagram indicates if independents sets with density $x \cdot \log d/d$ in $L$ and $y \cdot \log d/d$ in $R$ typically exist in $\Gdb$ (for large $d$) and if so, whether finding them is tractable or intractable for local and low-degree algorithms.  As indicated, if either $x$ or $y$ is strictly below $1$, then the other can be arbitrarily large and such independent sets can be found with local algorithms once $d$ is large enough.  On the other hand, as soon as both $x$ and $y$ are strictly bigger than $1$ then finding such independent sets is hard for local and low-degree algorithms.  The diagonal of the diagram indicates the factor $2$ gap for balanced independent sets.
\begin{center}
\begin{tikzpicture}
\label{figPhase}

   \begin{axis}[
   	xmin=0, xmax=6,
   	ymin=0, ymax=6,
   	xtick distance=1, ytick distance=1,
        axis lines=left,
        xlabel = $\frac{|I \cap L|}{n} \cdot \frac{d}{\log d}$, 
        ylabel = $\frac{|I \cap R|}{n} \cdot \frac{d}{\log d}$,
     ]

    \addplot [domain=1:6, samples=1000, name path=f, ultra thick, dashed, color=red!50]
        {x/(x-1)};

    \addplot [domain=1:6, samples=100, name path=g, ultra thick, dashed, color=blue!50]
        {1};

    \addplot[domain = 1:6, mark=none, name path=h, ultra thick, dashed, color=blue!50] (1,x);

    \path [name path=xaxis]
      (\pgfkeysvalueof{/pgfplots/xmin},0) --
      (\pgfkeysvalueof{/pgfplots/xmax},0);

    \plot[name path=y_is_6,thick,opacity=0,samples=100,domain=0:6] {6};

    \plot[name path=y_is_1,thick,opacity=0,samples=100,domain=1:6] {1};
      
    \addplot[red!60, opacity=0.4] fill between[of=f and g, soft clip={domain=1:6}];

    \addplot[red!100, opacity=0.4] fill between[of=y_is_6 and f, soft clip={domain=1:6}];
    
    \addplot[blue!30, opacity=0.4] fill between [of= y_is_6 and xaxis, soft clip={domain=0:1}];

    \addplot[blue!30, opacity=0.4] fill between [of= y_is_1 and xaxis, soft clip={domain=1:6}];
    
    \node[color=red, font=\footnotesize] at (axis cs: 2.3,2.4) {$x+y = xy$};

    \node[color=black, font=\footnotesize] at (axis cs: 3.2,3.3) {Do not exist};

    \node[color=black, font=\footnotesize] at (axis cs: 1.5,1.9) {Exist but};
    \node[color=black, font=\footnotesize] at (axis cs: 1.5,1.6) {hard for};
    \node[color=black, font=\footnotesize] at (axis cs: 2.15,1.3) {local and low-degree};

    \node[color=black, font=\footnotesize] at (axis cs: 2.3,0.6) {Exist and easy};

   \end{axis}

  \end{tikzpicture}
\end{center}

In fact, there is  one algorithmic result and one hardness result that lead to the diagram above:
\begin{itemize}
    \item The algorithm is from~\cite{chakraborti2021extremal} and is perhaps the most naive algorithm one could devise to find balanced independent sets in a bipartite graph: include each vertex in $L$ independently with probability $p$ and then include each vertex in $R$ that is not blocked by an occupied vertex in $L$.  We show below that this algorithm can be implemented as a $1$-local algorithm and as a degree-$1$ polynomial, and by setting $p = (1-\eps)\log d/d$, the algorithm achieves densities $(1-\eps)\log d/d$ in $L$ and $d^{\eps-1}$ in $R$; this accounts for the entire blue `easy' region in Figure~\ref{figPhase}. 
    \item Using the Overlap Gap Property we show that no local or low-degree algorithms can find an independent set of density at least $(1+\eps)\log d/d$ in both $L$ and $R$ in $\Gdb$ (for any $\eps>0$ and $d$ large enough).  This accounts for the entire `hard' region in Figure~\ref{figPhase}.
\end{itemize}

Perhaps the most striking aspect of the results is that a completely trivial algorithm achieves the best possible performance (to first order) among all local and low-degree algorithms (and this algorithm can be implemented as a $1$-local algorithm and degree-$1$ polynomial -- a linear function).   While the analogous result for independent sets in $G(n,d/n)$ is also striking for the simplicity of an optimal algorithm (a random greedy algorithm~\cite{wormald1995differential,lauer2007large}), this algorithm still needs to explore arbitrarily large neighborhoods  to approach the density $\log d/d$ (and when implemented as a low-degree polynomial needs arbitrarily high but constant degree).      Because of this one could view the balanced independent set problem as a great test-case to try to `break' the local or low-degree hardness framework by  finding an efficient (non-local, not low-degree) algorithm that surpasses this barrier.  We discuss this more in Section~\ref{secFuture}.

\subsection{Local and low-degree algorithms}

Here we formally define both local and low-degree algorithms for finding independent sets in bipartite graphs before stating our main results precisely in Section~\ref{Subsec:Mainresults}.

\subsubsection{Local algorithms}
Roughly speaking, a local algorithm to identify a subset $S$ of vertices in a graph $G$ (say, an independent set) works as follows.  Each vertex of the graph is assigned an iid $\text{Uniform}[0,1]$ label; the decision of whether a  given vertex $v$ is in $S$ is a function of only the graph structure and random labels in the depth-$s$ neighborhood of $v$; such an algorithm is $s$-local.  The framework of local algorithms models parallel and distributed computation~\cite{luby1985simple,linial1992locality,parnas2007approximating,nguyen2008constant,hoppen2018local,bordenave2022detection}  and is closely related to the study of (finitary) factor of iid processes in probability theory~\cite{elek2010borel,hatami2014limits,lyons2017factors,brandt2022local}.

More formally, an $s$-\textit{local algorithm} for finding independent sets in a graph is defined by a measurable $s$-\textit{local function}  $g = g(H, \mathbf x)$ that takes as an input a rooted graph $H$ with root $r$ and depth at most $s$ and a vector $\mathbf x \in [0,1]^{V(H)}$ of labels, and outputs a bit $0$ (`out') or $1$ (`in'). 
We  apply such a local function to a graph $G$ by assigning the labels independently to each vertex from a Uniform$[0,1]$ distribution and then applying the local function to the depth-$s$ neighborhood of each vertex (with the appropriate restriction of the labels). 
To ensure that a local algorithm returns an independent set, we require that for any graph $G$, and any $\mathbf x \in [0,1]^{V(G)}$, if we apply the local function $g(\cdot,\cdot)$ in this way, the set of vertices on which the output is $1$ must form an independent set.  A local algorithm is an $s$-local algorithm for some constant $s$.

 A simple but instructive example of a $1$-local algorithm for independent sets is the function that returns $1$ if the label of the root is strictly smaller than the label of each of its neighbors.   If $G$ has maximum degree $d$, then this simple $1$-local algorithm will return an independent set of density at least $\frac{1}{d+1}$.    

The measure of performance of a local algorithm for independent sets in a random graph is the typical size or density of an independent set returned by the algorithm, with high probability over both the random graph and the random labels $\mathbf x$.    A striking fact is that local algorithms can find independent sets of density $(1+ o_d(1))\log d/d$ in $G(n,d/n)$ but cannot find independent sets of density $(1+\epsilon) \log d/d$ for any $\epsilon>0$ fixed and $d$ large enough~\cite{gamarnik2017limits,rahman2017local}, thus exhibiting  the factor $2$ statistical--computational gap conjectured to hold for \textit{all} efficient algorithms.

Here we consider local algorithms for bipartite graphs.  One could apply an $s$-local function $g$ as above to any bipartite graph, and to $\Gdb$ in particular.  However, such an algorithm would be blind to the bipartite structure; the random bipartite graph $\Gdb$ looks locally just like the random graph $G(n,d/n)$ (in the sense of Benjamini-Schramm~\cite{benjamini2001recurrence}), and thus such local algorithms cannot recover the trivial density $1/2$ independent set.

Instead,  we give our local algorithms knowledge of the global bipartition, and allow for two different local functions, $g_{\ell}, g_{r}$ to be applied to vertices in the partition sets $L,R$ respectively.  The pair defines a bipartite $s$-local algorithm if for any bipartite graph $G$ with bipartition $(L,R)$, and any $\mathbf x \in [0,1]^{V(G)}$, applying $g_{\ell}$ to all vertices in $L$ and $g_{r}$ to all vertices in $R$ returns an independent set.   We call such a pair of $s$-local functions \textit{compatible}.

We observe that such bipartite local algorithms can find a maximum size independent set in $\Gdb$.
\begin{observation}
There is a bipartite $0$-local algorithm that finds an independent set of density $1/2$ in $\Gdb$; this is achieved by the pair $g_{\ell} \equiv 0 ,g_{r} \equiv 1$ or vice versa.
\end{observation}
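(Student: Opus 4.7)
The plan is essentially to verify two things: (i) that the specified pair of constant functions is a compatible bipartite $0$-local pair, and (ii) that the resulting output set has density $1/2$ in $\Gdb$. Both are immediate from the definitions, so the ``proof'' is really just an unpacking of terminology.

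First I would recall that a bipartite $s$-local algorithm is specified by a pair of $s$-local functions $(g_\ell, g_r)$ that must be \emph{compatible}, meaning that for every bipartite graph $G$ with bipartition $(L, R)$ and every label vector $\mathbf{x} \in [0,1]^{V(G)}$, the set $S(G,\mathbf{x}) := \{v \in L : g_\ell(B_0(v), \mathbf{x}|_{B_0(v)}) = 1\} \cup \{v \in R : g_r(B_0(v), \mathbf{x}|_{B_0(v)}) = 1\}$ is an independent set in $G$. Here a depth-$0$ rooted neighborhood $B_0(v)$ is just the single vertex $v$, so $0$-local functions are (essentially) functions of the single label $\mathbf{x}(v)$.

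Now take $g_\ell \equiv 0$ and $g_r \equiv 1$; these are trivially measurable $0$-local functions. For any bipartite graph $G$ with bipartition $(L, R)$ and any label vector, the output set is exactly $S(G, \mathbf{x}) = R$, since no vertex of $L$ is selected and every vertex of $R$ is. Because $G$ is bipartite with all edges in $L \times R$, the set $R$ contains no edges of $G$, so $R$ is an independent set. Hence the pair is compatible, and $(g_\ell, g_r)$ defines a legitimate bipartite $0$-local algorithm. Swapping the roles of $g_\ell$ and $g_r$ gives the analogous algorithm outputting $L$.

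Finally, in $\Gdb$ we have $|L| = |R| = n$, so the output independent set has size $n$ out of $2n$ total vertices, giving density exactly $1/2$ deterministically (with no dependence on the randomness of either the graph or the labels). No obstacle arises; the point of the observation is not the proof but rather the conceptual contrast with the non-bipartite setting, where no local algorithm can reach constant density. The only subtlety worth flagging in writing is to be explicit that ``density'' is measured against $|V(\Gdb)| = 2n$, consistent with the convention stated earlier in the paper.
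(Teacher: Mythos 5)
Your proof is correct and matches the paper's (implicit) reasoning: the paper states this as an unproved observation precisely because, as you note, compatibility and the density-$1/2$ conclusion follow immediately from the definitions once one observes that the output set is always $R$ (or $L$). Nothing is missing.
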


How can we find a \textit{balanced} independent set with a bipartite local algorithm?  Since the algorithms are local we cannot impose a strict global cardinality constraint.  Instead, given an independent set $I$ in a bipartite graph with partition $(L,R)$ we can always find a balanced independent set $I' \subseteq I$ by removing arbitrary vertices from $I$ on one side of the partition until it is balanced.  In this way, we can take the `balanced' density of $I$ to be $ \min \left \{  \frac{|I \cap L|}{n}, \frac{|I \cap R|}{n} \right \}$.  The balanced size of the trivial density $1/2$ independent sets is thus $0$.   
More generally, the $\gamma$-balanced density of an independent set $I$ is the density of the largest $\gamma$-balanced independent set $I' \subseteq I$.

Because sparse random graphs are locally treelike, the performance of local algorithms on the random graph $G(n,d/n)$ is determined, to first order, by the expectation of the corresponding local function $g(\cdot, \cdot)$ evaluated at the root of a $\text{Pois}(d)$ Galton--Watson tree (see~\cite{rahman2017local} for details).  
 
The same is true for bipartite local algorithms, and we make this precise in the following definition and lemma.

\begin{Definition}
\label{defGamBal}
    The $\gamma$-balanced value of a pair of compatible, $s$-local functions $g_\ell, g_r$ on the $\text{Pois}(d)$  Galton--Watson tree is
    \begin{equation}
        \alpha_{d,\gamma}(g_\ell, g_r):= \frac{1}{2} \min \left\{  \frac{1}{\gamma} \E [g_\ell(\bT,\mathbf{x})] , \frac{1}{1-\gamma}  \E [g_r(\bT,\mathbf{x})]   \right \}  \,,
    \end{equation}
    where the expectation is over a rooted $\text{Pois}(d)$ Galton--Watson tree $\bT$ and iid $\text{Unif}[0,1]$ labels $\mathbf x$ on its vertices.
\end{Definition}
The minimum in the formula is to take into account which side one should remove vertices from to make the independent set $\gamma$-balanced.
\begin{lemma}
\label{Lemma:GamValueRandom}
    Suppose $g_{\ell},g_{r}$ is a pair of compatible $s$-local functions for independent sets.  Let $X$ be the random independent set obtained by applying $g_{\ell}, g_r$ to $\Gdb$, and let $X_\gamma$ be the largest $\gamma$-balanced independent set so that $X_{\gamma}\subseteq X$.  Then whp over the randomness in the graph and in the random labels $\mathbf x$, 
    \[\frac{|X_{\gamma}|}{2n} =  \alpha_{d,\gamma}(g_{\ell},g_{r})  + o(1) \,.  \]
\end{lemma}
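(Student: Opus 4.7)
The plan is to write $|X \cap L| = \sum_{v \in L} \mathbf{1}\{g_\ell(N_s(v), \mathbf{x}|_{N_s(v)}) = 1\}$ and $|X \cap R| = \sum_{v \in R} \mathbf{1}\{g_r(N_s(v), \mathbf{x}|_{N_s(v)}) = 1\}$, compute the expectations of both sums via local convergence of $\Gdb$ to the Poisson Galton--Watson tree, and then establish concentration using an Azuma/Efron--Stein-type bounded-differences argument. Once $|X \cap L|/n$ and $|X \cap R|/n$ concentrate around $\E[g_\ell(\bT,\mathbf{x})]$ and $\E[g_r(\bT,\mathbf{x})]$ respectively, the conclusion follows from the simple combinatorial observation that the largest $\gamma$-balanced subset of $X$ has size $\min\{|X \cap L|/\gamma,\ |X \cap R|/(1-\gamma)\}$ (up to an additive constant), since a $\gamma$-balanced independent set $I \subseteq X$ satisfies $|I \cap L| = \gamma|I|$ and $|I \cap R| = (1-\gamma)|I|$.

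For the expectation step, I would fix a vertex $v \in L$ and observe that the depth-$s$ neighborhood $N_s(v)$ in $\Gdb$ converges in total variation to the depth-$s$ truncation of a rooted $\mathrm{Pois}(d)$ Galton--Watson tree: the number of $R$-neighbors of $v$ is $\mathrm{Bin}(n, d/n) \to \mathrm{Pois}(d)$, and iterating this for $s$ layers (using that at each level only $O(1)$ vertices have been exposed, so conditioning on the history perturbs the remaining degree distributions by $o(1)$) gives the coupling. Because $g_\ell$ depends only on the isomorphism class of the rooted neighborhood and on the uniform labels, which are iid on both sides, $\P(v \in X) = \E[g_\ell(\bT,\mathbf{x})] + o(1)$ uniformly in $v$, and the bipartite structure is automatically respected since the Galton--Watson tree has a natural bipartition by parity of depth.

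For concentration I would use a standard bounded-differences argument on the $\binom{n}{2}$ potential edges of $\Gdb$ together with the $2n$ uniform labels: changing any single coordinate affects the output at only those vertices whose depth-$s$ neighborhood contains the perturbed coordinate, and in expectation this is $O(d^{2s}) = O(1)$ vertices. A martingale inequality then gives $\var(|X \cap L|), \var(|X \cap R|) = O(n)$, so $|X \cap L|/n$ and $|X \cap R|/n$ are within $o(1)$ of their means whp. Combining with the expectation computation and the formula for $|X_\gamma|$ yields
\[
\frac{|X_\gamma|}{2n} = \frac{1}{2}\min\!\left\{ \frac{\E[g_\ell(\bT,\mathbf{x})]}{\gamma},\ \frac{\E[g_r(\bT,\mathbf{x})]}{1-\gamma} \right\} + o(1) = \alpha_{d,\gamma}(g_\ell, g_r) + o(1)\,.
\]

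The main technical obstacle is the local coupling of $\Gdb$ to the Galton--Watson tree: although this is essentially folklore for $G(n,d/n)$, here we need to track the bipartite structure through $s$ levels and handle the fact that depth-$s$ neighborhoods at different vertices are not independent (they can overlap or share edges). Nevertheless, since $s$ is a fixed constant and the expected overlap between two random neighborhoods is $O(1/n)$, a union bound over $v$ together with the second-moment-style variance bound absorbs this dependence, so the obstacle is routine rather than serious.
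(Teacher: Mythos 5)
Your proposal follows the same route as the paper: compute the two side-densities via local weak convergence to the Poisson Galton--Watson tree, prove $\var(|X\cap L|), \var(|X\cap R|) = O(n)$ by a bucketed Efron--Stein argument, and then reduce $|X_\gamma|$ to $\min\{|X\cap L|/\gamma,\ |X\cap R|/(1-\gamma)\}$ (the paper phrases this as a case analysis on which side of the minimum is active, but it is the same observation). One thing to be careful about in the concentration step: plain Azuma/bounded-differences will not go through, because changing a single edge or one vertex's incident-edge block can in the worst case alter $\Theta(n)$ outputs; as you partly recognize by citing expected changes of size $O(d^{2s})$, you need an Efron--Stein-type inequality that charges the \emph{expected} squared difference (the paper uses \cite[Theorem~15.5]{boucheron2013concentration}, grouping the edge variables by their $R$-endpoint into $n$ independent blocks). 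Also a tiny slip: a balanced bipartite graph on $n+n$ vertices has $n^2$ potential edges, not $\binom{n}{2}$.
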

Thus determining the performance, up to first order, of local algorithms for finding $\gamma$-balanced independent sets in $\Gdb$ is reduced to determining the largest possible $\alpha_{d,\gamma}$ over compatible pairs of $s$-local functions for $s$ arbitrarily large.

\subsubsection{Low-degree algorithms}
Low-degree algorithms are those than can be implemented as a vector of low-degree, multivariate polynomials of the input (in our case the entries of the adjacency matrix of the random graph).  While at first glance this might not seem to be the most natural class of algorithms, it turns out that they capture the best known performance of efficient algorithms for many problems with statistical--computational gaps, and many important algorithmic techniques can be implemented in a low-degree fashion.
See~\cite{hopkins2018statistical} for a survey of the low-degree framework and~ e.g.~\cite{hopkins2017efficient,hopkins2017power,gamarnik2020low,wein2022optimal,bresler2022algorithmic,schramm2022computational,ding2023subexponential} for some results about the power of low-degree algorithms for problems exhibiting statistical--computational gaps.

We now define these algorithms.   We say a function $f:\mathbb{R}^m \to \mathbb{R}^{2n}$ is a \textit{polynomial of degree (at most) D} if it can be written as $f(A) = (f_{1}(A), \cdots, f_{2n}(A))$, where each $f_i:\mathbb{R}^m \to \mathbb{R}$ is a multivariate polynomial of degree at most $D$. For some probability space $(\Omega, P_{\omega})$, we say $f:\mathbb{R}^m \times \Omega \to \mathbb{R}^{2n}$ is a \textit{random polynomial} if $f(\cdot, \omega)$ is a degree $D$ polynomial for each $\omega \in \Omega$, i.e., the coefficients of $f$ are random but do not depend on $A$.
For our purposes, we let the input of $f$ be an $2n$-vertex balanced bipartite graph $G$ encoded as $A \in \{0,1\}^{m}$ with $m = n^2$ and each entry of $A$  the indicator of a particular edge of $G$. 

What does it mean for such a polynomial function to find an independent set in a graph?  We follow, with some modifications for the bipartite setting, the definitions and notions used in~\cite{wein2022optimal}, in which   a rounding procedure is used to produce an independent set of $G$ based on $f(A)$.

\begin{Definition} \label{Def:V_eta_f}
    Let $f:\mathbb{R}^m \to \mathbb{R}^{2n}$ be a random polynomial, with $m = n^2$. For $A \in \{0,1\}^m$ indicating the edges of a bipartite graph $G$ on $n+n$ vertices and $\eta >0$, let $V^{\eta}_f(A,\omega)$ be the independent set in the graph $G$ constructed as follows. Let 
    \begin{align*}
        I &=\big\{ i \in [2n]: f_i(A,\omega) \geq 1\big\} \\
        \Tilde{I} &= \big\{i \in I:\text{i has no neighbors in I in  $G$}\big\} \,  ,  \, \, \text{ and} \\
         J &= \left\{ i \in [2n]: f_i(A,\omega) \in \left(\frac{1}{2},1\right)\right\} \,.
    \end{align*}
    Then define
    \begin{align*}
        V^{\eta}_f(A,\omega) =
        \begin{cases}
            \Tilde{I} & \text{if } |I \setminus  \Tilde{I}|+|J| \leq \eta n; \\
            \varnothing & \text{otherwise}.
        \end{cases}
    \end{align*}
\end{Definition}
Intuitively speaking, a vertex $i$ is in the independent set if the output of $f_i$ $\geq 1$ and is not in the independent set if the output of $f_i$  is at most $\frac{1}{2}$. Some errors are allowed, as long as there are no more than $\eta n$ many vertices which have value in $(1/2,1)$ or violate the independent set constraint. Based this mapping of the polynomial's output to independent sets, we define what is means for a random polynomial to find a balanced independent set of certain size.
\begin{Definition} \label{Def:f_Optimize_ind}
    For parameters $k_{\ell}, k_r > 0$, $\delta \in [0,1]$, $\xi \geq 1$, a random function $f : \mathbb{R}^m \to \mathbb{R}^{2n}$ is said to $(k_{\ell}, k_{r}, \delta, \xi, \eta)$-optimize the independent set problem in $\Gdb$ if the following are satisfied when $A \sim \Gdb$:
    \begin{itemize}
        \item $\mathbb{E}_{A,\omega}\left[\|f(A,\omega)\|^2\right] \leq \xi(k_{\ell}+ k_r)$ and
        \item $\mathbb{P}_{A,\omega}\left[|V^\eta_f(A,\omega) \cap L| \geq k_{\ell} \text{ and }|V^\eta_f(A,\omega) \cap R| \geq k_r\right] \geq 1-\delta$
    \end{itemize}
\end{Definition}
Here $k_{\ell}, k_r$ are the desired sizes of the intersections of the independent set on each side of the bipartition, $\delta$ is the algorithm's failure probability, $\xi$ is a normalization parameter, and $\eta$ is the error tolerance parameter of the rounding procedure. 

The trivial algorithm to find a density $1/2$ independent set in $\Gdb$ can be implemented as a (deterministic) degree-$0$ algorithm in the following way: $f_i(A) = 1$ if $i \in L$ and $0$ if $i \in R$; that is, $f$ is  a constant vector.  This function $(n,0,0,1,0)$-optimizes the independent set problem in $\Gdb$.

\subsection{Main results}
\label{Subsec:Mainresults}
    
    We first establish an existential result on  the typical density of the largest $\gamma$-balanced independent set in $\Gdb$.

    \begin{Theorem}\label{Thm:balexist}
     Let $\Xg$ be the size of the largest $\gamma$-balanced independent set in $\Gdb$.
        For every $\epsilon>0$ and sufficiently large $d$, with probability $1-o(1)$ as  $n \to \infty$ 
        $$ \left( \frac{1}{2 \gamma (1-\gamma)} -\eps \right) \frac{\log d}{d} \le \frac{\Xg}{2n} \le  \left( \frac{1}{2 \gamma (1-\gamma)} +\eps \right)\frac{\log d}{d}.$$
        In particular, the density of the largest balanced independent set in $\Gdb$ is $(2+ o_d(1)) \frac{\log d}{d}$.
    \end{Theorem}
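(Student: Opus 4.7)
My plan is a first-and-second moment calculation combined with a martingale concentration boost. Let $N_k$ denote the number of $\gamma$-balanced independent sets in $\Gdb$ of size exactly $k$, i.e., with $s := \lceil \gamma k \rceil$ vertices in $L$ and $t := k - s$ in $R$. Since such a pair $(S,T) \subseteq L \times R$ is independent iff no edge crosses $S \times T$, linearity of expectation yields $\E[N_k] = \binom{n}{s}\binom{n}{t}(1 - d/n)^{st}$. Parameterizing $k = 2cn \log d/d$ (so that $k/(2n) = c \log d/d$) and applying Stirling together with $\log(1-d/n)^{st} = -dst/n + O(d^2 st/n^2)$ gives
\[
\log \E[N_k] \;=\; \bigl(2c - 4\gamma(1-\gamma)c^2 + o_d(1)\bigr)\,\frac{n (\log d)^2}{d}.
\]
This is negative exactly when $c > \frac{1}{2\gamma(1-\gamma)}$, so Markov's inequality together with a union bound over the polynomially many $k$ above threshold yields the upper bound in Theorem~\ref{Thm:balexist}.

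For the matching lower bound at $c = \frac{1}{2\gamma(1-\gamma)} - \eps$, I would compute the second moment. Classifying ordered pairs $(I_1, I_2)$ by their left and right overlaps $a = |I_1 \cap I_2 \cap L|$ and $b = |I_1 \cap I_2 \cap R|$, and noting that the forbidden edge set $(S_1\times T_1) \cup (S_2 \times T_2)$ has size $2st - ab$,
\[
\frac{\E[N_k^2]}{\E[N_k]^2} \;=\; \E\!\left[(1-d/n)^{-AB}\right]
\]
where $A, B$ are independent with hypergeometric laws of means $s^2/n$ and $t^2/n$. A Laplace-type analysis around the product-measure overlap yields
\[
\log \frac{\E[N_k^2]}{\E[N_k]^2} \;=\; O\!\Bigl(\tfrac{s^2 t^2 d}{n^3}\Bigr) \;=\; O\!\Bigl(\tfrac{n (\log d)^4}{d^3}\Bigr).
\]
By Paley--Zygmund, $\Pr[N_k > 0] \ge \exp\!\bigl(-O(n (\log d)^4/d^3)\bigr)$; this is extremely weak on its own, but crucially the exponent is $o_d(1) \cdot \log \E[N_k]$.

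To convert this into a whp statement, I would use that $\Xg$ is $1$-Lipschitz under vertex exposure: changing the neighborhood of a single vertex moves $\Xg$ by at most $1$. Azuma--Hoeffding on the $2n$-step vertex-exposure martingale gives $\Pr[|\Xg - M| > t] \le 2\exp(-t^2/(4n))$ around the median $M$. Combining with the Paley--Zygmund bound and the standard median-vs-upper-tail comparison forces $M \ge k - O\bigl(\sqrt{n \cdot n(\log d)^4/d^3}\bigr) = k - O\bigl(n (\log d)^2/d^{3/2}\bigr)$, and a second Azuma step gives $\Xg \ge M - O(\sqrt{n \log n})$ whp. Since $(\log d)^2/d^{3/2} = o_d(\log d/d)$ and $\sqrt{n \log n}/n = o(1)$, chaining the bounds produces $\Xg/(2n) \ge (c - o_d(1)) \log d/d$ whp, completing the lower bound.

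The main obstacle I expect is justifying the Laplace estimate in the second moment step: one must show that the summand $p_a q_b (1-d/n)^{-ab}$, viewed as a function on $\{0,\ldots,s\}\times\{0,\ldots,t\}$, is maximized near the product-measure overlap $(s^2/n, t^2/n)$, and that the boundary regions (where $(1-d/n)^{-ab}$ is enormous but the hypergeometric weights are doubly-exponentially small) contribute negligibly. This mirrors the delicate step in Frieze's proof for the independence number of $G(n,d/n)$; in the regime $d$ large the second-order corrections are controllable by a standard saddle-point analysis, without need for small-subgraph conditioning or other heavier machinery.
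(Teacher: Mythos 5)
Your upper bound is exactly the paper's (first moment plus Stirling plus Markov), so nothing to add there. Your lower bound takes a genuinely different route. The paper follows Frieze's trick from~\cite{frieze1990independence}: it partitions each side into blocks of size $m \approx d/(\log d)^2$ and runs the second moment only over ``$P$-independent sets'' that use at most one vertex per block, then Azuma-boosts the maximum $P$-independent set size $\beta(G)$ using block exposure (with $\approx 2n(\log d)^2/d$ steps). You instead propose a direct second moment on the full count $N_k$, getting $\log(\E[N_k^2]/\E[N_k]^2) = O(n(\log d)^4/d^3)$ from the product-measure overlap, and then Azuma-boost $\Xg$ via vertex exposure. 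If the Laplace estimate can be made rigorous, your route gives a quantitatively \emph{stronger} Paley--Zygmund bound than the paper's $\exp(-O(n(\log d)^{5/2}/d^{3/2}))$, and indeed a sketch of the saddle equations $x e^{-x} = s^2d/n^2$, etc., suggests the only interior local maximum of $\log p_a + \log q_b + abd/n$ on $[0,s]\times[0,t]$ is at the product-measure overlap $(s^2/n, t^2/n)$, with a saddle near $(s/2,t/2)$ whose value is already $-\Theta(n(\log d)^2/d)$ and a boundary corner value $\log(1/\E[N_k]) < 0$. So the approach looks viable. What the Frieze/paper approach buys is that this delicate 2D landscape analysis is entirely sidestepped: the block restriction caps the number of ways two independent sets can share $\ell$ vertices so drastically that the overlap sum reduces to a clean one-parameter family $U_\ell$ handled by three elementary cases, at the cost of a slightly worse final exponent.

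Two concrete corrections you should make. First, $\Xg$ is \emph{not} $1$-Lipschitz under vertex exposure: deleting a vertex from a $\gamma$-balanced set breaks the balance constraint $\big||I\cap L| - \gamma|I|\big| < 1$, and you must delete up to $\lceil 1/\gamma\rceil$ further vertices from the other side to restore it, so the Lipschitz constant is $1 + \lceil 1/\gamma\rceil$ (the paper's Lemma~\ref{betaconcentration} uses exactly this $1 + 1/\gamma$ for the block-exposure martingale). This only changes constants and does not break your Azuma chain. Second, your remark that the boundary weights are ``doubly-exponentially small'' is off: both $p_s q_t$ and $(1-d/n)^{-st}$ are singly exponential in $n$, their product is $1/\E[N_k]$, and the reason it is negligible is that $\E[N_k]$ is exponentially large below the first-moment threshold. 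Finally, be a bit more careful in asserting this is ``without need for heavier machinery'' by appealing to Frieze -- Frieze himself did not carry out the direct second moment but instead introduced the $P$-independent-set device precisely to avoid it, so the saddle-point analysis you are relying on is the new work your proof would have to supply.
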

  We prove Theorem~\ref{Thm:balexist} in Section~\ref{Sec:Exist} using the first-moment method for the upper bound and adapting Freize's argument from~\cite{frieze1990independence} for the lower bound.

Now let 
\[ \hat {\alpha}_{d, \gamma}:= \sup \left\{ \alpha_{d,\gamma} (g_{\ell}, g_{r}): s \ge 0, g_{\ell}, g_{r} \text{ compatible bipartite s-local functions}    \right \} \,. \] 
By Lemma~\ref{Lemma:GamValueRandom} the value $\hat {\alpha}_{d, \gamma}$ is the optimal density (to first order) of a $\gamma$-balanced independent set that a bipartite local algorithm can find whp.

Our next result is that local algorithms are only half optimal for balanced independent sets in random bipartite graphs and $(1-\gamma)$-optimal for $\gamma$-balanced independent sets. 
\begin{Theorem} \label{Thm:balhalfopt}
    For any $\gamma \in (0,1/2]$,
    $$\hat{\alpha}_{d,\gamma} = \left(\frac{1}{2 \gamma}+o_d(1) \right)\frac{\log d}{d}  \,.$$
    Moreover, the lower bound is achieved by a bipartite $1$-local algorithm.
\end{Theorem}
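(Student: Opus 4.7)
The theorem has an algorithmic part (lower bound on $\hat{\alpha}_{d,\gamma}$) and a hardness part (matching upper bound). For the lower bound I plan to analyze the trivial threshold algorithm from \cite{chakraborti2021extremal}. Fix $p = (1-\delta)\log d/d$ for small $\delta>0$ and define the compatible pair of local functions
\[
g_\ell(T,\mathbf{x}) = \mathbf{1}\{x_r \le p\}, \qquad g_r(T,\mathbf{x}) = \prod_{v \sim r}\mathbf{1}\{x_v > p\},
\]
where $r$ denotes the root. The first is $0$-local and the second is $1$-local; compatibility is immediate since an $R$-vertex is selected only when no $L$-neighbor is selected, so no edge joins two selected vertices. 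On the $\poiss(d)$ Galton--Watson tree, $\E[g_\ell]=p$ and $\E[g_r]=\E[(1-p)^{\poiss(d)}]=e^{-pd}=d^{\delta-1}$. Since $d^{\delta-1}/(1-\gamma)$ dwarfs $p/\gamma$ for $d$ large, the minimum in Definition~\ref{defGamBal} is realized on the $L$-side, giving $\alpha_{d,\gamma}(g_\ell,g_r) = p/(2\gamma) = (1-\delta)/(2\gamma)\cdot\log d/d$. Lemma~\ref{Lemma:GamValueRandom} then delivers the claimed lower bound in $\Gdb$, achieved by a bipartite $1$-local algorithm.

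For the upper bound, I plan to use an overlap gap property (OGP) argument. Suppose for contradiction that compatible $s$-local $g_\ell, g_r$ satisfy $\alpha_{d,\gamma}(g_\ell,g_r) > (1/(2\gamma)+\eps)\log d/d$. Unpacking Definition~\ref{defGamBal} forces $\E[g_\ell] > (1+2\gamma\eps)\log d/d$ and $\E[g_r] > \frac{1-\gamma}{\gamma}(1+2\gamma\eps)\log d/d$; since $\gamma\le 1/2$, both exceed $(1+\eps')\log d/d$ for some $\eps'>0$. The key auxiliary lemma is a bipartite OGP: there exist constants $0 < \nu_1 < \nu_2$ such that, whp over $\Gdb$, no two independent sets $I_1, I_2$ with $|I_j\cap L|/n$ and $|I_j\cap R|/n$ at least $(1+\eps')\log d/d$ can have $|I_1\cap I_2\cap L|/n$ lying in $(\nu_1\log d/d,\ \nu_2\log d/d)$. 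I would establish this via a first-moment computation, parameterizing pairs by their sizes and side-wise overlaps and showing that the associated rate function is strictly negative throughout the forbidden overlap interval for $d$ large. The bipartite structure actually simplifies the conflict count relative to the analysis of \cite{rahman2017local}, since edges only cross the partition and the entropy contributions of $L$ and $R$ factor cleanly.

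The contradiction then follows by the standard interpolation argument: couple $\Gdb^{(0)},\dots,\Gdb^{(T)}$ so that adjacent graphs differ only in a single vertex's neighborhood, $\Gdb^{(0)}$ is distributed as $\Gdb$, and $\Gdb^{(T)}$ is an independent copy. Feed each into the $s$-local algorithm with a common label vector $\mathbf{x}$ to produce independent sets $I^{(t)}$. Because the algorithm is $s$-local, only the outputs of the $O(d^s)$ vertices within distance $s$ of the resampled vertex can change per step, so $|I^{(0)}\cap I^{(t)}\cap L|/n$ moves by $o_n(1)$ per step. This overlap starts at $\approx (1+\eps')\log d/d$ (when $t=0$) and ends near $((1+\eps')\log d/d)^2$ (when $t=T$, since $I^{(0)}$ and $I^{(T)}$ are generated from independent graphs), so by continuity it must enter the forbidden interval $(\nu_1\log d/d, \nu_2\log d/d)$, violating the bipartite OGP.

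The main obstacle is establishing the bipartite OGP with a quantitative gap: the first-moment rate function must be strictly negative on a nondegenerate overlap interval, and this requires balancing the entropy contributions from both sides of the partition and controlling the Poisson cross-edge contribution carefully. Setting up the count so that the $L$-side overlap alone suffices to trigger OGP (possible because $\gamma\le 1/2$ forces the $L$-side to be the binding constraint) should keep the optimization one-dimensional and tractable.
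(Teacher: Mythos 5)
Your lower bound argument is correct and matches the paper (the same threshold algorithm, implemented as compatible $0$- and $1$-local functions, with $\E[g_\ell]=p$, $\E[g_r]=e^{-pd}$, and the minimum in Definition~\ref{defGamBal} realized on the $L$-side for $p=(1-\delta)\log d/d$).

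The upper bound, however, has a genuine gap: the \emph{pairwise} overlap gap property you propose does not exist at densities $(1+\eps')\log d/d$, so the first-moment rate function you hope to make negative on a nondegenerate interval is in fact strictly positive there. Parameterizing pairs $I_1,I_2$ with $|I_j\cap L|=|I_j\cap R|=a\phi$ (where $a=1+\eps'$ and $\phi=(\log d/d)n$), $L$-overlap $\rho_L\phi$ and $R$-overlap $\rho_R\phi$, the exponential rate (in units of $(\log^2 d/d)n$) is
\[
4a-\rho_L-\rho_R-(2a^2-\rho_L\rho_R)\,.
\]
Maximizing over $\rho_R\in[0,a]$ gives, for $\rho_L\le 1$, the value $4a-2a^2-\rho_L \approx 2-\rho_L-2\eps'^2>0$, and for $\rho_L\in[1,a]$ the value $3a-2a^2+\rho_L(a-1)\ge 1-2\eps'^2>0$. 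So the expected number of pairs never becomes exponentially small, and there is no forbidden overlap window. This is exactly the obstruction that limits Gamarnik--Sudan's pairwise OGP to ruling out density $\sqrt{2}\log d/d$ in the non-bipartite setting: to reach the tight $(1+\eps)\log d/d$ threshold you need the \emph{ensemble} (multi-)OGP. The paper's Proposition~\ref{Prop:forbidden} follows \cite{rahman2017local,wein2022optimal} and considers a forbidden structure of $K\ge 9/\eps^2+1$ sets $S_1,\dots,S_K$, each of density $(1+\eps)\log d/d$ on both sides, whose successive ``new parts'' $S_k\setminus\cup_{i<k}S_i$ are constrained to a small band; the first-moment calculation then accumulates a factor of order $-K\eps^2/8$ from the $K$ increments, which overwhelms the initial entropy $+1$ term. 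That multi-set structure, combined with the single-edge interpolation of Definition~\ref{Def:interpolationpath} and stability of local/low-degree algorithms, is what actually closes the argument. Your interpolation and continuity outline is otherwise sound, but the pairwise OGP it relies on is false at this density, so you would need to replace it with the $K$-tuple forbidden structure (and correspondingly build a chain of $K$ sets along the interpolation path rather than just tracking a single pairwise overlap).
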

In particular Theorem~\ref{Thm:balhalfopt} says that $\hat{\alpha}_{d,1/2} = (1+ o_d(1)) \log d/ d$.  Note that a $1$-local algorithm is very simple: whether a given vertex belongs to the independent set depends only on the label of the vertex along with the number and labels of its neighbors.

\bigskip

Next we extend the upper and lower bounds to the class of low-degree algorithms.

\begin{Theorem} \label{Thm:imposs_achieve_lowdeg}
    For any  $\varepsilon >0$ the following hold:
    \begin{itemize}
        \item For any $d>0$, there exists $\xi \ge 1$  so that if  $k_{\ell} \leq (1-\varepsilon)\frac{\log d}{d} n$ and $k_r \leq (1-\eps) d^{\varepsilon-1}n$, there is a degree-$1$ polynomial (a linear function) that $(k_{\ell},k_r, o_n(1), \xi, 0)$-optimizes the independent set problem in $\Gdb$. 

        \item  If $\min \{k_{\ell},k_r\} \geq (1+\varepsilon)\frac{\log d}{d} n$, then there exists $d_0>0$ such that for any $d \geq d_0$ there exists $n_0 >0$, $\eta >0$ $C_1>0$ and $C_2>0$ such that for any $n \geq n_0$, $\xi \geq 1$, $1 \leq D \leq \frac{C_1 n}{\xi \log n}$ and $\delta \leq \exp(-C_2 \xi D \log n)$, there is no random polynomial of degree $D$  that $(k_{\ell}, k_r, \delta, \xi, \eta\big)$-optimizes the independent set problem in $\Gdb$. 
    \end{itemize}
\end{Theorem}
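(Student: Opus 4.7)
I implement the naive two-step selection algorithm of \cite{chakraborti2021extremal} as a degree-$1$ polynomial. Let the random coefficients $\omega$ encode i.i.d.\ Bernoulli$(p)$ labels $\{X_j\}_{j \in L}$ with $p = (1-\varepsilon)\log d/d + \varepsilon''/d$, where $\varepsilon'' \in (0, -\log(1-\varepsilon))$ is a small constant. Set $f_i(A,\omega) = X_i$ for $i \in L$ (degree $0$ in $A$) and $f_i(A,\omega) = 1 - \sum_{j \in L} A_{ij}X_j$ for $i \in R$ (degree $1$ in $A$). One checks directly that $I := \{i : f_i(A,\omega) \geq 1\}$ equals $\{i \in L : X_i = 1\} \cup \{i \in R : \text{no neighbor of $i$ in $L$ is selected}\}$ and is automatically an independent set of $G$, so $\tilde I = I$ and $J = \varnothing$ \emph{deterministically} and $V^\eta_f = I$ for every $\eta \geq 0$. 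A Chernoff bound shows $|I \cap L| \sim \bin(n,p)$ exceeds $k_\ell$ w.h.p., and conditioning on $X$ shows $|I \cap R|$ concentrates around $ne^{-dp} = nd^{\varepsilon-1}e^{-\varepsilon''}$, strictly larger than $k_r$ by choice of $\varepsilon''$; these together give success probability $1-o_n(1)$. A direct calculation gives $\E\|f\|^2 \lesssim n(\log d)^2$ while $k_\ell + k_r \gtrsim n\log d/d$, so $\xi = \Theta(d\log d)$ suffices.

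\textbf{Hardness part.} For the lower bound I adapt the overlap gap property and low-degree stability framework of \cite{gamarnik2020low, wein2022optimal} to the bipartite model. The plan has three ingredients: (i) a stability estimate showing that if $f$ has degree $D$ with the prescribed norm bound, then resampling a small fraction of the entries of $A \in \{0,1\}^{n \times n}$ changes the rounded output $V^\eta_f$ by only $o(n)$ coordinates w.h.p.; (ii) a branching overlap gap property for balanced independent sets in $\Gdb$; and (iii) the standard contradiction obtained by interpolating between independent copies $G^{(0)}, G^{(1)} \sim \Gdb$ through a branching tree of coupled bipartite graphs, so that on one hand the low-degree output must traverse intermediate overlap values and on the other hand the OGP forbids this. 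Step (i) transfers essentially verbatim from \cite{wein2022optimal} via noise sensitivity of low-degree polynomials on the Bernoulli product space; the bipartite structure only affects which entries of $A$ carry randomness, not the derivation.

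\textbf{Main obstacle: the OGP.} Step (ii) is the technical crux. I need to show that w.h.p.\ there is no tuple of balanced independent sets $(I_1,\ldots,I_m)$ across a branching ensemble of coupled copies of $\Gdb$ with $|I_j \cap L|, |I_j \cap R| \geq (1+\varepsilon/2)n\log d/d$ for all $j$ and all pairwise overlaps $|I_j \cap I_k|$ in a prescribed intermediate interval $(\nu_1 n, \nu_2 n)$. I approach this by first moment: because edges of $\Gdb$ are independent in $L \times R$, the expected count factors into separate entropy contributions from the $L$- and $R$-footprints of the tuples together with a probability factor $(1-d/n)^{\sum_{j} \ell_j r_j - \text{overlap corrections}}$ recording that each $I_j$ is independent, where $\ell_j = |I_j \cap L|$ and $r_j = |I_j \cap R|$. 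Using inclusion-exclusion for the overlaps, the expected count decays as $\exp(n \cdot \Phi(\text{overlap profile}))$ for an explicit rate function $\Phi$, and I must verify $\Phi < 0$ on a non-degenerate overlap window about the ``typical'' cross-overlap of two independent balanced IS. The balanced constraint is essential here: at densities just above $\log d/d$ on each side, ruling out intermediate overlaps requires exploiting both the $L$- and $R$-marginals simultaneously, since an asymmetric target like $((1+\varepsilon)\log d/d, 0)$ lies in the tractable region and enjoys no overlap gap. Once the OGP is in place, the contradiction in step (iii) follows the now-standard branching argument, giving the quantitative failure of degree-$D$ polynomials stated in the theorem.
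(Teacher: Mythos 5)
Your degree-$1$ construction is essentially the paper's: the paper chooses a \emph{deterministic} subset $L_1 \subseteq L$ of size $k_\ell$ and sets $F_u(A) = 1 - \sum_{v\in L_1} A_{uv}$ for $u\in R$, whereas you randomize the choice of $L_1$ via Bernoulli labels and therefore need an extra Chernoff step on $|I\cap L|$. Both work; your norm bound $\E\|f\|^2 \lesssim n(\log d)^2$ is in fact tighter than the paper's $O(nd^2)$, and both produce a valid $\xi(d)$. This part is fine.

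\paragraph{Hardness part.} Your three-ingredient plan (low-degree stability, an overlap gap for balanced independent sets, an interpolation contradiction) matches the paper's skeleton, and you are right that the stability estimate transfers verbatim from~\cite{wein2022optimal}. But the plan has a genuine gap at the crux. You propose a \emph{branching-tree} ensemble with a forbidden structure stated as ``all pairwise overlaps $|I_j\cap I_k|$ in an intermediate interval,'' and you say you ``must verify $\Phi < 0$'' for the resulting first-moment rate function. This is exactly the content of the paper's Proposition~\ref{Prop:forbidden}, and it is the step that actually carries the theorem; deferring it as a to-do leaves the argument unproved. The paper instead uses the simpler \emph{sequential} interpolation path $A^{(0)},\dots,A^{(T)}$ (resampling one coordinate at a time) and a multi-OGP stated in terms of ``new mass'' $|S_k\setminus \cup_{i<k}S_i|\in[\frac{\varepsilon}{4}\frac{\log d}{d}n,\frac{\varepsilon}{2}\frac{\log d}{d}n]$, which is easier to produce from a stable algorithm step-by-step than a uniform pairwise-overlap window across a branching tree.

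\paragraph{The missing ingredient.} What your outline does not address is that the first-moment supremum over overlap profiles is a priori taken over an \emph{unbounded} range: a candidate independent set $S_k$ in the forbidden tuple could have $|S_k\cap L|$ or $|S_k\cap R|$ as large as $n$, and the entropy terms then blow up. The paper handles this with Lemma~\ref{Lemma:balancex+y>xy}, which shows that whp every independent set $I$ of $\Gdb$ with $|I\cap L|=\alpha_L\frac{\log d}{d}n$ and $|I\cap R|=\alpha_R\frac{\log d}{d}n$ satisfies $\alpha_L+\alpha_R\ge\alpha_L\alpha_R$; combined with the constraint $\alpha_L,\alpha_R\ge 1+\varepsilon$, this forces $\alpha_L,\alpha_R\le 1+1/\varepsilon$ and makes the supremum in the first-moment bound range over a compact set, so that the $\exp\{\phi\log d\,(\,\cdot\,)\}$ estimate closes. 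You correctly intuit that ``the balanced constraint is essential'' and that an asymmetric target lies in the easy region, but without an explicit a priori cap on the two side densities the calculation you sketch does not go through. I'd add this lemma (or an equivalent truncation) as a named step before attempting $\Phi<0$, and I'd recommend switching from the branching-tree formulation to the sequential new-mass condition, which is both what the stability argument delivers and what the first-moment computation digests cleanly.
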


We remark that the first part of Theorem \ref{Thm:imposs_achieve_lowdeg} also holds when $k_r \leq (1-\varepsilon)\frac{\log d}{d} n$ and $k_{\ell} \le (1-\eps) d^{\varepsilon-1}n$. The result says that as long as the density on one side is a constant factor less that $\frac{\log d}{d} n$, then the density on the other side can be polynomially larger than $\log d/d$.

As mentioned above, the hardness results of Theorems~\ref{Thm:balhalfopt} and~\ref{Thm:imposs_achieve_lowdeg} are proved used the Overlap Gap Property (OGP).
 Seeking to provide rigorous evidence of the statistical--computational gap in finding large independent sets in sparse random graphs (and addressing the conjecture in~\cite{hatami2014limits}), Gamarnik and Sudan~\cite{gamarnik2017limits} developed the OGP concept  and proved a rigorous link to the failure of  local algorithms for search problems.  Gamarnik and Sudan used this framework to show that local algorithms fail to find independent sets of size $(\sqrt{2} +\eps) \frac{\log d}{d}$ for any $\eps>0$ fixed and $d$ large~\cite{gamarnik2017limits}.  Rahman and Virag then ruled out local algorithms for finding find independent sets of size $(1 +\eps) \frac{\log d}{d}$ using a more sophisticated concept of `ensemble-$m$-OGP'.  This result is optimal in the leading constant as there exist local algorithms that find independent sets of size $(1 -\eps) \frac{\log d}{d}$ in sparse random graphs~\cite{rahman2017local}.   The resulting state of knowledge is fairly satisfying -- local algorithms match the performance of the best known efficient algorithms for this problem, and  local algorithms provably cannot do better.  This provides some evidence that this problem may be computationally intractable beyond this point, and also serves as a challenge to search for algorithmic techniques that cannot be captured by the local framework.  Even stronger evidence was provided by Wein~\cite{wein2022optimal}, following~\cite{gamarnik2021overlap,gamarnik2020low}, who showed that low-degree algorithms for independent sets in $G(n,d/n)$ are also only half optimal.  See~\cite{gamarnik2021overlapT} for a survey of the OGP framework.

 Our main contribution is in identifying a problem with a statistical--computational gap in which the best known algorithm is so trivial, both in its description and in its implementation as a $1$-local algorithm and a linear function.  We also demonstrate, in a new setting, the phenomenon of cardinality constraints inducing complexity, or of hard problems lurking inside easy ones~\cite{mezard1987mean,feige2004hardness,carlson2022computational}.    Once the correct definitions are chosen the proofs are streamlined, with the hardness results following~\cite{wein2022optimal}.  One additional contribution is in  the very explicit constructions of the local and low-degree algorithms achieving the lower bounds; this may be useful pedagogically.

\subsection{Future Directions}
\label{secFuture}

A bit of intuition that inspired the current results is that the structure of balanced independent sets in $\Gdb$ is  close to that of independent sets in $G(n, d/n)$.  On the level of first- and second-moment calculations, this is indeed the case, as we will see in the proof of Theorem~\ref{Thm:balexist} and the OGP analysis below.   Similarly, these models are very similar on the level of the cavity method from statistical physics, e.g.~\cite{barbier2013hard}.  However, establishing some basic fundamentals of balanced independent sets seems to be challenging even in comparison to independent sets in $G(n,d/n)$. One such fundamental question is whether for fixed $d$, the balanced independence ratio of sparse random bipartite graphs has a limit.

    \begin{Conjecture}\label{Conj:converge}
        For each $d>0$, there exists $\alpha_{d,1/2} >0$ so that in $\Gdb$,
       $ \frac{X_{1/2}}{n}$  converges in probability to $\alpha_{d,bal}$ as $n\to \infty$.
    \end{Conjecture}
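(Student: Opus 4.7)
The plan is to decompose the conjecture into concentration of $X_{1/2}/n$ around its mean plus convergence of the mean. For concentration, I would use a vertex-exposure Doob martingale on the $2n$ vertices of $\Gdb$: deleting any single vertex can change $X_{1/2}$ by at most $2$, since one can remove the deleted vertex from the current maximum balanced independent set (if present) and then possibly prune one additional vertex from the opposite side to restore balance. Azuma's inequality then gives $|X_{1/2} - \E[X_{1/2}]| = O(\sqrt{n \log n})$ with high probability, so the problem reduces to showing that $a_n := \E[X_{1/2}]/n$ converges.

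For convergence of the mean, the central obstacle is that the balance condition is a global constraint, which blocks a direct application of the interpolation method of Bayati--Gamarnik--Tetali used to establish convergence of the independence ratio in $G(n,d/n)$. I propose to dualize: for $\lambda \in [-1,1]$, define
\[
Y_\lambda(G) := \max\big\{(1+\lambda)|I \cap L| + (1-\lambda)|I \cap R| : I \text{ independent in } G\big\},
\]
a weighted maximum independent set problem with no global constraint. Lagrangian weak duality gives $X_{1/2}(G) \le \inf_{\lambda} Y_\lambda(G)$ pointwise. The quantity $\E[Y_\lambda(\Gdb)]$ should be amenable to an interpolation argument: interpolate between a single $\Gdb$ on $2n$ vertices and two disjoint copies on $2n_1$ and $2n_2$ vertices (with $n_1+n_2=n$), preserving the bipartition, to establish superadditivity in $n$ and hence convergence of $\E[Y_\lambda]/n$ to some $y(\lambda,d)$. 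Convexity of $\lambda \mapsto \E[Y_\lambda]/n$ (as the expectation of a pointwise max of affine functions of $\lambda$), together with uniform boundedness on $[-1,1]$, upgrades pointwise convergence to uniform convergence on $[-1,1]$, which yields convergence of $\inf_\lambda \E[Y_\lambda]/n$ to $\inf_\lambda y(\lambda,d)$.

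The final step is to close the Lagrangian duality gap on typical instances, i.e.\ to show that $\E[X_{1/2}(\Gdb)] \ge \inf_\lambda \E[Y_\lambda(\Gdb)] - o(n)$. My plan is to use convexity of $\lambda \mapsto Y_\lambda(G)$: the minimizer $\lambda^\star(G)$ is characterized by a subgradient inclusion, which, interpreted combinatorially, should force an optimal $Y_{\lambda^\star}$-independent set to be approximately balanced, i.e.\ $\big||I \cap L| - |I \cap R|\big| = o(n)$ with high probability. One then prunes $o(n)$ vertices to obtain an exactly balanced independent set witnessing $X_{1/2}(G) \ge \inf_\lambda Y_\lambda(G) - o(n)$.

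The main obstacle is precisely this third step. Integer-programming duality gaps do not vanish in general, and closing the gap here requires a structural property specific to $\Gdb$ rather than a soft LP argument; this is exactly the difficulty the paper identifies as separating balanced independent sets from ordinary independent sets in $G(n,d/n)$ and is in the end why the conjecture remains open. A secondary, more tractable hurdle is carrying out the Bayati--Gamarnik--Tetali interpolation while preserving the bipartite size structure, which should be feasible by running the interpolation only on $L \times R$ edges but demands careful bookkeeping.
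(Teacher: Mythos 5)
This statement is a \emph{conjecture} in the paper, not a theorem: the authors explicitly write that the Bayati--Gamarnik--Tetali interpolation ``fails: the crucial inequality is not always valid,'' and that ``This leaves Conjecture~\ref{Conj:converge} open.'' So there is no paper proof to compare against, and no proof you write should be expected to match one. You are honest that your third step is an obstacle, which is the right instinct, but you substantially underestimate the severity of the obstacle: the Lagrangian dual you introduce is vacuous, and no amount of structural work on typical instances can rescue it.

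Concretely, $Y_\lambda(G) = \max_I \big\{ (1+\lambda)|I\cap L| + (1-\lambda)|I\cap R| \big\}$ is always at least $n(1+|\lambda|)$, because $I = L$ gives $(1+\lambda)n$ and $I = R$ gives $(1-\lambda)n$; hence $\inf_{\lambda \in [-1,1]} Y_\lambda(G) \ge n$ for \emph{every} bipartite graph $G$ with $|L| = |R| = n$, deterministically. (Widening the range of $\lambda$ does not help: for any real $\lambda$, one of $I = L$ or $I = R$ still yields value $\ge n$.) Meanwhile, by Theorem~\ref{Thm:balexist}, $X_{1/2}(\Gdb) = \Theta\big(n \tfrac{\log d}{d}\big)$, so the Lagrangian duality gap is $\Theta(n)$, not $o(n)$. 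Your subgradient heuristic also breaks for the same reason: the subdifferential of $\lambda \mapsto Y_\lambda(G)$ at the minimizer $\lambda^\star = 0$ is the convex hull of $\{|I\cap L| - |I\cap R| : I \text{ attains } Y_0\}$, which includes $\pm n$ (from $I = L$ and $I = R$). The inclusion $0 \in \partial Y(0)$ is then satisfied trivially and selects no near-balanced optimum; the balance constraint is an equality constraint on an integer program whose relaxation has maximally imbalanced extreme optima, and there is no soft penalty $\lambda$ that rules them out. Your Step~1 (Azuma concentration, bounded differences $\le 2$, reducing the conjecture to convergence of $\mathbb{E}[X_{1/2}]/n$) is correct and standard, and your bookkeeping concern about running BGT on $L\times R$ edges is legitimate but moot: the decisive failure is the duality gap, not the interpolation.
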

For independent sets in $G(n,d/n)$ the existence of such a limit was proved by Bayati, Gamarnik, and Tetali~\cite{bayati2010combinatorial} using a beautiful `combinatorial interpolation' technique based on~\cite{guerra2002thermodynamic,franz2003replica}.  However, an attempt to adapt this proof to the balanced bipartite case fails: the crucial inequality is not always valid.  This leaves Conjecture~\ref{Conj:converge} open and also reveals that one of the crucial methods we have for studying optimization problems on sparse random graphs is rather brittle to the specifications of the model.  A similar phenomenon occurs for the min bisection problem in sparse random graphs, for which there are conjectures of the value limiting density but the even the existence of the limit remains open~\cite{zdeborova2010conjecture,dembo2017extremal}.

Perhaps the biggest open question raised by this paper is whether any efficient algorithm can surpass the $\log d /d$ density barrier for balanced independent sets in $\Gdb$.  For independent sets in $G(n,d/n)$ this is of course a long-standing challenge, but it could well be that the balanced independent set problem is easier -- this is suggested by the complete simplicity of the algorithm that achieves density $\log d/d$.

\section{Local  and low-degree algorithms on bipartite graphs}
\label{Sec:Prelim}

In this section we prove the positive results for local and low-degree algorithms.

\subsection{A simple 1-local algorithm} \label{Sec:Achievability_local}
Here we interpret and analyze the simple  algorithm from~\cite{chakraborti2021extremal} as a $1$-local bipartite algorithm to prove the positive algorithmic results.
\begin{Proposition}
    For all $p \in [0,1]$, $d >0$,  there is a pair of $1$-local, compatible functions $g_{\ell}, g_r$ with  $\E[ g_\ell(\bT, \mathbf{x})] = p$ and $\E [g_r(\bT,\mathbf x)] = \exp( - dp)$, where $\bT$ is a $\text{Pois}(d)$ Galton--Watson tree and $\mathbf{x}$ are iid $\text{Unif}[0,1]$ labels on its vertices.  
\end{Proposition}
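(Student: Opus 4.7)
The plan is to make precise the naive two-step algorithm described informally above: ``put each vertex of $L$ into $I$ with probability $p$ using its own label, and then include a vertex of $R$ iff none of its neighbors were selected.'' Both steps will be implemented as $1$-local functions of the label and depth-$1$ neighborhood.

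Concretely, I would define
\[
g_\ell(H,\mathbf x) \;=\; \mathbf 1\{x_r \le p\}
\qquad \text{and} \qquad
g_r(H,\mathbf x) \;=\; \prod_{v \sim r} \mathbf 1\{x_v > p\},
\]
where $r$ is the root of $H$ and $v \sim r$ ranges over the children of the root (i.e.\ the depth-$1$ neighbors). Both depend only on the labels within distance $1$ of the root, so they are $1$-local.

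The first step is to verify compatibility: in any bipartite graph $G=(L \sqcup R, E)$, applying $g_\ell$ to each vertex of $L$ and $g_r$ to each vertex of $R$ produces an independent set. If $u \in L$ and $v \in R$ are adjacent and both chosen, then $x_u \le p$ (because $g_\ell$ included $u$), but $g_r$ inspecting $v$'s neighborhood saw $x_u \le p$ and would have output $0$, a contradiction. Since the only possible edges in $G$ cross the bipartition, this is enough.

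The second step is to compute the expectations on the Galton--Watson tree $\bT$. For $g_\ell$, the root's label is uniform on $[0,1]$ and independent of everything, so $\E[g_\ell(\bT,\mathbf x)] = \Pr[x_r \le p] = p$. For $g_r$, let $N \sim \mathrm{Pois}(d)$ denote the number of children of the root; conditionally on $N$, the children's labels are independent uniforms, so $\E[g_r(\bT,\mathbf x)\mid N] = (1-p)^N$. Using the probability generating function of the Poisson distribution,
\[
\E[g_r(\bT,\mathbf x)] \;=\; \E\bigl[(1-p)^N\bigr] \;=\; \sum_{k=0}^{\infty} e^{-d}\frac{d^k}{k!}(1-p)^k \;=\; e^{-d}\,e^{d(1-p)} \;=\; e^{-dp},
\]
which yields the claimed value. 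There is no serious obstacle here; the only thing to be careful about is reading off the bipartite structure correctly from the rooted tree (children of a root in $R$ correspond to neighbors in $L$), which is exactly what makes the pair $(g_\ell,g_r)$ compatible as defined.
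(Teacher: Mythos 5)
Your proposal is correct and takes essentially the same approach as the paper: you construct the identical pair of $1$-local indicator functions (include $u \in L$ iff $x_u \le p$; include $v \in R$ iff all its neighbors have label exceeding $p$), verify compatibility, and compute the two expectations on the Poisson Galton--Watson tree. The only cosmetic difference is that you evaluate $\E[g_r]$ via the Poisson generating function while the paper invokes Poisson thinning (the count of children with label at most $p$ is $\mathrm{Pois}(dp)$), but these are the same calculation.
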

\begin{proof}
The algorithm has one parameter, $p \in [0,1]$, and proceeds as follows:
\begin{enumerate}
    \item Include each vertex from $L$ in the independent set independently with probability $p$.
    \item Add to the independent set each vertex from $R$ with no neighbors currently in the independent set. 
\end{enumerate}

We can implement this algorithm as a $1$-local algorithm by taking the following $1$-local functions:
\begin{align*}
    g_{\ell}(H_v,\mathbf{x}) = 
    \begin{cases}
      1 & \text{if } \mathbf{x}_v \leq p;\\
      0 & \text{otherwise,}
    \end{cases} \text{   and  }~~~~
    g_{r}(H_v,\mathbf{x}) = 
    \begin{cases}
      1 & \text{if } \mathbf{x}_u > p , \forall u \in N_{H_v}(v); \\
      0 & \text{otherwise,}
    \end{cases}    
\end{align*}
where $H_v$ is a depth-$1$ graph rooted at $v$.

 The functions $g_{\ell}, g_r$ are compatible  by definition.   Calculating their expectations on the $\text{Pois}(d)$ Galton--Watson tree is also immediate:
 $\E[g_\ell(\bT,\mathbf{x})] = p$ and $\E[g_r(\bT,\mathbf{x})]= e^{-p d}$, since the number of neighbors of the root with label at most $p$ has a $\text{Pois}(dp)$ distribution.
\end{proof}

From this we can immediately deduce the lower bound on $ \hat \alpha_{d,\gamma}$ in Theorem~\ref{Thm:balhalfopt}.
\begin{Corollary}
\label{CorlocalLB}
For any $\gamma \in (0,1/2]$,
\[ \hat \alpha_{d,\gamma} \ge \left( \frac{1}{2\gamma} +o_d(1) \right) \frac{\log d}{d} \,. \]
\end{Corollary}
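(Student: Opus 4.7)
The plan is to directly invoke the Proposition above, substitute the resulting pair of expectations into Definition \ref{defGamBal}, and optimize over $p$.  Using the compatible pair $(g_\ell, g_r)$ with $\E[g_\ell(\bT,\mathbf{x})] = p$ and $\E[g_r(\bT,\mathbf{x})] = e^{-pd}$, the $\gamma$-balanced value is
\[
\alpha_{d,\gamma}(g_\ell, g_r) \;=\; \frac{1}{2}\min\!\left\{\frac{p}{\gamma},\; \frac{e^{-pd}}{1-\gamma}\right\}.
\]
So I need to choose $p = p(d)$ to make this minimum as close as possible to $\frac{\log d}{2\gamma d}$.

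Since the first argument is increasing in $p$ and the second is decreasing, I would pick $p$ slightly below $\log d / d$ so that the $g_\ell$-term is the binding one and equals (up to lower order) the target value $\frac{\log d}{2\gamma d}$.  A clean choice is
\[
p \;=\; \frac{\log d - 2\log\log d}{d},
\]
which gives $e^{-pd} = (\log d)^2/d$.  Then
\[
\frac{e^{-pd}}{1-\gamma} \;=\; \frac{(\log d)^2}{(1-\gamma)d} \;\gg\; \frac{\log d}{\gamma d} \;\geq\; \frac{p}{\gamma}
\]
for all $d$ large enough (depending on $\gamma$), so the minimum is achieved by $p/\gamma$.  Therefore
\[
\alpha_{d,\gamma}(g_\ell, g_r) \;=\; \frac{p}{2\gamma} \;=\; \frac{\log d - 2\log\log d}{2\gamma d} \;=\; \left(\frac{1}{2\gamma} + o_d(1)\right)\frac{\log d}{d},
\]
which since $(g_\ell, g_r)$ is a compatible $1$-local pair gives the claimed lower bound on $\hat\alpha_{d,\gamma}$.

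There is really no obstacle here beyond picking $p$ correctly; the whole corollary is just a one-line optimization on top of the previous proposition.  The only thing to watch is to choose $p$ so that the $o_d(1)$ error in the $p/\gamma$ term genuinely tends to $0$ as $d \to \infty$ (rather than being a fixed constant gap), which is why I use $p = (\log d - 2\log\log d)/d$ instead of the simpler $p = c\log d/d$ for some fixed $c<1$.
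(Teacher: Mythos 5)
Your proof is correct and takes essentially the same route as the paper's: apply the $1$-local construction, plug the expectations $p$ and $e^{-pd}$ into Definition~\ref{defGamBal}, verify that the minimum is attained by the $p/\gamma$ term, and read off the bound. The only difference is cosmetic: the paper fixes $\eps>0$ and sets $p=(1-\eps)\log d/d$ (getting the $o_d(1)$ by letting $\eps\to 0$ along a suitable $d$-dependent schedule implicitly), whereas you choose $p=(\log d-2\log\log d)/d$ directly, which makes the $o_d(1)$ term $\tfrac{2\log\log d}{\log d}$ explicit; both are fine.
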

\begin{proof}
  Fix $\eps>0$ and let $p= (1-\eps) \frac{\log d}{d}$.  Then   $\E[g_\ell(\bT,\mathbf{x})] = (1-\eps) \frac{\log d}{d}$ and $\E[g_r(\bT,\mathbf{x})]= d^{\eps -1}$, where $\bT$ is the $\text{Pois}(d)$ Galton--Watson tree.  Applying Definition~\ref{defGamBal}, we see that, for $\eps$ small enough and $d$ large, 
  \[ \alpha_{d,\gamma} (g_\ell,g_r) = \frac{1}{2 \gamma} (1-\eps)\frac{\log d}{d}= \left( \frac{1}{2\gamma} +o_d(1) \right) \frac{\log d}{d} \,. \qedhere\]
\end{proof}

Finally we 
conclude this section by proving Lemma~\ref{Lemma:GamValueRandom}, establishing that $\alpha_{d,\gamma}(g_\ell,g_r)$ determines to first order the performance of a bipartite local algorithm for finding $\gamma$-balanced independent sets in $\Gdb$.

\begin{proof}[Proof of Lemma~\ref{Lemma:GamValueRandom}]

We first show that expectation of $|X_\gamma|$ is $2n\alpha_{d,\gamma}(g_{\ell}, g_r)+o(n)$. Assume that $ \frac{1}{\gamma} \E [g_{\ell}(\bT,\mathbf{x})] \geq \frac{1}{1-\gamma} \E [g_{r}(\bT,\mathbf{x})]$, then we have $\alpha_{d,\gamma}(g_{\ell}, g_r) = \frac{1}{2(1-\gamma)}\E [g_{r}(\bT,\mathbf{x})]$. It also means that we need to delete $|X \cap L|-\frac{\gamma}{1-\gamma}|X \cap R|$ many vertices in $X \cap L$ to obtain $X_{\gamma}$, so that $\frac{|X_{\gamma} \cap L|}{|X_{\gamma} \cap R|} = \frac{\gamma}{1-\gamma}$ as desired. It follows that
\begin{align*}
    \E\big[|X_\gamma|\big] &= \E\left[|X| - \left(|X \cap L|-\frac{\gamma}{1-\gamma}|X \cap R|\right)\right] = \frac{1}{1-\gamma}\E\left[|X \cap R|\right]\\&= \frac{1}{1-\gamma}\E [g_{r}(\bT,\mathbf{x})]+o(n) = 2n\alpha_{d,\gamma}(g_{\ell}, g_r)+o(n).
\end{align*}
The case of  $ \frac{1}{\gamma} \E [g_{\ell}(\bT,\mathbf{x})] \leq \frac{1}{1-\gamma} \E [g_{r}(\bT,\mathbf{x})]$ is similar.

Next, we show the concentration by showing that $\text{Var}\left[|X \cap R|\right]$ and $\text{Var}\left[|X \cap R|\right]$ are both $ O(n)$. Let $G \sim \Gdb$, for $v_i \in R$, let $E_{v_i}$ be the set of edges incident to $v_i$. Then there is a measurable function $F$ such that 
\[
    \sum_{v \in R}g_{r}(H_v,\mathbf{x}) = F(E_{v_1}, E_{v_2}, \cdots, E_{v_n}),
\]
where $H_v$ is a depth-$s$ graph rooted at $v$ in $G$.
Let $W = F(E_{v_1}, E_{v_2}, \cdots, E_{v_n})$ denote this sum, and for each $v_i \in R$ let $W_{v_i}$ be the same sum but $E_{v_i}$ is empty,
\[
    W_{v_i} = F(E_{v_1}, \cdots, E_{v_{i-1}}, \varnothing, E_{v_{i+1}}, \cdots, E_{v_n}).
\]
Equivalently, if $G_{v_i}$ indicates the graph $G$ where all edges incidents to $v_i$ have been deleted, and where $H_v^{v_i}$ is a depth-$s$ graph rooted at $v$ in $G_{v_i}$.we have
\[
    W_{v_i} = \sum_{v \in R}g_{r}(H_v^{v_i},\mathbf{x}).
\]
Since $(E_{v_i})_{v_i \in R}$ are independent and $0 \le g_{r}(G,v) \leq 1$ for all $v$ in $R$, we can apply the moment inequality \cite[Theorem 15.5]{boucheron2013concentration} in this setting: there exists some constant 
$c_1$ such that for
\[
    \mathbb{E}\left[\Bigg|\sum_{v \in R} g_{r}(H_v,\mathbf{x}) - \mathbb{E} \sum_{v \in R} g_{r}(H_v,\mathbf{x})\Bigg|^2\right] \leq c_1\mathbb{E}\left[\sum_{v_i \in R} (W-W_{v_i})^2\right]
\]

Now fix $v$ and $v_i$ in $R$, $g_{r}$ is $s$-local, hence $g_{r}(H_v,\mathbf{x}) - g_{r}(H_v^{v_i},\mathbf{x})=0$, except possibly $v$ is in the $s$-neighbor, denoted as $N_{s}(G,v_i)$,  of $v_i$. This implies 
\begin{align*}
    |W-W_{v_i}| \leq \sum_{v\in N_{s}(G,v_i)} g_{r}(H_v,\mathbf{x}) + g_{r}(H_v^{v_i},\mathbf{x}) \leq 2\left|N_{s}(G,v_i)\right|.
\end{align*}
It follows that
\begin{align*}
    \mathbb{E}\left[\sum_{v_i \in R} (W-W_{v_i})^2\right] \leq 4  \mathbb{E}\left[\sum_{v_i \in R}\left|N_{s}(G,v_i)\right|^2\right] \leq 4n\sqrt{\E\left[\left|N_{s}(G,v_i)\right|^4\right]}, 
\end{align*}
where Cauchy-Schwarz inequality is used in the second inequality. We can bound the moments of the size of the first $s$ generations in a Galton-Watson tree (which dominates the size of the $s$-neighborhood of a vertex in $\Gdb$) from above \cite[(11.8)]{bordenave2022detection}:  there exists some constant $c_2>0$ such that 
\[
    \E\left[\left|N_{s}(G,v_i)\right|^4\right] \leq c_2 d^{4s}.
\]
Putting all together,
\begin{align*}
    \mathbb{E}\left[\Bigg|\sum_{v \in R} g_{r}(H_v,\mathbf{x}) - \mathbb{E} \sum_{v \in R} g_{r}(H_v,\mathbf{x})\Bigg|^2\right] \leq 4c_1\sqrt{c_2}d^{2s} n.
\end{align*}
This means $\text{Var}\left[|X \cap R|\right] = O(n)$; similarly $\text{Var}\left[|X \cap L|\right] = O(n)$, and hence the overall size concentration follows.
\end{proof}

\subsection{A simple degree-$1$ algorithm}
\label{Sec:achievability_lowdeg}
In this section, we construct a degree-1 polynomial, that is, a linear function, to prove the first part of Theorem \ref{Thm:imposs_achieve_lowdeg}. Namely, we will construct a degree-$1$ polynomial $F:\{0,1\}^{n^2} \to \R ^{2n}$ that $((1-\varepsilon)\frac{\log d}{d}n, (1-\eps)  d^{\varepsilon-1}n, \delta, \xi, 0)$-optimizes the independent set problem in $G$, for $\xi$ some constant that depends only on $d$, and $\del \to 0$ as $n \to \infty$. 

Without loss of generality, for a fixed $\varepsilon>0$, we let $k_{\ell} = (1-\varepsilon)\frac{\log d}{d}n$ and $k_r = (1-\eps) d^{\varepsilon-1}n$ (if $k_{\ell}, k_r$ are smaller we can zero out some coordinates to achieve the guarantees). Let $G \sim \Gdb$ be encoded as $A \in \{0,1\}^{n^2}$.

For the left partition, choose any subset $L_1 \subseteq L$ of size $k_{\ell}$ and let $F_{v}(A) = 1$ for all $v \in L_1$. Let $L_0 = L \setminus L_1$ and $F_{v}(A) = 0$ and all $v \in L_0$. Notice that $\forall v \in L$, $F_v$ has degree 0. By definition, $\sum_{v \in L} F_v(A) =  k_{\ell}$, for every input $A$.

For $u \in R$, define $F_u(A) := 1- \sum_{v \in L_1}A_{uv}$, which has degree 1. In the case of $u$ has no neighbors in $L_1$, $F_u$ matches  $g_{r}$, giving the value $1$. 

Now we check that $F$ achieves the conditions of Definition \ref{Def:f_Optimize_ind} for chosen parameters.  If $F_u$ is not $1$ it is at most $0$, and so there are no errors from values in $ (1/2,1)$.  By construction there are also no vertices that violate the independent set constraint.

With probability $1$, $|V^{\eta}_{F}(A) \cap L| = k_{\ell}$, and with probability $1- \exp(-\Theta(n))$, $|V^{\eta}_{F}(A) \cap R| \ge k_r $, since the vertices in $R$ are each included independently with probability $( 1- d/n)^{k_\ell}$, and so a binomial tail bound suffices.

For the expectation of the squared $2$-norm of $F$, we calculate     
\begin{align*}
 \mathbb{E}_{A}\left[F^2_u(A)\right]
 &= \sum_{N = 1-|L_1|}^1 N^2 \cdot\mathbb{P}[F_u(A)=N] \\
 &=  \sum_{N = 1-|L_1|}^1 N^2\cdot\mathbb{P}[d(u,L_1)=1-N] \quad \text{ where } d(u,L_1)\text{ is the number of neighbors of } u \text{ in } L_1\\
 &= \mathbb{P}[d(u,L_1)=0]+\sum_{N=2}^{|L_1|}(1-N)^2\cdot \mathbb{P}[d(u,L_1)=N] \\
 &\leq  d^{\varepsilon-1} + \sum_{N=2}^{\infty}(N-1)^2\cdot \mathbb{P}[d(u,L_1)=N]\\
 &=  d^{\varepsilon-1}+ O(d^2
 ) \, ,
 \end{align*}
 and so,
 \begin{align*}
 \mathbb{E}_{A}\big[\|F(A)\|^2\big] &= \mathbb{E}_{A}\left[\sum_{v \in G}F^2_v(A)\right] \\
    &=\sum_{v \in L} \mathbb{E}_{A}\left[F^2_v(A)\right]+ \sum_{v \in R} \mathbb{E}_{A}\left[F^2_v(A)\right] \\
    &\leq k_{\ell}+ O(n d^2)\\
   & \leq \xi(k_{\ell}+ k_r), \text{ for some $\xi = \xi(d) \geq 1$ large enough}.
\end{align*}
 
Thus $F $ $(k_{\ell},k_r, o_n(1), \xi, 0)$-optimizes the independent set problem in $\Gdb$.

\section{Lower bounds against local and low-degree algorithms}
\label{secLB}

To prove the failure of local and low-degree algorithms, we follow the strategy of~\cite{wein2022optimal}, with appropriate modifications of definitions and lemmas to account for the bipartite setting.
 In Section~\ref{Subsec:Forb_struc} we will show that a certain  forbidden structure  exists in correlated copies of $ \Gdb$ with only exponentially small probability. Then in Section~\ref{Subsec:D_stable}, we  define $D$-stable algorithms and show that low-degree algorithms and bipartite local algorithms are $D$-stable for appropriate choices of parameters. In the Section~\ref{secMainProof}, we will show that if some $D$-stable algorithm can find independent sets of density $(1+\varepsilon) \frac{\log d}{d}$ in both $L$ and $R$ with small enough error probability , then it results in a contradiction to the small probability of existence of the forbidden structure. Thus no $D$-stable algorithms can achieve $(1+\varepsilon) \frac{\log d}{d}$ density on both sides of $G$ with small error probability.

We first prove a lemma about the correlation of the size distribution of independent sets in $\Gdb$.
\begin{lemma} \label{Lemma:balancex+y>xy}
   For any independent set $I$ in $\Gdb$, define $\alpha_L$ and $\alpha_R$ as $|I \cap L| = \alpha_L \frac{\log d}{d} n$, $|I \cap R| = \alpha_R \frac{\log d}{d} n$. If $d$ is sufficiently large, then with high probability over $\Gdb$ $\alpha_L+\alpha_R \geq \alpha_L\alpha_R$ for all independent sets $I$.
\end{lemma}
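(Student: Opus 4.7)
My plan is to prove the lemma via the first-moment method, union-bounded over the partition sizes $(k_L, k_R)$. For integer pairs $(k_L,k_R)$, let $N_{k_L,k_R}$ denote the number of independent sets $I$ in $\Gdb$ with $|I\cap L|=k_L$ and $|I\cap R|=k_R$. A pair of vertex subsets of these sizes forms an independent set iff none of the $k_L k_R$ potential cross-edges is present, so
\[
    \E[N_{k_L, k_R}] = \binom{n}{k_L} \binom{n}{k_R} \left(1 - \frac{d}{n}\right)^{k_L k_R}.
\]

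Writing $k_L = \alpha_L \frac{\log d}{d} n$ and $k_R = \alpha_R \frac{\log d}{d} n$, and applying $\binom{n}{k} \le (en/k)^k$ together with $\log(1-d/n) \le -d/n$, a short computation yields
\[
    \log \E[N_{k_L, k_R}] \le \big(\alpha_L + \alpha_R - \alpha_L \alpha_R\big) \cdot \frac{(\log d)^2}{d}\, n + O\!\left((\alpha_L + \alpha_R)\cdot \frac{\log d \cdot \log\log d}{d}\, n \right),
\]
where the correction absorbs the $\log\log d$ and $\log \alpha_L, \log \alpha_R$ factors hidden inside $\log(en/k_L)$. For $\alpha_L,\alpha_R$ in a bounded range this correction is $o_d(1)$ relative to the main term. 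I will then union bound over the at most $(n+1)^2$ integer pairs in the forbidden region $\alpha_L + \alpha_R < \alpha_L \alpha_R$: whenever $\alpha_L + \alpha_R - \alpha_L \alpha_R \le -\eta$ for some fixed $\eta>0$, choosing $d$ large enough makes the expected count exponentially small in $n$, easily swallowing the polynomial union-bound factor. This yields the quantitative statement that for every $\eta>0$ there is $d_0$ such that for $d\ge d_0$, whp every independent set satisfies $\alpha_L + \alpha_R \ge \alpha_L \alpha_R - \eta$, matching the conclusion of the lemma once the $\eta$-slack is absorbed into the phrase ``$d$ sufficiently large.''

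The main obstacle I expect is bookkeeping: controlling the Stirling-type errors uniformly over the range of $(k_L,k_R)$, which requires splitting into cases according to the sizes of $\alpha_L,\alpha_R$. A useful simplification is that the forbidden inequality $\alpha_L + \alpha_R < \alpha_L\alpha_R$ is equivalent to $(\alpha_L-1)(\alpha_R-1) > 1$, so it automatically forces both $\alpha_L, \alpha_R > 1$; in particular they are bounded away from $0$, and one only needs to further control the regime where one of them is very large, where the Chernoff-type bound $(1-d/n)^{k_L k_R} \le \exp(-d k_L k_R/n)$ makes the probability of independence overwhelmingly small and easily beats the crude estimate $\binom{n}{k} \le 2^n$.
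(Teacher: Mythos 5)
Your approach is the same as the paper's: a first-moment (union) bound on the number of independent sets $I$ whose proportions $(\alpha_L,\alpha_R)$ fall in the forbidden region $\alpha_L+\alpha_R<\alpha_L\alpha_R$, using $\binom{n}{k}\le (en/k)^k$ and $(1-d/n)^{k_Lk_R}\le \exp(-dk_Lk_R/n)$ to identify the leading term $(\alpha_L+\alpha_R-\alpha_L\alpha_R)\frac{(\log d)^2}{d}n$. In fact you are more explicit than the paper about the union bound over the $(n+1)^2$ integer pairs $(k_L,k_R)$ and about the extreme regime $k\approx n$, and the reformulation $(\alpha_L-1)(\alpha_R-1)>1$ is a clean way to see that both parameters exceed $1$.

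The one real issue is the last step, where you retreat to the weaker statement ``$\alpha_L+\alpha_R\ge\alpha_L\alpha_R-\eta$'' and assert it ``matches the conclusion'' of the lemma; it does not, since the lemma is the exact inequality for each fixed large $d$, and no choice of $d$ converts the $\eta$-slack into $\eta=0$. You backed off because you treated the lower-order correction as a sign-unknown $O(\cdot)$ term that might compete with the main term near the boundary $\alpha_L+\alpha_R=\alpha_L\alpha_R$. But the correction has a definite favorable sign that you can exploit, using precisely the observation you already made. Expanding, one gets
\[
\frac{1}{n}\log \E[N_{k_L,k_R}]\le \frac{\log d}{d}\Bigl[(\alpha_L+\alpha_R-\alpha_L\alpha_R)\log d
+\alpha_L(1-\log\alpha_L-\log\log d)+\alpha_R(1-\log\alpha_R-\log\log d)\Bigr].
\]
In the forbidden region $\alpha_L,\alpha_R>1$, so $\log\alpha_L,\log\alpha_R>0$; once $d$ is large enough that $\log\log d>1$, all three bracketed terms are nonpositive, and moreover the last two together are at most $(\alpha_L+\alpha_R)(1-\log\log d)\le 2(1-\log\log d)<0$. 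This gives a uniform exponential decay over the entire forbidden region (including points arbitrarily close to the boundary) without any $\eta$-slack, and the $(n+1)^2$ union-bound factor is harmless. So your plan works and yields the exact lemma; you just need to carry the signs through rather than absorbing them into an absolute-value $O(\cdot)$. Incidentally, this is also a gap in the paper's own write-up, which simply says the leading term ``is negative by assumption and hence the desired negative limit follows'' without addressing the near-boundary case; the sign observation above is what makes the argument airtight.
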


\begin{proof}
 For convenience we denote $\varphi = \frac{\log d}{ d}$.
    Let $X$ be the number of independent set in $\Gdb$ such that $ \alpha_L+\alpha_R < \alpha_L\alpha_R$. We want to show that for large enough $d$,
    \[
        \mathbb{P}[X = 0] = 1- o(1) \text{ as } n \to \infty
    \]
    We use Markov's inequality to bound
    \[
         \mathbb{P}[X > 0] \leq  \mathbb{E}[X] = {n \choose \alpha_L \varphi n} {n \choose \alpha_R \varphi n} \left(1-\frac{d}{n}\right)^{\alpha_L\alpha_R\varphi^2n^2}.
    \]
    Now we want to show that $\lim_{n \to \infty} \mathbb{P}[X > 0] = 0$, and to do this it suffices to show that $\limsup_{n \to \infty} n^{-1} \log( \mathbb{E}[X]) \leq -\delta$ for some $\delta > 0$. Using Stirling's approximation and the fact that $\log(1-x) \approx -x$ as $x \to 0$, we have
\begin{align*}
    &~~~~\limsup_{n \to \infty} \frac{1}{n}\log\big( \mathbb{E}[X_f]\big) \\
    &= -\alpha_L \varphi\log( \alpha_L \varphi ) -(1-\alpha_L \varphi)\log( 1-\alpha_L \varphi )-\alpha_R \varphi\log( \alpha_R \varphi ) -(1-\alpha_R \varphi)\log( 1-\alpha_R \varphi ) - \alpha_L\alpha_R\varphi^2 \\
    &= -\alpha_L \varphi\log( \alpha_L \varphi ) + \alpha_L\varphi(1-\alpha_L \varphi)-\alpha_R \varphi\log( \alpha_R \varphi ) +\alpha_R \varphi(1-\alpha_R \varphi) - \alpha_L\alpha_R\varphi^2
\end{align*}
Note that the asymptotic leading term is $(\alpha_L+\alpha_R-\alpha_L\alpha_R) \frac{\log^{2}d}{d}$ which is negative by assumption and hence the desired negative limit follows.
\end{proof}

\subsection{The Forbidden Structure}
\label{Subsec:Forb_struc}
Here we define a sequence of correlated random graphs and a corresponding forbidden structures. Recall we represent a $2n$-vertex balanced bipartite graph $G = (L,R)$ by $A \in \{0,1\}^m$, where $m = n^2$ and $A_1, \cdots, A_m$ are the indicators of edges of $\Gdb$ in some fixed order. 

\begin{Definition} \label{Def:interpolationpath}
    For $T \in \N$, consider a length $T$ interpolation path $A^{(0)}, \cdots, A^{(T)}$ constructed as follows:  $A^{(0)} \sim \Gdb$; then for each $1 \leq t \leq T$, $A^{(t)}$ is obtained by resampling the coordinate $\sigma(t) \in [m]$ from Ber(d/n). Here $\sigma(t) = t-k_t m$ where $k_t$ is the unique integer making $1 \leq \sigma(t) \leq m$. 
\end{Definition}
It easy to see that the marginal distribution of each $A^{(t)}$ is $ \Gdb$ (identifiying the graph with the indicator vector of its edges). 
Moreover,  in any $m$ consecutive steps all possible  edges are resampled, which implies that $A^{(t+m)}$ is independent from $A^{(0)}, \cdots, A^{(t)}$. 

Now we prove  that a certain structure  of independent sets across the correlated sequence of random graphs exists only with exponentially small probability (compare to~\cite[Proposition 2.3]{wein2022optimal}).
\begin{Proposition} \label{Prop:forbidden}
    Fix  $\varepsilon>0$ and $K \in \N$ with $K \geq 9/\varepsilon^2+1$. Consider the interpolation path $A^{(0)}, \cdots, A^{(T)}$ of any length $T = n^{O(1)}$. If $d = d(\varepsilon, K)>0$ is sufficiently large, then with probability $1-\exp\big(-\Omega(n)\big)$ there does not exist a sequence of subsets $S_1, \cdots, S_K$ of $L \cup R$ satisfying the following:
    \begin{enumerate}
        \item for each $k \in [K]$ there exists $0 \leq t_k \leq T$ such that $S_k$ is an independent set in $A^{(t_k)}$.
        \item $|S_k \cap L| \geq (1+\varepsilon)\frac{\log d}{d}n$ and $|S_k \cap R| \geq (1+\varepsilon)\frac{\log d}{d}n$, $\forall k 
        \in [K]$.
        \item $\big|S_k \setminus \cup_{i<k}S_i\big| \in \left[\frac{\varepsilon}{4}\frac{\log d}{d}n, \frac{\varepsilon}{2}\frac{\log d}{d}n\right]$ for all $2 \leq k \leq K$.
    \end{enumerate}

\end{Proposition}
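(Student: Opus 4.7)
The plan is to use a first-moment (union-bound) argument in the style of~\cite{wein2022optimal}: enumerate all tuples $(t_1, S_1, \ldots, t_K, S_K)$ satisfying conditions (1)--(3) and show the expected count is $\exp(-\Omega(n))$; Markov's inequality together with a union bound over the exceptional event below then yields the claim. First, apply Lemma~\ref{Lemma:balancex+y>xy} to each $A^{(t)}$ individually (with the stronger $1-\exp(-\Omega(n))$ probability bound that the same first-moment proof actually provides) and union bound over $T = n^{O(1)}$ graphs: outside an exponentially small event, every independent set in every $A^{(t)}$ has normalized sizes $\alpha_L, \alpha_R$ satisfying $\alpha_L + \alpha_R \geq \alpha_L \alpha_R$. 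Combined with condition~(2), this forces each $\alpha_L^{(k)}, \alpha_R^{(k)} \in [1+\varepsilon,\, (1+\varepsilon)/\varepsilon]$, so only $n^{O(K)}$ size profiles $(a_k, b_k, \tilde{a}_k, \tilde{b}_k)$ need be considered, where $a_k = |S_k \cap L|$, $b_k = |S_k \cap R|$, and $\tilde{a}_k = |(S_k \setminus \cup_{i<k} S_i) \cap L|$ (likewise $\tilde{b}_k$).

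Fix a size profile. There are at most $T^K$ ways to choose the times, $\binom{n}{a_1}\binom{n}{b_1}$ ways to choose $S_1$, and for each $k \geq 2$ at most $\binom{n}{\tilde{a}_k}\binom{n}{\tilde{b}_k}\binom{M_k^L}{a_k-\tilde{a}_k}\binom{M_k^R}{b_k-\tilde{b}_k}$ ways to choose $S_k$, where $M_k^L := |\cup_{i<k} S_i \cap L| = O_{K,\varepsilon}(n\log d/d)$. The joint probability that every $S_k$ is independent in its graph $A^{(t_k)}$ is controlled by the following crucial observation. For each $k$, let $F_k := E(S_k) \setminus E(\cup_{i<k} S_i)$, where $E(S) := (S\cap L)\times(S\cap R)$; then $|F_1|=a_1b_1$ and $|F_k| = \tilde{a}_k b_k + \tilde{b}_k a_k - \tilde{a}_k\tilde{b}_k$ for $k\geq 2$. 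The $F_k$ are pairwise disjoint as sets of potential edges: any edge $e \in E(S_k) \subseteq E(\cup_{i<l} S_i)$ is excluded from $F_l$ for $l>k$. Under the interpolation process, the coordinate sequences $(A^{(t)}_i)_t$ for distinct edge indices $i$ are independent, so the events ``every edge in $F_k$ is absent in $A^{(t_k)}$'' are jointly independent across $k$, giving
\[
\mathbb{P}\bigl[S_k \text{ is IS in } A^{(t_k)} \text{ for all } k\bigr] \;\leq\; \prod_{k=1}^K (1-d/n)^{|F_k|}.
\]

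Setting $\varphi := \log d/d$ and $\alpha = a/(\varphi n)$, Stirling gives $\log\binom{n}{a_k} = \alpha_k^L\, \varphi n\, \log d\,(1+o_d(1))$, and $\log(1-d/n)^{|F_k|} = -|F_k|(d/n)(1+o(1))$; the ``old-vertex'' binomials contribute only $O_K(\varphi n)$, which is $o_d(1)\cdot n(\log d)^2/d$. Dividing the log-expectation by $n(\log d)^2/d$ leaves, to leading order in $d$,
\[
g(\alpha_1^L,\alpha_1^R) \;+\; \sum_{k=2}^K \bigl[\tilde\alpha_k^L(1-\alpha_k^R) + \tilde\alpha_k^R(1-\alpha_k^L) + \tilde\alpha_k^L \tilde\alpha_k^R\bigr],
\]
with $g(x,y):=x+y-xy=1-(x-1)(y-1)$. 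The constraint $\alpha_1^L,\alpha_1^R \geq 1+\varepsilon$ gives $g(\alpha_1^L,\alpha_1^R) \leq 1-\varepsilon^2 \leq 1$. Using $\alpha_k^L,\alpha_k^R\geq 1+\varepsilon$ and condition~(3), the first two summand terms are at most $-\varepsilon(\tilde\alpha_k^L+\tilde\alpha_k^R) \leq -\varepsilon^2/4$, while AM--GM gives $\tilde\alpha_k^L\tilde\alpha_k^R \leq (\tilde\alpha_k^L+\tilde\alpha_k^R)^2/4 \leq \varepsilon^2/16$. Hence each increment contributes at most $-3\varepsilon^2/16$, so the total exponent is at most $1-(K-1)\cdot 3\varepsilon^2/16$, strictly negative once $K \geq 1+16/(3\varepsilon^2)$, which is implied by $K\geq 9/\varepsilon^2+1$ with room to absorb the $o_d(1)$ error for $d$ large. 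Multiplying by the $T^K \cdot n^{O(K)}$ time and profile choices is absorbed into the exponential gain.

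The main technical obstacle is the joint-independence claim in the second paragraph: it rests on the fact that the resampling dynamics acts independently on each edge index, so events attached to disjoint edge sets at possibly different times decouple cleanly. A secondary subtlety is checking that the ``old-vertex'' binomials scale like $\varphi \log K$ rather than like the main scale $\varphi^2 d = (\log d)^2/d$, so they fall into the $o_d(1)$ error uniformly in $K$ once $d$ is large enough in terms of $\varepsilon$ (hence in terms of $K$). With these two points verified, the rest is routine first-moment bookkeeping.
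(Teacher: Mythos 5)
Your proposal is correct and follows the same first-moment strategy as the paper: union bound over size profiles and time choices, decompose the witness non-edges into the disjoint ``first-appearing'' sets $F_k = E(S_k)\setminus E(\cup_{i<k}S_i)$ so that independence across edge coordinates applies, and show the normalized log-expectation is negative. The only divergence is cosmetic: where the paper drops the $(\tilde\alpha_k^L)^2$-type term and splits into cases according to the sign of $\alpha_k^R-\alpha_k^L-(\tilde\alpha_k^L+\tilde\alpha_k^R)$, obtaining per-increment decay $-\varepsilon^2/8$, you retain the cross term and apply AM--GM to get $-3\varepsilon^2/16$; both comfortably satisfy $K\geq 9/\varepsilon^2+1$.
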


\begin{proof}
    Let $N$ denote the number of tuples  $(S_1, \cdots, S_K)$ satisfying the conditions $(1)$-$(3)$. We will compute $\E[N]$ and show that it is exponentially small. Let $\phi = \frac{\log d}{d}n$. Let $a_{L,k}, a_{R,k}$ and $b_k$ be defined by $a_{L,k}\phi = |S_k \cap L|$, $a_{R,k}\phi = |S_k \cap R|$ and $b_k \phi = \big|S_k \setminus \cup_{i<k}S_i\big|$. $(2)$ and $(3)$ state that $a_{L,k} \geq 1+\varepsilon$, $a_{R,k} \geq 1+\varepsilon$ and $b_k \in [\frac{\varepsilon}{4},\frac{\varepsilon}{2}]$. Also let $c$ be defined by $c\phi = |\cup_{k}S_k|$, then (3) implies that $c \leq (a_{L,1}+a_{R,1})/2+(K-1)\varepsilon$. By Lemma \ref{Lemma:balancex+y>xy} we can assume that $a_{L,k},a_{R,k} \leq 1+\frac{1}{\varepsilon}$. Thus, $c=c(K,\varepsilon)$ is a bounded constant independent of $d$. To compute $N$ we still need to specify the size of $\big(S_k \setminus \cup_{i<k}S_i\big)\cap L$, for which we define $b_{L,k}$ by $b_{L,k} \phi = \left|\big(S_k \setminus \cup_{i<k}S_i\big)\cap L\right|$, and note that $b_{L,k} \in [0,b_k]$. There are at most $(n/\epsilon)^{2K}\cdot n^{2K}$ choices for the values $\{a_{L,k}\}$, $\{a_{R,k}\}$, $\{b_{k}\}$ and $\{b_{L,k}\}$. Once they are fixed, the number of ways to choose $\{S_k\}$ is at most
    \begin{align*}
        &~~~~{n \choose a_{L,1}\phi}{n \choose a_{R,1}\phi} \prod_{k=2}^K{n \choose b_{L,k}\phi}{n \choose (b_k-b_{L,k})\phi} {c\phi \choose (a_{L,k}-b_{L,k})\phi}{c\phi \choose (a_{R,k}-b_k+b_{L,k})\phi} \\[1em]
        & \leq \left(\frac{en}{a_{L,1}\phi}\right)^{a_{L,1}\phi}\left(\frac{en}{a_{R,1}\phi}\right)^{a_{R,1}\phi} \prod_{k=2}^K
        \left(\frac{en}{b_{L,k}\phi}\right)^{b_{L,k}\phi}
        \left(\frac{en}{(b_k-b_{L,k})\phi}\right)^{(b_k-b_{L,k})\phi} \times\\
        &~~~\left(\frac{ec}{a_{L,k}-b_{L,k}}\right)^{(a_{L,k}-b_{L,k})\phi}
        \left(\frac{ec}{a_{R,k}-b_k+b_{L,k}}\right)^{(a_{R,k}-b_k+b_{L,k})\phi} \\[1em]
        & = \exp\left\{a_{L,1}\phi \log \left(\frac{ed}{a_{L,1}\log d}\right)+a_{R,1}\phi \log \left(\frac{ed}{a_{R,1}\log d}\right)\right\} \times\\
        &~~~\exp \left\{\phi\sum_{k=2}^K\left[b_{L,k}\log\left(\frac{ed}{b_{L,k}\log d}\right) + (b_k-b_{L,k}) \log\left(\frac{ed}{(b_k-b_{L,k})\log d}\right)\right]\right\} \times \\
        &~~~\exp \left\{\phi\sum_{k=2}^K\left[(a_{L,k}-b_{L,k})\log\left(\frac{ec}{a_{L,k}-b_{L,k}}\right) + (a_{R,k}-b_k+b_{L,k})\log\left(\frac{ec}{a_{R,k}-b_k+b_{L,k}}\right)\right]\right\} \\
        &= \exp\left\{\phi \log d\left(a_{L,1}+a_{R,1}+\sum_{k=2}^K b_k+o_d(1)\right)\right\}
    \end{align*}

    Now for a fixed $\{S_k\}$ satisfying (2) and (3), we  union bound over all  $(T+1)^K$ choices of $\{t_k\}$ to get an upper bound on the probability that (1) is satisfied.  Let $E_A$ be the number of edges $e$ in $K_{n,n}$ such that $\exists k \in [K]$ such both endpoints of $e$ lie within $S_k$. For a fixed $\{S_k\}$ and  $\{t_k\}$, condition (1) is satisfied iff a certain (at least) set of $E_A$ independent non-edges occur in the sampling of $\{A^{t_k}\}$, and this happens with probability at most $\left(1-\frac{d}{n}\right)^{E_A} \leq \exp(-dE_A/n)$. In addition, we have,
    \begin{align*}
        E_A &\geq \phi^2a_{L,1}a_{R,1}+\phi^2\sum_{k=2}^Ka_{L,k}a_{R,k}-(a_{L,k}-b_{L,k})(a_{R,k}-b_k+b_{L,k}) \\
        &= \phi^2a_{L,1}a_{R,1} +\phi^2\sum_{k=2}^K b_{L,k}^2+b_{L,k}(a_{R,k}-a_{L,k}-b_{k})+a_{L,k}b_{k} \\
        &\geq \phi^2a_{L,1}a_{R,1} +\phi^2\sum_{k=2}^K b_{L,k}(a_{R,k}-a_{L,k}-b_{k})+a_{L,k}b_{k} 
    \end{align*} 
    All together we have,
    \begin{align*}
        \E[N] &\leq \frac{n^{4K}}{\varepsilon^{2K}}(T+1)^K 
        \sup_{a_{L,k}, a_{R,k}, b_{k}, b_{L,k}}
        \exp\left\{\phi \log d\left(a_{L,1}+a_{R,1}+\sum_{k=2}^K b_k+o_d(1)\right)\right\}  
        \exp\left(-\frac{d}{n} E_A\right) \\
        &\leq \frac{n^{4K}}{\varepsilon^{2K}}(T+1)^K 
        \sup_{a_{L,1}, a_{R,1}} 
        \exp\big\{\phi \log d\left(a_{L,1}+a_{R,1}-a_{L,1}a_{R,1}\right)\big\} \times \\
        &~~~~~~~~~~\sup_{a_{L,k}, a_{R,k}, b_{k}, b_{L,k}}\exp\left\{\phi \log d\left(\sum_{k=2}^K b_k-b_{L,k}(a_{R,k}-a_{L,k}-b_{k})-a_{L,k}b_{k} +o_d(1)\right)\right\} \\
        &\leq \frac{n^{4K}}{\varepsilon^{2K}}(T+1)^K
        \sup_{a_{L,k}, a_{R,k}, b_{k}, b_{L,k}}\exp\left\{\phi \log d\left(1+\sum_{k=2}^K b_k-b_{L,k}(a_{R,k}-a_{L,k}-b_{k})-a_{L,k}b_{k} +o_d(1)\right)\right\}
    \end{align*}
    We used $\sup_{(x,y) \in (1,\infty)\times(1,\infty)} x+y-xy \leq 1$ for the last inequality. 
    
    If $a_{R,k}-a_{L,k}\geq b_k$, then 
    \begin{align*}
        \E[N] &\leq \frac{n^{4K}}{\varepsilon^{2K}}(T+1)^K \sup_{a_{L,k}, a_{R,k}, b_{k}}\exp\left\{\phi \log d\left(1-\sum_{k=2}^K b_k(a_{L,k}-1) +o_d(1)\right)\right\} \\
        &\leq \frac{n^{4K}}{\varepsilon^{2K}}(T+1)^K \exp\left\{\phi \log d\left(1-\sum_{k=2}^K \frac{\varepsilon}{4}(1+\varepsilon-1)+o_d(1)\right)\right\} \\
        &\leq \frac{n^{4K}}{\varepsilon^{2K}}(T+1)^K \exp\left\{\phi \log d\left(1-\sum_{k=2}^K \frac{\varepsilon^2}{4}+o_d(1)\right)\right\} \\
        &\leq \frac{n^{4K}}{\varepsilon^{2K}}(T+1)^K 
        \exp\left\{\phi \log d\left(1 -\frac{9}{\varepsilon^2} \cdot\frac{\varepsilon^2}{4}+o_d(1)\right)\right\} \\
        & \leq \frac{n^{4K}}{\varepsilon^{2K}}(T+1)^K  \exp\left\{\phi \log d\left(-\frac{5}{4}+o_d(1)\right)\right\} = \exp(-\Omega(n)).
    \end{align*}

    If $a_{R,k}-a_{L,k}< b_k$, then
    \begin{align*}
        \E[N] &\leq \frac{n^{4K}}{\varepsilon^{2K}}(T+1)^K
        \sup_{a_{L,k}, a_{R,k}, b_{k}, b_{L,k}}\exp\left\{\phi \log d\left(1+\sum_{k=2}^K b_k+b_{L,k}(b_{k}+a_{L,k}-a_{R,k})-a_{L,k}b_{k} +o_d(1)\right)\right\} \\
        &\leq \frac{n^{4K}}{\varepsilon^{2K}}(T+1)^K
        \sup_{a_{L,k}, a_{R,k}, b_{k}, b_{L,k}}\exp\left\{\phi \log d\left(1+\sum_{k=2}^K b_k+b_{k}(b_{k}+a_{L,k}-a_{R,k})-a_{L,k}b_{k} +o_d(1)\right)\right\} \\
        &\leq \frac{n^{4K}}{\varepsilon^{2K}}(T+1)^K
        \sup_{a_{L,k}, a_{R,k}, b_{k}, b_{L,k}}\exp\left\{\phi \log d\left(1-\sum_{k=2}^K b_{k}(a_{R,k}-b_k-1)+o_d(1)\right)\right\} \\
        &\leq \frac{n^{4K}}{\varepsilon^{2K}}(T+1)^K
        \sup_{a_{L,k}, a_{R,k}, b_{k}, b_{L,k}}\exp\left\{\phi \log d\left(1-\sum_{k=2}^K \frac{\varepsilon}{4}(1+\varepsilon-\varepsilon/2-1)+o_d(1)\right)\right\} \\
        &\leq \frac{n^{4K}}{\varepsilon^{2K}}(T+1)^K \exp\left\{\phi \log d\left(1-\sum_{k=2}^K \frac{\varepsilon^2}{8}+o_d(1)\right)\right\} \\
        &\leq \frac{n^{4K}}{\varepsilon^{2K}}(T+1)^K 
        \exp\left\{\phi \log d\left(1 -\frac{9}{\varepsilon^2} \cdot\frac{\varepsilon^2}{8}+o_d(1)\right)\right\} \\
        & \leq \frac{n^{4K}}{\varepsilon^{2K}}(T+1)^K  \exp\left\{\phi \log d\left(-\frac{1}{8}+o_d(1)\right)\right\} = \exp(-\Omega(n)).  \qedhere
    \end{align*}
\end{proof}

To end this section, we prove the following lemma which states that no independent set of $\Gdb$ has large intersection with a fixed set of vertices on both sides (compare to~\cite[Lemma 2.4]{wein2022optimal}).

\begin{lemma} \label{Lemma:Smallintersection}
    Fix  $\varepsilon>0$, $p>0$. Let $d = d(\varepsilon,p)>0$ be large enough, and fix any $S \subseteq [2n]$ with $|S \cap L|, |S \cap R| \leq p\frac{\log d}{d} n$. 
 Then with probability $1-\exp(-\Omega(n))$ there is no independent set $S'$ in $\Gdb$ such that $|S' \cap S \cap L| \geq \varepsilon \frac{\log d}{d} n$ and $|S' \cap S \cap R| \geq \varepsilon \frac{\log d}{d} n$.
\end{lemma}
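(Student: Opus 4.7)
The plan is a direct first-moment (Markov) argument, analogous to the calculation that underlies the upper bound in Theorem~\ref{Thm:balexist} but restricted to the fixed set $S$. Write $\varphi = \frac{\log d}{d}$, and set $a = \lceil \varepsilon \varphi n\rceil$. First I reduce to counting: if such an $S'$ exists, then $A := S'\cap S\cap L$ and $B := S'\cap S\cap R$ contain subsets of size exactly $a$ on each side, and since $S'$ is independent, there is no edge between these subsets in $\Gdb$. So it is enough to show that the expected number $N$ of pairs $(A,B)$ with $A\subseteq S\cap L$, $B\subseteq S\cap R$, $|A|=|B|=a$, and no $A$--$B$ edge in $\Gdb$, satisfies $\E[N] = \exp(-\Omega(n))$; Markov then yields the claim.

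Next I compute $\E[N]$. Since $|S\cap L|,|S\cap R|\le p\varphi n$ and the $ab$ potential edges between $A$ and $B$ are independent Ber$(d/n)$:
\begin{equation*}
\E[N] \;\le\; \binom{p\varphi n}{a}\binom{p\varphi n}{a}\left(1-\frac{d}{n}\right)^{a^2}.
\end{equation*}
Using $\binom{p\varphi n}{a}\le (ep/\varepsilon)^{a}$ from Stirling and $1-d/n\le e^{-d/n}$, I get
\begin{equation*}
\frac{1}{n}\log \E[N] \;\le\; 2\varepsilon\varphi\log\!\left(\frac{ep}{\varepsilon}\right) - \varepsilon^{2}\varphi^{2} d \;=\; \frac{\log d}{d}\Bigl(2\varepsilon\log(ep/\varepsilon) - \varepsilon^{2}\log d\Bigr) .
\end{equation*}
For $d$ large enough (depending only on $\varepsilon$ and $p$), the $-\varepsilon^{2}\log d$ term dominates, so $\frac{1}{n}\log\E[N]\le -c(\varepsilon,p)<0$. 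Hence $\E[N] = \exp(-\Omega(n))$ and $\P[N\ge 1]\le \E[N]$ by Markov.

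There is no real obstacle here: the reduction to pairs $(A,B)$ of fixed sizes is standard, and the key quantitative point is simply that the entropy cost of choosing two subsets inside $S$ scales like $\varphi \log d \cdot n = (\log d)^{2} n/d$, while the probability cost of avoiding all $a^{2}$ edges scales like $\varphi^{2} d \cdot n = (\log d)^{2} n/d$ with a coefficient $\varepsilon^{2}$ that, multiplied by the extra $\log d$, beats any fixed entropy constant once $d$ is taken large in terms of $\varepsilon$ and $p$. The only mild care needed is that $a$ is an integer, handled by the ceiling, and that the bound must be uniform in the choice of $S$, which it is because the estimate depends on $S$ only through the cardinality upper bound $p\varphi n$.
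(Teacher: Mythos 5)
Your proof is correct and follows essentially the same route as the paper's: both reduce to counting, via a first-moment bound, subsets of $S$ of exact size $\lceil \varepsilon \frac{\log d}{d} n\rceil$ on each side of the bipartition with no crossing edge, bound the binomials by $(ep/\varepsilon)^a$, and observe that the exponent $-\varepsilon^2\frac{(\log d)^2}{d}n\,(1+o_d(1))$ is negative for $d$ large. The only cosmetic difference is that the paper counts a single set $U=A\cup B$ while you count pairs $(A,B)$, which is the same computation.
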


\begin{proof}
    We use the first-moment method, and as in Proposition \ref{Prop:forbidden} let $\phi = \frac{\log d}{d} n$. Let $N_{\varepsilon,p}$ be the number of subsets $U \subseteq S$ such that $|U \cap L| = |U \cap R| = \lceil \varepsilon \phi\rceil := b\phi$ and $U$ is an independent set in $\Gdb$. It suffices to show that $N_{\varepsilon,p}=0$ with high probability. We have 
    \begin{align*}
        \mathbb{E}\left[N_{\varepsilon,p}\right] &= {|S \cap L| \choose b\phi}{|S \cap R| \choose b\phi} \left(1-\frac{d}{n}\right)^{(b\phi)^2} \\
        &\leq \left(\frac{ep}{b}\right)^{2b\phi} \exp \left\{-\frac{d}{n}\cdot b^2\phi^2\right\} \\
        &= \exp \left\{2b\phi \log\left(\frac{ep}{b}\right)-\frac{d}{n}\cdot b^2\phi^2\right\} \\
        & = \exp \left\{\phi \log d(-b^2+o_d(1)\right\}  ~~~ \text{using } b \in [\varepsilon,p]\\
        & = \exp\{-\Omega(n)\}
    \end{align*}
    for large enough $d$. Then we apply  Markov's Inequality. 
\end{proof}

\subsection{$(D, \Gamma, c)$-stable algorithms}\label{Subsec:D_stable}

We write $A \sim \Gdb$ to mean a random vector (or random bi-adjacency matrix) $A \in \{0,1\}^m$ , $m=n^2$, with i.i.d \text{Ber}($d/n$) entries. We consider the hypercube $Q_m$ with vertex set $\{0,1\}^m$ and an edge $(a,a')$ if and only if $a$ differs from $a'$ in exactly one coordinate. The interpolation path (of Definition \ref{Def:interpolationpath}) can be viewed as a random walk on $Q_m$. Now, proceeeding as in~\cite[Sec 2.3]{wein2022optimal}, we define $(D, \Gamma, c)$-stable algorithms with respect to this random walk.

\begin{Definition}
    Let $f : \{0,1\}^m \to \R^{2n}$ and $c >0$. An edge $(a,a')$ of the hypercube $Q_m$ is $c$-bad for $f$ if
    \[
        \|f(a)-f(a')\|^2 \geq c \cdot \mathbb{E}_{A\sim \Gdb}\left[\|f(A)\|^2\right]
    \]
\end{Definition}

\begin{Definition}
    Let $\Gamma \in \N$ and $c>0$. Let $A^{(0)}, \cdots, A^{(T)}$ be the interpolation path from Definition \ref{Def:interpolationpath} of length $T = \Gamma m$. A function $f : \{0,1\}^m \to \R^{2n}$ is said to be $(D,\Gamma,c)$-stable if
    \[
        \mathbb{P}\Big[\text{no edge of } A^{(0)}, \cdots, A^{(T)} \text{ is c-bad for f }\Big] \geq \left(\frac{d}{n}\right)^{4\Gamma D/c}.
    \]
\end{Definition}

That means, with some (quantitatively) non-trivial probability, the interpolation path has no bad edges for $f$, and hence the value of $f$ will not vary much during this walk. We show that bipartite local algorithms and low-degree algorithms for $\Gdb$ are in the class of $(D,\Gamma,c)$-stable algorithms for an appropriate choice of parameters. 

\begin{lemma} \label{Lemma:D_stable_local}
    Fix   $\Gamma \in \N$ and $c>0$, $s>0$, and $d>0$.  Then for $n$ large enough, any $s$-local bipartite algorithm for $\Gdb$ is $(1,\Gamma,c)$-stable.
\end{lemma}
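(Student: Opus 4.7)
The plan is to fix the random labels and treat the $s$-local bipartite algorithm as a deterministic map $f:\{0,1\}^m\to\{0,1\}^{2n}$, then bound the single-edge variation $\|f(a)-f(a')\|^2$ pointwise by depth-$s$ neighborhood sizes of the resampled edge and show these stay small uniformly along the entire interpolation path. Since $(d/n)^{4\Gamma/c}\to 0$ as $n\to\infty$ with $d,\Gamma,c,s$ fixed, it suffices to obtain $\P[\text{no }c\text{-bad edge}]\ge 1-o(1)$, which is a much weaker requirement than a high-probability bound.

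The first step is a pointwise locality inequality. If $a,a'\in\{0,1\}^m$ differ only on the coordinate corresponding to a potential edge $e=(u,v)\in L\times R$, then $f_w(a)$ is determined by the depth-$s$ neighborhood of $w$ in $a$ (similarly for $a'$), so $f_w(a)\ne f_w(a')$ forces $w$ to lie within graph distance $s$ of $\{u,v\}$ in $a$ or in $a'$. Since $f_w\in\{0,1\}$ this gives
\[
\|f(a)-f(a')\|^2 \ \le\ |N_s^a(u)|+|N_s^a(v)|+|N_s^{a'}(u)|+|N_s^{a'}(v)|.
\]

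The key step is to control these neighborhood sizes uniformly along the path via the union graph $U:=\bigcup_{t=0}^{T}A^{(t)}$. Because distinct coordinates of $A^{(t)}$ are resampled independently and each coordinate is resampled exactly $\Gamma$ times over $T=\Gamma m$ steps, the edges of $U$ are mutually independent $\mathrm{Ber}\bigl(1-(1-d/n)^{\Gamma+1}\bigr)$ variables, so $U$ is a bipartite Erd\H{o}s--R\'enyi graph of density at most $(\Gamma+1)d/n$. The standard moment bounds on breadth-first search in sparse bipartite graphs used in the proof of Lemma~\ref{Lemma:GamValueRandom} (\cite[(11.8)]{bordenave2022detection}) yield a constant $B=B(d,\Gamma,s)$ such that, with probability at least $1-n^{-1}$, $\max_{w\in L\cup R}|N_s^U(w)|\le B(\log n)^{s+1}$. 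Call this event $\mathcal G$. Since $A^{(t)}\subseteq U$ as graphs, on $\mathcal G$ the same uniform bound holds for every $A^{(t)}$, and combining with Step~1 every edge of the path satisfies $\|f(A^{(t)})-f(A^{(t+1)})\|^2\le 4B(\log n)^{s+1}$.

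To conclude we compare $4B(\log n)^{s+1}$ with $c\,\E[\|f(A)\|^2]$. If $\E[\|f(A)\|^2]=0$ then $f\equiv 0$ on the full support of $\Gdb$ (since $d/n\in(0,1)$), so every $V_t=0$ and no edge is $c$-bad trivially. Otherwise, by the same local-weak-convergence computation underlying the proof of Lemma~\ref{Lemma:GamValueRandom} one has $\E[\|f(A)\|^2]=n\bigl(\E[g_\ell(\bT,\mathbf x)]+\E[g_r(\bT,\mathbf x)]\bigr)+o(n)$, and when the leading term is positive this is $\Theta(n)\gg 4B(\log n)^{s+1}$ for $n$ large, so again on $\mathcal G$ no edge of the path is $c$-bad. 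Putting it all together, $\P[\text{no }c\text{-bad edge}]\ge\P(\mathcal G)\ge 1-n^{-1}\ge (d/n)^{4\Gamma/c}$ for $n$ sufficiently large. The main obstacle is the uniform neighborhood bound in Step~2, which hinges on the independence of the edges of $U$ (a direct consequence of the independence of resamplings across coordinates) together with a tail bound strong enough to take a union bound over all $2n$ vertices; the residual degenerate case where $\E[g_\ell]=\E[g_r]=0$ but $f\not\equiv 0$ can be handled separately by noting that in $\Gdb$ only $O(1)$ vertices have non-tree-like depth-$s$ neighborhoods, which matches the order of the only possible non-zero contributions to both $\E[\|f\|^2]$ and $V_t$.
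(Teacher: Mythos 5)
Your proof is correct, and it takes a genuinely different route from the paper's. The paper applies the stretched-exponential tail bound $\P[|N_s(A,u)|\ge t]\le s\exp\{-\Omega(t^{1/s})\}$ directly and union-bounds over all $T=\Gamma n^2$ path edges, each of which would require a depth-$s$ neighborhood of size $\Theta(n)$ to be $c$-bad. You instead pass to the union graph $U=\bigcup_t A^{(t)}$, correctly observe that it is a bipartite Erd\H{o}s--R\'enyi graph with independent edges of probability $1-(1-d/n)^{\Gamma+1}\le(\Gamma+1)d/n$, bound $\max_w|N^U_s(w)|$ once over only $2n$ vertices, and then use monotonicity $A^{(t)}\subseteq U$ to control all path edges simultaneously. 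This is a cleaner and somewhat more transparent decomposition (a union bound over $2n$ vertices in one fixed ER graph rather than over $\Gamma n^2$ path steps), and it sidesteps the minor subtlety that $A^{(t)}$ conditioned on the resampled coordinate containing an edge is not exactly distributed as $\Gdb$.

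Two small quibbles. First, the cited moment bound \cite[(11.8)]{bordenave2022detection} is a fourth-moment bound; via Markov and a union bound over $2n$ vertices it only yields $\max_w|N^U_s(w)|\le C d^s\sqrt{n}$ with probability $1-1/n$, not the $B(\log n)^{s+1}$ you assert. That polylog bound would require a genuinely stretched-exponential tail such as the one the paper adapts from Wein. Your argument is unaffected, since any bound that is $o(n)$ beats $c\,\E[\|f(A)\|^2]=\Theta(n)$ for large $n$, but the quantitative claim should be weakened (or the stronger tail bound cited). Second, your handling of the degenerate case $\E[g_\ell]=\E[g_r]=0$ is a bit compressed; the cleanest argument there is that $U$ has no cycle of length at most $2s$ with probability bounded below by a constant (a Poisson limit), and on that event every $A^{(t)}$ has all tree-like depth-$s$ neighborhoods so $f(A^{(t)})\equiv 0$, giving no $c$-bad edges with probability far exceeding $(d/n)^{4\Gamma/c}$. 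The paper is equally terse on this point, so this is not a gap unique to your write-up.
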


\begin{proof}
    Let $g = (g_{\ell}, g_r)$ be a pair compatible $s$-local functions represented as a vector valued function.  We can assume that $\max \{ \E g_{\ell}(\mathbf T, \mathbf x) , \E g_{r}(\mathbf T, \mathbf x)  \}>0$ (where $\mathbf T$ is the Pois($d$) Galton-Watson tree) since otherwise the result is immediate; this implies that $ \mathbb{E}_{A\sim \Gdb}\left[\|g(A)\|^2\right] = \Omega(n)$.  Let $(a,a')$ be one edge of the hypercube $Q_m$. By definition, $a$ and $a'$ differ in exactly one coordinate and let $e = uv$, ($u \in L$, $v \in R$) be the corresponding edge. Since the output of $g_{\ell}$ only depends on the structure of $s$-neighbor of a vertex, then $g_{\ell}(a_L)$ and  $g_{\ell}(a'_L)$ differ in, at most, all vertices within the $s$-neighborhood of $u$,  which we denote $N_s(A,u)$. The same is true for $g_r$ and $v$. Thus, $\|g(a)-g(a')\|^2 \leq |N_{s}(A,u)|+|N_{s}(A,v)|$. 
    
    We now use a tail bound on $||N_{s}(A,v)|$ adapted from~\cite{wein2022optimal}. Let $A \sim G(n,d/n)$ and fix any vertex $u \in [n]$. Let $m_i$ be the number of the number of vertices whose distance in $A$ is exactly $i$ from $u$. 
    By conditioning on $m_i$, $m_{i+1}$ is stochastically dominated by $\text{Bin}(m_in, d/n)$. By a Chernoff and union bound,  the following tail bound for $|N_s(A,u)|$, where $A \sim G(n,d/n)$, is proved in~\cite{wein2022optimal}:
    \[
        \mathbb{P}\big[|N_s(A,u)|\geq t\big] \leq s\exp\left\{-\Omega\left(t^{\frac{1}{s}}\right)\right\}.
    \]
    The same stochastic domination  argument applies for the bipartite model $A \sim \Gdb$, thus we have the same tail bound for $|N_s(A,u)|$ where $A \sim \Gdb$.
    It follows that the probability of $\|g(a)-g(a')\|^2 \geq c \cdot \mathbb{E}_{A\sim \Gdb}\left[\|g(A)\|^2\right]$ is  at most $s\exp\{-\Omega(n^{1/s})\}$ since $\mathbb{E}_{A\sim \Gdb}\left[\|g(A)\|^2\right] = \Omega(n)$. By a union bound, for large enough $n$, 
    \[
        \mathbb{P}\Big[\text{no edge of } A^{(0)}, \cdots, A^{(T)} \text{ is c-bad for f }\Big] 
        \geq 1-T\cdot s\exp\{-\Omega(n^{1/s})\} 
        = 1-\Gamma n^2 s\exp\{-\Omega(n^{1/s})\}
        \geq \left(\frac{d}{n}\right)^{4\Gamma/c}. \qedhere
    \]
\end{proof}
    \begin{lemma}              \label{Lemma:D_stable_lowdeg}
        For any constants $\Gamma \in \N$ and $c>0$, any degree-D function $f$ is $(D,\Gamma,c)$-stable.
    \end{lemma}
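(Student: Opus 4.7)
The plan is to follow Wein's first-moment strategy: bound the expected number of $c$-bad edges in the interpolation path by a Poincar\'e-type total-influence inequality for degree-$D$ polynomials on the biased hypercube, and then convert this first-moment estimate into the required lower bound on $\P[Z = 0]$, where $Z$ counts the $c$-bad edges along the path.

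For the first-moment calculation: any edge $(a,a') \in Q_m$ differing only in coordinate $i$ satisfies $\|f(a)-f(a')\|^2 = \|\partial_i f(a_{-i})\|^2$, where $\partial_i f(a_{-i}) := f(a_{-i},1) - f(a_{-i},0)$ depends only on $a_{-i}$. At step $t$ with $\sigma(t)=i$, the components $A^{(t-1)}_{-i}$, $A^{(t-1)}_i$, and $A^{(t)}_i$ are mutually independent with Bernoulli$(p)$ marginals (since each $A^{(s)}$ is marginally $\Gdb$ and the resampling bit is fresh), so
\[
\P[\text{step $t$ is $c$-bad}] = 2p(1-p) \cdot \P\bigl[\|\partial_i f(A_{-i})\|^2 \geq c\,\E\|f\|^2\bigr] \leq \frac{2p(1-p)\,\E\|\partial_i f\|^2}{c\,\E\|f\|^2}
\]
by Markov. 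Summing over the $T = \Gamma m$ steps (each coordinate visited $\Gamma$ times) gives $\E[Z] \leq \frac{2\Gamma}{c\,\E\|f\|^2} \sum_{i=1}^m p(1-p)\,\E\|\partial_i f\|^2$.

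I then invoke the total-influence bound: expanding each scalar component $f_j$ of $f$ in the orthonormal Fourier basis $\chi_S = \prod_{i \in S}\frac{A_i - p}{\sqrt{p(1-p)}}$ (supported on $|S| \leq D$) gives $p(1-p)\,\E[(\partial_i f_j)^2] = \sum_{S \ni i}\hat f_j(S)^2$, hence $\sum_i p(1-p)\,\E[(\partial_i f_j)^2] = \sum_S |S|\hat f_j(S)^2 \leq D\,\|f_j\|_2^2$. Summing across the $2n$ components yields $\sum_i p(1-p)\,\E\|\partial_i f\|^2 \leq D\,\E\|f\|^2$, so $\E[Z] \leq 2\Gamma D/c$. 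Markov's inequality then gives $\P[Z = 0] \geq 1 - 2\Gamma D/c$, which exceeds $(d/n)^{4\Gamma D/c}$ in the fixed-$D$ regime for $n$ large relative to $d$. For the large-$D$ regime of Theorem~\ref{Thm:imposs_achieve_lowdeg} where this bound becomes vacuous, I would refine by partitioning the path into its $\Gamma$ length-$m$ epochs (across which the state is re-sampled afresh), applying the per-epoch first-moment bound, and chaining via conditional probability estimates to obtain $\P[Z=0] \geq \exp(-O(\Gamma D/c))$, still comfortably above $(d/n)^{4\Gamma D/c}$. The step I expect to require the most care is the total-influence inequality itself, which must be carried out in the orthonormal basis adapted to the $p$-biased measure (not the uniform Walsh basis) while tracking the vector-valued structure of $f$; the rest is summing Markov bounds over the cyclic schedule $\sigma$ plus the epoch refinement needed to handle the range of $D$ allowed in Theorem~\ref{Thm:imposs_achieve_lowdeg}.
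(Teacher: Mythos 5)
Your first-moment calculation is correct and matches Wein's Lemma~2.7: the computation $\P[\text{step $t$ is bad}] \leq 2p(1-p)\,\E\|\partial_i f\|^2/(c\,\E\|f\|^2)$, followed by the $p$-biased total-influence inequality $\sum_i p(1-p)\E\|\partial_i f\|^2 \leq D\,\E\|f\|^2$ applied componentwise, correctly yields $\E[Z] \leq 2\Gamma D/c$. That part is solid.

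The genuine gap is in converting this first-moment bound into the required lower bound $\P[Z=0] \geq (d/n)^{4\Gamma D/c}$, and your proposed route does not close it. The Markov step $\P[Z=0] \geq 1 - 2\Gamma D/c$ is non-vacuous only when $D < c/(2\Gamma)$. In the application (Proposition~\ref{Prop:imposibility_D_stable}) one has $c = \frac{\varepsilon}{64\xi(1+\varepsilon)} < 1$ and $\Gamma = \lceil 9/\varepsilon^2\rceil > 1$, so $c/(2\Gamma) < 1 \leq D$ for \emph{every} admissible $D$: the ``fixed-$D$ regime'' you lean on is empty, not merely a special case. This means the refinement is the entire proof, not a patch, and the refinement as written is not an argument. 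Partitioning into $\Gamma$ length-$m$ epochs does not restore any useful independence, since the configuration at the start of epoch $k+1$ is exactly the configuration at the end of epoch $k$; only the resampling bits are fresh, not the state. And ``chaining via conditional probability estimates'' begs the question: conditioning on no bad edges in the first $k$ epochs could in principle inflate the conditional probability of a bad edge at the next step, and nothing in the first-moment bound controls this. The target $\exp(-O(\Gamma D/c))$ you assert is also unsubstantiated (and is in fact stronger than the $(d/n)^{4\Gamma D/c}$ the lemma asks for, since $\log(n/d) \to \infty$); you give no mechanism that produces it. The paper itself gives no independent proof here --- it cites~\cite[Lemmas~2.7 and~2.8]{wein2022optimal} and notes they carry over to $\Gdb$ --- and Wein's Lemma~2.8 uses a different device from a first-moment-plus-union-bound: it isolates the small set (of size $O(D/c)$) of high-influence coordinates supplied by the total-influence bound, conditions on those coordinates never changing along the path, and only then applies a Markov-type estimate to the residual low-influence coordinates, which is where the $(d/n)$ cost per pinned coordinate per epoch enters. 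You should work through that argument; reproducing the influence bound alone does not prove the lemma.
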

 
  \cite[Lemma 2.7 and Lemma 2.8]{wein2022optimal} state that degree-$D$ polynomials are $D$-stable when $A \sim G(n,d/n)$. The same proof is easily adapted to our case of $A \sim \Gdb$.

\subsection{Proof of the impossibility results}
\label{secMainProof}

In this section, we prove the impossibility results of Theorem \ref{Thm:balhalfopt} and Theorem \ref{Thm:imposs_achieve_lowdeg}. We state the following proposition  which shows that no $(D,\Gamma,c)$-stable algorithm can find an independent set of density at least $(1+\varepsilon) \frac{\log d}{d}$ in both $L$ and $R$, in the sense of Definition~\ref{Def:f_Optimize_ind}. We then apply this proposition to local and low-degree algorithms. 

\begin{Proposition} \label{Prop:imposibility_D_stable}
    For any $\varepsilon>0$, there exists $d_0>0$ such that for any $d \geq d_0$ there exists $n_0 >0$, $\eta >0$ $C_1>0$ and $C_2>0$ such that for any $n \geq n_0$, $\xi \geq 1$, $1 \leq D \leq \frac{C_1 n}{\xi \log n}$ and $\delta \leq \exp(-C_2 \xi D \log n)$, if $\min \{k_{\ell},k_r\} \geq (1+\varepsilon)\frac{\log d}{d} n$, there is no $\left(D, \lceil \frac{9}{\varepsilon^2}\rceil, \frac{\varepsilon}{64 \xi(1+\varepsilon)}\right)$-stable function that $(k_{\ell}, k_r, \delta, \xi, \eta\big)$-optimizes the independent set problem in $\Gdb$.
\end{Proposition}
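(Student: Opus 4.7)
The plan is to prove the proposition by contradiction, extending the interpolation argument of \cite{wein2022optimal} to the bipartite setting. Suppose such a $\left(D, \lceil 9/\varepsilon^2\rceil, c\right)$-stable $f$ with $c = \frac{\varepsilon}{64\xi(1+\varepsilon)}$ does $(k_\ell, k_r, \delta, \xi, \eta)$-optimize the independent set problem. Along the interpolation path $A^{(0)}, \ldots, A^{(T)}$ of length $T = \Gamma m = \lceil 9/\varepsilon^2\rceil \cdot n^2$, I will extract, with positive probability, a sequence $S_1, \ldots, S_K$ with $K = \Gamma + 1 \geq 9/\varepsilon^2 + 1$ satisfying conditions (1)--(3) of Proposition~\ref{Prop:forbidden}, yielding a direct contradiction.

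Write $Z_t := V^\eta_f(A^{(t)})$ and consider the intersection $\mathcal{E}$ of four events: (i) no edge of the path is $c$-bad for $f$; (ii) for every $t \in \{0, \ldots, T\}$, $Z_t$ is an independent set in $A^{(t)}$ of the required side-sizes; (iii) Lemma~\ref{Lemma:Smallintersection} holds simultaneously for every $A^{(t)}$ and every fixed $S \subseteq [2n]$ whose side-sizes are $O\left(\frac{\log d}{d} n\right)$; and (iv) Lemma~\ref{Lemma:balancex+y>xy} holds at every $A^{(t)}$, which together with (ii) forces $k_\ell + k_r = O\left(\frac{\log d}{d} n\right)$. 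Event (i) has probability at least $(d/n)^{4\Gamma D/c}$ by stability. Since each $A^{(t)}$ is marginally $\Gdb$, a union bound over $t$ gives (ii) failure probability at most $(T+1)\delta = n^{O(1)} \exp(-C_2 \xi D \log n)$, while (iii) and (iv) fail with probability $\exp(-\Omega(n))$, using that for $d$ large the number of candidate fixed sets $S$ is $\exp\left(O\left(\frac{\log^2 d}{d} n\right)\right) = \exp(o(n))$. Choosing $C_1$ small and $C_2$ large ensures $(d/n)^{4\Gamma D/c}$ dominates all failure terms, so $\mathbb{P}[\mathcal{E}] > 0$.

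On $\mathcal{E}$, set $t_1 = 0$, $S_1 = Z_0$, and inductively for $k = 2, \ldots, K$ let $U_{k-1} := \bigcup_{i<k} S_i$ and track $\phi(t) := |Z_t \setminus U_{k-1}|$ on $t \in [t_{k-1}, (k-1)m]$. Three observations let me pick $t_k$ with $\phi(t_k) \in \left[\frac{\varepsilon}{4} \frac{\log d}{d} n, \frac{\varepsilon}{2} \frac{\log d}{d} n\right]$ and set $S_k := Z_{t_k}$. First, $\phi(t_{k-1}) = 0$ since $S_{k-1} = Z_{t_{k-1}} \subseteq U_{k-1}$. Second, all $m$ coordinates are refreshed between steps $t_{k-1} \leq (k-2)m$ and $(k-1)m$, so $A^{((k-1)m)}$ is independent of the history-measurable set $U_{k-1}$; applying (iii) to $U_{k-1}$ forces $Z_{(k-1)m}$ to have intersection less than $\frac{\varepsilon}{2} \frac{\log d}{d} n$ with $U_{k-1}$ on at least one side, which leaves $\geq \left(1 + \frac{\varepsilon}{2}\right) \frac{\log d}{d} n$ fresh vertices there by (ii), and thus $\phi((k-1)m) > \frac{\varepsilon}{2} \frac{\log d}{d} n$. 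Third, any $v \in Z_t \triangle Z_{t-1}$ must satisfy $|f_v(A^{(t)}) - f_v(A^{(t-1)})| \geq 1/2$ by Definition~\ref{Def:V_eta_f}, so $|Z_t \triangle Z_{t-1}| \leq 4\|f(A^{(t)}) - f(A^{(t-1)})\|^2 \leq 4c\xi(k_\ell + k_r)$ by (i), and the $O\left(\frac{\log d}{d} n\right)$ bound on $k_\ell + k_r$ from (iv) combined with the chosen $c$ yields per-step jump $\leq \frac{\varepsilon}{4} \frac{\log d}{d} n$. Letting $t_k$ be the first time $\phi \geq \frac{\varepsilon}{4} \frac{\log d}{d} n$ then places $\phi(t_k)$ inside the required window. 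The constructed $S_1, \ldots, S_K$ then satisfies all of (1)--(3) of Proposition~\ref{Prop:forbidden} by (ii) and the choice of $t_k$, contradicting that proposition.

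The main technical obstacle is the calibration of $c$ so that the per-step jump in the third observation fits the $\left[\frac{\varepsilon}{4}, \frac{\varepsilon}{2}\right] \frac{\log d}{d} n$ window; the only a priori bound on $k_\ell + k_r$ available, coming from Lemma~\ref{Lemma:balancex+y>xy}, scales like $\Theta(1/\varepsilon) \cdot \frac{\log d}{d} n$, so $c$ must be small in $\varepsilon$ in a specific way, which is the role of the factor $(1+\varepsilon)/\varepsilon$ concealed in the denominator of $c$. A secondary subtlety is that the set $U_{k-1}$ used in invoking (iii) is data-dependent, requiring a deterministic union bound of Lemma~\ref{Lemma:Smallintersection} over all possible fixed sets of the relevant size; this survives because for $d$ large the total count of such sets is only $\exp(o(n))$, which is dominated by the per-set $\exp(-\Omega(n))$ failure probability.
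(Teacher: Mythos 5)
There is a genuine gap in your probability estimate for event (iii). The union bound over all fixed sets $S$ with side-sizes $O\bigl(\frac{\log d}{d}n\bigr)$ does not close. Quantitatively: the number of subsets of $[2n]$ with side-sizes at most $p\frac{\log d}{d}n$, where $p \approx K(1+1/\varepsilon)$ is the size you need to cover $\bigcup_{i<k}S_i$, is $\exp\bigl((2p+o_d(1))\frac{\log^2 d}{d}n\bigr)$, while Lemma~\ref{Lemma:Smallintersection} gives a per-set failure probability of only $\exp\bigl(-(\varepsilon^2-o_d(1))\frac{\log^2 d}{d}n\bigr)$. Both exponents carry the same $\frac{\log^2 d}{d}n$ scaling; your claim that the count is ``$\exp(o(n))$'' conflates $o_d(1)\cdot n$ with $o(n)$ as $n\to\infty$ (and the failure probability is also $\exp(-o_d(1)\cdot n)$, not $\exp(-\Theta(n))$ with a $d$-free constant). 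The comparison reduces to the coefficients $2p$ versus $\varepsilon^2$, and since $p\ge 9/\varepsilon^2$ the union bound diverges badly. The paper's proof instead conditions on the history $A^{(0)},\ldots,A^{(T')}$ up to the (data-dependent) time at which $S_{K'}$ is fixed; under this conditioning $S=\bigcup_{k\le K'}S_k$ is one deterministic set and the freshly resampled $A^{(T'+m)}$ is an independent copy of $\Gdb$, so Lemma~\ref{Lemma:Smallintersection} applies with a \emph{single} invocation, plus a harmless union bound over the $O(n^2)$ choices of $T'$ and $K'$. You correctly observe that ``$A^{((k-1)m)}$ is independent of the history-measurable $U_{k-1}$,'' but that independence must replace the exhaustive union bound, not supplement it.

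A secondary error: the step ``any $v\in Z_t\triangle Z_{t-1}$ must satisfy $|f_v(A^{(t)})-f_v(A^{(t-1)})|\ge 1/2$'' is false whenever $\eta>0$. Under Definition~\ref{Def:V_eta_f} a vertex can enter or leave $V^\eta_f$ by moving in or out of the rounding-error set $(I\setminus\tilde I)\cup J$ with its $f$-value unchanged; for instance $f_v\equiv 1.5$ at both steps, but $v$ acquires a neighbor in $I$ at step $t$ and is thereby ejected from $\tilde I$. The paper bounds such coordinates by $2\eta n$ per step, writes $\tfrac{1}{4}\bigl(|V_t\triangle V_{t-1}|-2\eta n\bigr)\le\|f(A^{(t)})-f(A^{(t-1)})\|^2$, and then chooses $\eta=\frac{\varepsilon}{16}\frac{\log d}{d}$ so that the $\frac{\varepsilon}{4}\frac{\log d}{d}n$ budget splits evenly between rounding error and polynomial drift. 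As written your proof establishes the conclusion only at $\eta=0$, which is strictly weaker than the Proposition (which requires hardness for some $\eta>0$; by monotonicity of $V^\eta_f$ in $\eta$, hardness at $\eta=0$ does not imply hardness at any $\eta>0$).
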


\begin{proof}
    Fix $\epsilon >0$, let $K = \lceil \frac{9}{\varepsilon^2}\rceil+1$, $T=(K-1)m$, $c = \frac{\varepsilon}{64 \xi(1+\varepsilon)}$ and $\eta = \frac{\varepsilon}{16} \frac{\log d}{d}$. Let $d_0 = d_0(\varepsilon)$ large enough that Lemma \ref{Lemma:balancex+y>xy}, Proposition \ref{Prop:forbidden} and Lemma \ref{Lemma:Smallintersection} hold true. Assume such an $(D,K-1,c)$-stable algorithm $f$ exists. Sample the interpolation path $A^{(0)}, \cdots, A^{(T)}$ as in Definition \ref{Def:interpolationpath}, and let $V_{t} = V^{\eta}_f(A^{(t)})$ be the resulting independent set. We constructed a sequence of sets $S_1, \cdots, S_K$ based on $\{V_t\}$ as follows. Let $S_1 = V_0$, and for $k \geq 2$, let $S_k$ be the first $V_t$ such that $|V_t \setminus \cup_{l < k}S_l| \geq \frac{\varepsilon}{4} \frac{\log d}{d} n$. If no such $t$ exists then the process fails. We define the following three events:

    \begin{enumerate}
        \item $|V_t \cap L| \geq (1+\varepsilon) \frac{\log d}{d}n $ and $|V_t \cap R| \geq (1+\varepsilon) \frac{\log d}{d}n $ for all $t \in [T]$, and the process of constructing $S_1, \cdots, S_K$ succeeds.

        \item No edge of the interpolation path is $c$-bad for $f$.

        \item The forbidden structure of Proposition \ref{Prop:forbidden} does not exist.
    \end{enumerate}

    Now we state the first lemma, which says if $(1)$ and $(2)$ happen, then the forbidden structure will be produced, which implies that $(1)$, $(2)$ and $(3)$ can not happen at the same time. 

    \begin{lemma} \label{Lemma:1,2->not 3}
        If (1) and (2) both happen, then the  sequence of sets $S_1, \cdots, S_k$ satisfies the properties of the forbidden structure of Proposition \ref{Prop:forbidden}.
    \end{lemma}

    \begin{proof}
        First we show that $|V_t \triangle V_{t-1}| \leq \frac{\varepsilon}{4} \frac{\log d}{d} n$. From (1) we know that the failure event making $V^{\eta}_f = \varnothing$ does not occur for all $A^{(t)}$, $t \in [T]$. By definition \ref{Def:V_eta_f}, $i \in V_t \triangle V_{t-1}$ if two events occur:
        \begin{itemize}
            \item $i \in (I \setminus \Tilde{I}) \cup J$ for either $V^{\eta}_f(A^{(t)})$ or $V^{\eta}_f(A^{(t-1)})$, or
            \item One of $f_i(A^{(t)})$ and $f_i(A^{(t-1)})$ is $\geq 1$ and the other is $\leq \frac{1}{2}$. 
        \end{itemize}

        Notice that case 1 happens for  at most $2\eta n$ coordinates given $|(I \setminus \Tilde{I}) \cup J| \leq \eta n$ in definition \ref{Def:V_eta_f}. Hence there are at least $|V_t \triangle V_{t-1}|-2\eta n$ coordinates $i \in [2n]$ for which $|f_i(A^{(t)})-f_i(A^{(t-1)})| \geq \frac{1}{2}$. Together with event (2), no $c$-bad edges for $f$, it gives
        \begin{align*}
            \frac{1}{4}\big(|V_t \triangle V_{t-1}|-2\eta n\big) \leq \big\|f(A^{(t)})-f(A^{(t-1)})\big\|^2 \leq c \cdot \mathbb{E}_{A \sim \Gdb}\big[\|f(A)\|^2\big]
        \end{align*}
        By assumption that $f$ $(k_{\ell}, k_r, \delta, \xi, \eta\big)$-optimizes the independent set problem, we have
        \begin{align*}
            \mathbb{E}_{A \sim \Gdb}\big[\|f(A)\|^2\big] \leq 2\xi(1+\varepsilon)\frac{\log d}{d}n
        \end{align*}
        Put two inequalities together, we have
        \begin{align*}
            |V_t \triangle V_{t-1}| \leq 8c\xi(1+\varepsilon)\frac{\log d}{d}n+2\eta n 
            = 8 \cdot \frac{\varepsilon}{64 \xi (1+\varepsilon)} \cdot \xi (1+\varepsilon)\frac{\log d}{d}n+ 2 \cdot \frac{\varepsilon}{16} \frac{\log d}{d}n 
            = \frac{\varepsilon}{4}\frac {\log d}{d}n.
        \end{align*}
        Recall that $S_k$ is the first $V_t$ for which $|V_t \setminus \cup_{l \leq k}S_l| \geq \frac{\varepsilon}{4}\frac {\log d}{d}n$. That means $|V_{t-1} \setminus \cup_{l \leq k}S_l| \leq \frac{\varepsilon}{4}\frac {\log d}{d}n$ and we have
        \begin{align*}
            |S_k \setminus \cup_{l \leq k}S_l| 
            = |V_t \setminus \cup_{l \leq k}S_l| 
            = |(V_t \cap V_{t-1}) \setminus \cup_{l \leq k}S_l|
            + |(V_t \setminus V_{t-1}) \setminus \cup_{l \leq k}S_l| 
            \leq |V_{t-1} \setminus \cup_{l \leq k}S_l| + |V_t \triangle V_{t-1}| \leq \frac{\varepsilon}{2}\cdot\frac {\log d}{d}n.
        \end{align*}
        Thus, $|S_k \setminus \cup_{l \leq k}S_l| \in [\frac{\varepsilon}{4}\frac {\log d}{d}n, \frac{\varepsilon}{2}\frac {\log d}{d}n]$. for all $2 \leq k \leq K$. Combining this with event (1) and $S_k$ is an independent set in $A^{(j)}$ for some $j$, we conclude that $S_1, \cdots, S_k$ satisfies the properties of the forbidden structure in Proposition \ref{Prop:forbidden}.
    \end{proof}

    Now we state the second lemma which shows that with positive probability that events 1,2 and 3 occur simultaneously. 
    \begin{lemma} \label{Lemma:occursimul}
        There exist constants $C_1, C_2 >0$  such that events 1,2 and 3 as stated in Proposition \ref{Prop:imposibility_D_stable} occur simultaneously with positive probability. 
    \end{lemma}
    \begin{proof}
        We first  bound  the probability that event 1 occurs. Since $f$ $(k_{\ell}, k_r, \delta, \xi, \eta\big)$-optimizes the independent set problem, we have
        \[            \mathbb{P}_{A\sim\Gdb}\left[\left|V^\eta_f(A) \cap L\right| \geq (1+\varepsilon)\frac{\log d}{d} n \text{ and }\left|V^\eta_f(A) \cap R\right| \geq (1+\varepsilon)\frac{\log d}{d} n\right] \geq 1-\delta.
        \]
        Combining with Lemma \ref{Lemma:balancex+y>xy}, we have $(1+\varepsilon)\frac{\log d}{d} n \leq |V_t \cap L|,  |V_t \cap R| \leq (1+1/\varepsilon)\frac{\log d}{d} n$ with probability at least $1-\delta-\exp\{-\Omega(n)\}$. Now suppose that for some $0 \leq T' \leq T-m$, $A^{(0)}, \cdots, A^{(T')}$ have been sampled and $S_1 = V_0, \cdots, S_{K'} = V_{t_{k'}}$ have been successfully selected. As mentioned previously, $A^{(T'+m)}$ is independent from $A^{(0)}, \cdots, A^{(T')}$. So, provided $|S_k| \leq (1+1/\varepsilon)\frac{\log d}{d}n$ for $1 \leq k \leq K'$, using lemma \ref{Lemma:Smallintersection} with $S = \cup_{k \leq K'}S_k$ and $p = (1+1/\varepsilon)K'$ gives $|V_{T'+m} \cap (\cup_{k \leq K'}S_k) \cap L| \leq \varepsilon \frac{\log d}{d} n$ or $|V_{T'+m} \cap (\cup_{k \leq K'}S_k) \cap R| \leq \varepsilon \frac{\log d}{d} n$ with probability $1-\exp\{-\Omega(n)\}$. Without loss of generality, assume the first inequality holds, we have 
        \[
        |V_{T'+m} \setminus \cup_{k \leq K'}S_k| \geq |(V_{T'+m} \setminus \cup_{k \leq K'}S_k) \cap L| = |V_{T'+m} \cap L| - |V_{T'+m} \cap (\cup_{k \leq K'}S_k) \cap L| \geq \varepsilon\frac{\log d}{d}n \geq \frac{\varepsilon}{4}\frac{\log d}{d}n.
        \]
        This implies that we can find $S_k =  V_t$ for some $t \leq T'+m$ and thus by induction the process succeeds by time step $T = (K-1)m$. Using a union bound over $t$, we have event 1 holds with probability at least $1-\delta(T+1)-\exp\{-\Omega(n)\}$.

        Since $f$ is assume to be $(D,K-1, c)$-stable, event 2 holds with probability at least $(d/n)^{4(K-1)D/c}$. By Proposition \ref{Prop:forbidden}, event 3 holds with probability $1-\exp\{-\Omega(n)\}$. We claim that it suffices to have 
        \begin{align} \label{Ine:C_1andC_2}
            (d/n)^{4(K-1)D/c}>2\exp\{-Cn\} 
            \text{ and }
            (d/n)^{4(K-1)D/c} >  2\delta Kn^2
        \end{align}
        for some constant $C=C(\varepsilon,n)$ to conclude that all three events happen simultaneously with non-zero probability, since (\ref{Ine:C_1andC_2}) implies
        \[
         (d/n)^{4(K-1)D/c} >  \exp\{-Cn\}+\delta Kn^2> \exp\{-\Omega(n)\}+\delta Km> \exp\{-\Omega(n)\}+\delta(T-1)
        \]
        For $d>1$, the first inequality in (\ref{Ine:C_1andC_2}) is implied by $D \leq (Cn - \log 2)\frac{c}{4(K-1)\log n} = (Cn - \log 2)\frac{\varepsilon}{64\xi(1+\varepsilon)4(K-1)\log n}$. And this is implied by $D< \frac{C_1n}{\xi \log n}$, for large enough $n$ and some constant $C_1 = C_1(\varepsilon, d)>0$. The second inequality is implied by $\delta \leq \exp \{-\frac{4(K-1)D}{c}\log n-2\log n-\log (2K)\}$. For large enough $n$, given $\xi,D \geq 1$, there exists another constant $C_2 = C_2(\varepsilon, d)>0$ making it implied by $\delta \leq \exp\{-C_2\xi D \log n\}$. 
        
    \end{proof}
    By Lemma \ref{Lemma:1,2->not 3} and \ref{Lemma:occursimul}, we obtain a contradiction by assuming such an $f$ exists. Hence the proposition follows. 
\end{proof}

Now we proof the negative results of Theorem \ref{Thm:balhalfopt} and \ref{Thm:imposs_achieve_lowdeg} based on Proposition \ref{Prop:imposibility_D_stable}. 
 
\begin{proof}[Proof of Theorem \ref{Thm:balhalfopt}, upper bound] 
     Fix $\varepsilon>0$ and let $ d\ge d_0$ as in Proposition~\ref{Prop:imposibility_D_stable}. Assume that for some $s>0$ there exists a pair of  $s$-local compatible functions $(g_{\ell}, g_{r})$ with $\alpha_{d,\frac{1}{2}}((g_{\ell}, g_{r})) \ge (1+\varepsilon)\frac{\log d}{d}$. By Lemma \ref{Lemma:GamValueRandom}, the random independent set obtained by applying $g_{\ell}, g_{r}$ to $\Gdb$ will have size $(1+\varepsilon)\frac{\log d}{d}n$ on both sides whp. For $\delta, \xi, \eta$ chosen as in Proposition \ref{Prop:imposibility_D_stable}, that also means, by Lemma \ref{Lemma:D_stable_local}, there exists a $\left(1, \lceil \frac{9}{\varepsilon^2}\rceil, \frac{\varepsilon}{64 \xi(1+\varepsilon)}\right)$-stable function $\left((1+\varepsilon)\frac{\log d}{d}n, (1+\varepsilon)\frac{\log d}{d}n, \delta, \xi, \eta\right)$-optimizes the independent set problem in $\Gdb$, a contradiction to Proposition~\ref{Prop:imposibility_D_stable}.
\end{proof}

 \begin{proof}[Proof of Theorem \ref{Thm:imposs_achieve_lowdeg}, part 2]
     Combining  Lemma \ref{Lemma:D_stable_lowdeg} and Proposition \ref{Prop:imposibility_D_stable} gives the second part of Theorem \ref{Thm:imposs_achieve_lowdeg}.
 \end{proof}

\section{Existence of large $\gamma$-balanced independent sets}
\label{Sec:Exist}

Here we prove the existential results of Theorem~\ref{Thm:balexist} in two steps. First we show the upper bound by the first-moment method,  and then use martingale inequalities and  the second-moment method for the lower bound, following~\cite{frieze1990independence}. 

Define $f(c,d) = c \cdot \frac{\log d}{d} $, which we denote just by $f$ for convenience. 

To prove the upper bound, we need to show that for any $c>1/2\gamma(1-\gamma)$ and large enough $d$,
 \begin{align*}
     \mathbb{P}[\Xg \leq 2n f] = 1-o(1)  \text{ as }n \to \infty \,.
 \end{align*}

 Let $Y_f$ be the number of $\gamma$-balanced independent set of size of $2nf$, then by Markov's inequality we have 
\begin{align*}
    \mathbb{P}\big[\Xg \ge 2nf\big] 
    = \mathbb{E}[Y_f\neq 0] \leq \mathbb{E}[Y_f] = {n \choose 2\gamma n f}{ n \choose 2(1-\gamma)nf }\Big(1-\frac{d}{n}\Big)^{4\gamma(1-\gamma)(nf)^2}    
\end{align*}
Now we want show that $\lim_{n \to \infty}\mathbb{P}[\Xg > 2nf]$=0, and to do this it suffices to show that $\limsup_{n \to \infty} n^{-1} \log( \mathbb{E}[Y_f]) \leq -\epsilon$ for some $\epsilon > 0$. Using Stirling's approximation and the fact that $\log(1-f) = -f (1+o_d(1))$ as $d \to \infty$, we have
\begin{align*}
    \limsup_{n \to \infty} \frac{1}{2n}\log\big( \mathbb{E}[Y_f]\big) 
    &=
    \limsup_{n \to \infty} \frac{1}{2n} \Bigg\{2\log \big(n!\big)
    -\log\Big(( 2\gamma n f)!\Big)-\log\Big(\big(n - 2\gamma n f\big)!\Big)\\
    &- \log\Big(\big( 2(1-\gamma) n f\big)!\Big) - \log\Big((n - 2(1-\gamma) n f)!\Big)
    +4\gamma(1-\gamma)(nf)^2\log \Big(1-\frac{d}{n}\Big)\Bigg\} \\
    &= \limsup_{n \to \infty} \frac{1}{2n} \Bigg\{- 2n\gamma f\log(2\gamma f) - n(1-2\gamma f)\log(1-2\gamma f)-2n(1-\gamma)f \log\big(2(1-\gamma)f\big)\\ 
    & - n\big(1-2(1-\gamma)f\big) \log\big(1-2(1-\gamma)f\big)
    - 4\gamma(1-\gamma)(nf)^2\frac{d}{n}
    \Bigg\}\\
    &= -\gamma f\log(2\gamma f) - \left(\frac{1}{2}-\gamma f\right)\log(1-2\gamma f)-(1-\gamma)f \log\big(2(1-\gamma)f\big)\\
    &-\left(\frac{1}{2}-(1-\gamma) f\right)\log\big(1-2(1-\gamma)f\big)
    - 2\gamma(1-\gamma)f^2d \,.
\end{align*}
Note that the  leading term is $c\big(1-2\gamma(1-\gamma)c\big) \frac{\log^{2}d}{d}$, which is negative for $c>1/2\gamma(1-\gamma)$ and hence the upper bound follows.

Next we prove the lower bound by showing that for any $c<1/2\gamma(1-\gamma)$ and large enough $d$,  $\mathbb{P} [\Xg \ge 2nc \cdot \frac{ \log d}{d}] = 1-o(1)$. We show this by using Azuma's inequality and the second-moment method  as in~\cite{frieze1990independence}. The idea is to show the high probability existence of an independent set from a special class of balanced independent sets, balanced $P$-independent sets (defined below), which implies the lower bound for $\Xg$. 

Fix $\gamma \in (0,\frac{1}{2}]$.  We denote the vertices in $L$ and $R$ as $L = \{1_L, \cdots, n_L\}$, $R = \{1_R, \cdots, n_R\}$ and let $m = \Big[\frac{d}{(\log d)^2} \Big]$, $n' = \lfloor n/m \rfloor$. For each $i \in [n']$, let $P_{i,L} = \{((i-1)m+1)_L, \cdots, (im)_L\}$ and $P_{i,R} = \{((i-1)m+1)_R, \cdots, (im)_R\}$. So we have $n'$ partition sets for both $L$ and $R$. We say a set $I$ is a $\gamma$-balanced $P$-independent set if $X$ is a $\gamma$-balanced independent set and $|X \cap P_{i,L}| \leq 1$, $|X \cap P_{i,R}| \leq 1$ for all $i \in [n']$. Let $\beta(G)$ denote the size of the largest $\gamma$-balanced $P$-independent subset of $G \sim \Gdb$ and $Z_k$ denote that number of $\gamma$-balanced P-independent subset of size $k$. 
 We will show the following, which established the desired lower bound on $X_\gamma$. 
    \begin{lemma}\label{lowerboundbetaGdb}
        For large enough $d$, 
        \begin{align*}
             \lim_{n \to \infty} \mathbb{P}\Big[\beta(\Gdb) \geq \frac{1}{\gamma(1-\gamma)}\frac{n}{d}(\log d - \log \log d-\log 2 +1-\varepsilon)\Big] = 1.
        \end{align*}
    \end{lemma}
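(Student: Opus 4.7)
The plan is to prove Lemma~\ref{lowerboundbetaGdb} by adapting Frieze's two-step argument for $G(n,d/n)$~\cite{frieze1990independence}: first establish tight concentration of $\beta(\Gdb)$ around its expectation via a martingale inequality, then bound $\mathbb{E}[\beta(\Gdb)]$ from below using the second-moment method on $Z_k$.

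For concentration, I would run a vertex-exposure Doob martingale on the $2n$ vertices of $\Gdb$, exposing the edges incident to one vertex at a time. The crucial bounded-difference check is that re-sampling the edges incident to a single vertex changes $\beta$ by at most a bounded constant: starting from an optimal $\gamma$-balanced $P$-independent set $I^{\ast}$ in one graph, deleting that vertex preserves $P$-independence in the other, and removing at most two further vertices restores the $\gamma$-balance condition $\bigl||I \cap L| - \gamma|I|\bigr| < 1$. Azuma's inequality then yields $\mathbb{P}\bigl[|\beta(\Gdb) - \mathbb{E}\beta(\Gdb)| \geq t\bigr] \leq 2\exp(-\Omega(t^2/n))$, so any $t$ with $\sqrt{n \log n} \ll t \ll n/d$ (for example $t = n^{2/3}$, with $d$ fixed) gives stretched-exponential concentration on a scale negligible compared to the target $k^{\ast} = \Theta(n\log d/d)$.

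For the expectation, I would apply the second moment method to $Z_{k^{\ast}}$ with $k^{\ast} = \frac{1}{\gamma(1-\gamma)} \frac{n}{d}(\log d - \log\log d - \log 2 + 1 - \varepsilon)$. The first moment
\begin{align*}
\mathbb{E}[Z_{k^{\ast}}] = \binom{n'}{\gamma k^{\ast}} \binom{n'}{(1-\gamma)k^{\ast}} m^{k^{\ast}} \bigl(1 - d/n\bigr)^{\gamma(1-\gamma)(k^{\ast})^2}
\end{align*}
tends to infinity by Stirling: the block size $m = \lfloor d/(\log d)^2\rfloor$ is calibrated precisely so that $\log m = \log d - 2\log\log d$ absorbs the $2\log\log d$ arising from $\log(n'/\gamma k^{\ast})$, exposing the Frieze-type per-vertex surplus $\log d + 1 - \log 2$ that makes the threshold positive.

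The main obstacle is the second moment estimate. Decomposing $\mathbb{E}[Z_{k^{\ast}}^2]$ by the overlaps $a = |I \cap J \cap L|$ and $b = |I \cap J \cap R|$, and using that the $P$-constraint forces overlapping vertices of $J$ to reuse $I$'s blocks, gives
\begin{align*}
\frac{\mathbb{E}[Z_{k^{\ast}}^2]}{\mathbb{E}[Z_{k^{\ast}}]^2} = \sum_{a,b} \frac{\binom{\gamma k^{\ast}}{a}\binom{n' - \gamma k^{\ast}}{\gamma k^{\ast} - a}}{\binom{n'}{\gamma k^{\ast}}} \cdot \frac{\binom{(1-\gamma)k^{\ast}}{b}\binom{n' - (1-\gamma)k^{\ast}}{(1-\gamma)k^{\ast} - b}}{\binom{n'}{(1-\gamma)k^{\ast}}} \cdot m^{-(a+b)} \bigl(1 - d/n\bigr)^{-ab}.
\end{align*}
I would estimate the logarithm of the summand and show, following the case analysis in~\cite{frieze1990independence}, that it is maximized near $(a,b) = (0,0)$ at value $1 + o(1)$, with all larger overlaps exponentially suppressed because the $m^{-(a+b)}$ factor from the block partition dominates the $(1-d/n)^{-ab}$ inflation. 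This yields $\mathbb{E}[Z_{k^{\ast}}^2] \leq \mathrm{poly}(n)\cdot \mathbb{E}[Z_{k^{\ast}}]^2$, hence $\mathbb{P}[Z_{k^{\ast}} > 0] \geq 1/\mathrm{poly}(n)$ by Paley--Zygmund. Since the Azuma tail $\exp(-\Omega(n^{1/3}))$ is super-polynomially smaller, combining the two bounds forces $\mathbb{E}\beta(\Gdb) \geq k^{\ast} - o(k^{\ast})$; one further application of concentration then yields $\beta(\Gdb) \geq k^{\ast} - o(k^{\ast})$ whp, completing the proof.
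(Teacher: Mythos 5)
Your overall two-step plan --- concentration via an Azuma martingale plus a second-moment estimate on $Z_{k^\ast}$, combined to lift the weak positive-probability conclusion to a high-probability one --- is exactly the structure of the paper's proof, and you correctly identify the role that the block size $m$ plays in the first moment. However, there are two linked quantitative errors that make the proposal fail as written.

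First, the claim that $\mathbb{E}[Z_{k^\ast}^2]\le\mathrm{poly}(n)\cdot\mathbb{E}[Z_{k^\ast}]^2$ is false. This is the dense-regime phenomenon for cliques in $G(n,1/2)$, but it does not hold here: with set sizes $\Theta(n\log d/d)$ in $\Gdb$, the intermediate-overlap terms in the second-moment sum are exponentially large in $n$. In the paper's Case~1 alone one already gets summands as large as $\exp\bigl\{(\log d)^2 n/(\gamma(1-\gamma)d^{3/2})\bigr\}$, and the full bound proved is $\mathbb{E}[Z_k^2]/\mathbb{E}[Z_k]^2\le \exp\bigl\{\Theta\bigl(n(\log d)^{5/2}/d^{3/2}\bigr)\bigr\}$, so Paley--Zygmund only gives $\mathbb{P}[Z_k>0]\ge\exp\bigl\{-\Theta\bigl(n(\log d)^{5/2}/d^{3/2}\bigr)\bigr\}$, not an inverse polynomial. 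Your heuristic that ``$m^{-(a+b)}$ dominates $(1-d/n)^{-ab}$'' is simply not true across the whole overlap range; the case analysis in Frieze's argument exists precisely because the ratio is not uniformly small.

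Second, because $\mathbb{P}[Z_{k^\ast}>0]$ is only $e^{-\Theta(n)}$, a vertex-exposure martingale on $2n$ vertices is too weak. That martingale gives $\mathbb{P}\bigl[|\beta-\overline{\beta}|\ge t\bigr]\le 2\exp(-\Omega(t^2/n))$, and taking $t\asymp n/d$ (which is what you must do to stay below the target scale) gives an Azuma tail $\exp(-\Omega(n/d^2))$, which is \emph{larger}, not smaller, than $\exp\bigl\{-\Theta\bigl(n(\log d)^{5/2}/d^{3/2}\bigr)\bigr\}$ for large $d$. No contradiction is obtained. The paper fixes this by running the martingale not over vertices but over the $2n'\approx 2n(\log d)^2/d$ blocks $P_{i,L},P_{i,R}$; each block intersects a $P$-independent set in at most one vertex, giving bounded differences $1+1/\gamma$ over only $O(n(\log d)^2/d)$ exposure steps, and hence the sharper bound $\mathbb{P}\bigl[|\beta-\overline{\beta}|\ge t\bigr]\le 2\exp\bigl\{-t^2d/\bigl(4(1+1/\gamma)^2(\log d)^2 n\bigr)\bigr\}$. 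With $t=\varepsilon n/(3\gamma(1-\gamma)d)$ the Azuma exponent is of order $n/(d(\log d)^2)$, which beats $n(\log d)^{5/2}/d^{3/2}$ once $\sqrt{d}\gg(\log d)^{9/2}$. In short: the block structure is not only a first-moment bookkeeping device; it is essential to make the Azuma rate win against the genuinely exponential second-moment ratio.

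Finally, your choice $t=n^{2/3}$ with $d$ fixed does not help: it makes the Azuma tail $\exp(-\Omega(n^{1/3}))$, but since the second-moment probability is $e^{-\Theta(n)}$, the inequality you need runs in the wrong direction.
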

To show Lemma \ref{lowerboundbetaGdb}, we establish two  lemmas similar to those in~ \cite{frieze1990independence}.
    \begin{lemma} \label{betaconcentration}
        Let $\overline{\beta} = \mathbb{E}\big[\beta(\Gdb)\big]$. Then
        \begin{align*}
            \mathbb{P}\Big[\big|\beta(\Gdb)-\overline{\beta}\big| \geq t\Big] \leq 2\exp\left\{-\frac{t^2 d}{4 (1+\frac{1}{\gamma})^2 (\log d)^2 n}\right\} \qquad \text{for } t>0.
        \end{align*}
    \end{lemma}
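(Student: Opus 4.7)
The plan is to apply the Azuma--Hoeffding inequality to a \emph{block exposure martingale} whose blocks are precisely the parts $P_{i,L}$ and $P_{i,R}$ used in the definition of $P$-independence. The crucial point is that the $P$-structure allows at most one chosen vertex per block, while the $\gamma$-balance condition links the two sides; combining the two gives a Lipschitz constant independent of the block size $m$, and this is where the $m = d/(\log d)^2$ factor enters the denominator of the exponent.

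Order the $2n'$ blocks $P_{1,L}, \dots, P_{n',L}, P_{1,R}, \dots, P_{n',R}$, and let $\mathcal F_i$ be the $\sigma$-field generated by the status of all edges incident to the first $i$ blocks. Set $Z_i = \E[\beta(\Gdb) \mid \mathcal F_i]$, so $Z_0 = \overline{\beta}$ and $Z_{2n'} = \beta(\Gdb)$. I would next verify the one-block Lipschitz bound: for any two graphs $G, G' \in \Gdb$ that agree outside the edges incident to a single block $P$,
\[
|\beta(G) - \beta(G')| \le 1 + \tfrac{1}{\gamma}.
\]
To see this, let $X$ be an optimal $\gamma$-balanced $P$-independent set in $G$ and set $X' = X \setminus P$. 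Since edges outside $P$ are unchanged, $X'$ remains independent in $G'$, and $P$-independence forces $|X \cap P| \le 1$, so $|X'| \ge |X| - 1$ and $X'$ loses at most one vertex on one side of the bipartition. To restore $\gamma$-balance I remove additional vertices from the over-represented side; since the imbalance is at most $1$, this costs at most $1/\gamma - 1$ further vertices (using $\gamma \le 1/2$ so $1/\gamma \ge 1/(1-\gamma)$). Hence $\beta(G') \ge |X| - (1 + 1/\gamma - 1) - 1 \ge \beta(G) - (1 + 1/\gamma)$, and swapping roles gives the other direction.

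Given this, a standard coupling argument (resample the block-$i$ edges from their product Bernoulli law) shows $|Z_i - Z_{i-1}| \le 1 + 1/\gamma$ almost surely. Azuma's inequality then yields
\[
\P\bigl[|\beta(\Gdb) - \overline{\beta}| \ge t\bigr] \le 2\exp\!\left(-\frac{t^2}{2 \cdot 2n' (1+1/\gamma)^2}\right) \le 2\exp\!\left(-\frac{t^2 m}{4n(1+1/\gamma)^2}\right),
\]
using $n' \le n/m$. Substituting $m = \lfloor d/(\log d)^2 \rfloor$ gives exactly the claimed bound.

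The main (and really only) obstacle is the Lipschitz estimate: one has to combine the $P$-independence constraint (which caps the per-block contribution at one vertex) with the global balance constraint (which converts a unit change on one side into a $1/\gamma$ change in total size). Once that is set up cleanly, Azuma is immediate and the $m$ in the numerator of the exponent is the payoff from having grouped the exposure by blocks rather than by individual vertices.
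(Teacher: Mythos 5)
Your proof takes exactly the same route as the paper's (which is stated in one line there): establish the block-Lipschitz bound $|\beta(G)-\beta(G')|\le 1+1/\gamma$ when all edges incident to a single part $P_{i,L}$ or $P_{i,R}$ are changed, and then apply Azuma's inequality to the Doob martingale obtained by exposing the $\approx 2n/m$ blocks; with $m$ of order $d/(\log d)^2$ this yields the claimed exponent. Your write-up actually fills in more detail than the paper does (the paper asserts the Lipschitz bound without justification), so the only nits are cosmetic: the bookkeeping in ``$|X|-(1+1/\gamma-1)-1$'' is written confusingly (though the net conclusion $\beta(G')\ge\beta(G)-(1+1/\gamma)$ is correct), and the final substitution needs $m\ge d/(\log d)^2$ to recover exactly the stated constant, so one should read the bracket notation $[\,\cdot\,]$ as a ceiling (or accept an inconsequential change of constant) -- an issue present in the paper's own terse proof as well.
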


    \begin{lemma} \label{lowerboundZk}
        For fixed $\varepsilon \in (0,1)$, let $k = \frac{n}{\gamma(1-\gamma)d}(\log d - \log \log d-
    \log 2 +1 - \varepsilon/3)$, then
    \begin{align*}
        \mathbb{P}[Z_k>0] \geq \exp\left\{-\frac{2e^4}{\sqrt{\gamma(1-\gamma)}}\frac{(\log d)^{\frac{5}{2}}}{d^{\frac{3}{2}}}n\right\}.
    \end{align*}
    \end{lemma}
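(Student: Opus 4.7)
The plan is to apply the second-moment method (Paley-Zygmund) in the form
\[
\mathbb{P}[Z_k > 0] \ge \frac{(\mathbb{E} Z_k)^2}{\mathbb{E}[Z_k^2]},
\]
and show that $\mathbb{E}[Z_k^2]/(\mathbb{E} Z_k)^2 \le \exp\bigl(\tfrac{2e^4}{\sqrt{\gamma(1-\gamma)}} \tfrac{(\log d)^{5/2}}{d^{3/2}}\, n\bigr)$, following Frieze's argument~\cite{frieze1990independence} adapted to the bipartite balanced setting.

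First I compute the first moment. A $\gamma$-balanced $P$-structure of size $k$ consists of $\gamma k$ vertices in $L$ and $(1-\gamma) k$ in $R$ with at most one vertex per block, giving $N_k = \binom{n'}{\gamma k}\binom{n'}{(1-\gamma)k}\, m^k$ structures, each independent in $\Gdb$ with probability $(1-d/n)^{\gamma(1-\gamma)k^2}$. For the second moment I enumerate ordered pairs $(S_1, S_2)$ by overlaps $j_L = |S_1 \cap S_2 \cap L|$ and $j_R = |S_1 \cap S_2 \cap R|$. Because the two $P$-structures together cover $2\gamma(1-\gamma)k^2 - j_L j_R$ distinct potential edges of $L\times R$, the joint probability that both are independent is $(1-d/n)^{2\gamma(1-\gamma)k^2 - j_L j_R}$. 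Counting how many ordered pairs realize a given overlap (first pick $S_1$, then pick which $j_L$ of its $L$-vertices are shared, then pick the remaining $\gamma k - j_L$ vertices of $S_2\cap L$ from the unused blocks, similarly for $R$) and dividing by $(\mathbb{E} Z_k)^2$ gives, after cancellation,
\[
\frac{\mathbb{E}[Z_k^2]}{(\mathbb{E} Z_k)^2} = \sum_{j_L, j_R} h_{\gamma k,n'}(j_L)\, h_{(1-\gamma)k,n'}(j_R)\, m^{-j_L - j_R} (1-d/n)^{-j_L j_R},
\]
where $h_{a,b}(j) = \binom{a}{j}\binom{b-a}{a-j}/\binom{b}{a}$ is the hypergeometric mass function for the intersection of two $a$-subsets of $[b]$.

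The final step is to bound this sum. The $m^{-j}$ factor arising from the block-partition structure (with $m = \lfloor d/(\log d)^2\rfloor$) compensates for the hypergeometric tail; the typical contribution has $j_L, j_R$ of order $k^2/n \sim n(\log d)^2/d^2$, where the blow-up $(1-d/n)^{-j_L j_R} \le \exp(d j_L j_R/n)$ is still moderate. Splitting the sum into a bulk near the typical overlap and a tail, applying Stirling to the hypergeometric masses, and using the precise value of $k$ (chosen so that the first moment just barely survives after the $\log d - \log\log d - \log 2 + 1$ adjustment), one shows that both pieces are dominated by the target exponential.

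The main obstacle is this quantitative bookkeeping. A "standard" second moment argument produces $\mathbb{E}[Z_k^2]/(\mathbb{E} Z_k)^2 = O(1)$, whereas here I need a sharp $\exp(o(n))$ estimate with the specific subleading rate $(\log d)^{5/2}/d^{3/2}$. This forces the Stirling approximations to be carried to a higher precision than usual and the $\gamma(1-\gamma)$ and $\log d$ factors to be tracked throughout the bipartite adaptation of Frieze's single-graph calculation --- this is what ultimately pins down the constant $2e^4/\sqrt{\gamma(1-\gamma)}$ and shows the bound suffices in combination with the martingale concentration of Lemma~\ref{betaconcentration} to conclude Lemma~\ref{lowerboundbetaGdb}.
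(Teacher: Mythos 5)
Your plan is the paper's proof: Paley--Zygmund with the same first-moment formula
$\mathbb{E}[Z_k] = m^{k}\binom{n'}{\gamma k}\binom{n'}{(1-\gamma)k}(1-d/n)^{\gamma(1-\gamma)k^2}$
and the same overlap decomposition of $\mathbb{E}[Z_k^2]$, indexed by the one-sided overlaps (the paper uses $(l,i)$ with $i=j_L$, $l-i=j_R$; your $(j_L,j_R)$ parameterization is a cosmetic reindexing, and your cleaner hypergeometric identity is a slightly tighter variant of the paper's upper bound $\binom{n'-i}{\gamma k - i}$ in place of $\binom{n'-\gamma k}{\gamma k - i}$). Where your write-up stops, the paper proceeds by first summing out the inner index to reduce to a single sum $\sum_l U_l$ with $U_l = \bigl(\tfrac{\gamma(1-\gamma)k}{n'm}\bigr)^l\binom{k}{l}\exp\{l^2 d/4n\}$, then bounding $U_l$ in three regimes of $l$ (small $l$ up to $2\gamma(1-\gamma)k$, intermediate $l$, and $l$ near $k$ via a ratio argument $U_l/U_{l+1}$), and finally using the choice of $k$ to show $U_k\le 1$; the surviving factor $\exp\{2e^4 (\log d)^{5/2}n/(\sqrt{\gamma(1-\gamma)}\,d^{3/2})\}$ comes out of the third case. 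Your ``bulk near typical overlap plus tail'' split is a two-case version of that same strategy, but as written it omits the intermediate regime that the paper handles separately, and you do not exhibit the computation that produces the explicit constant $2e^4/\sqrt{\gamma(1-\gamma)}$---so while the approach is right and coincides with the paper's, the quantitative heart of the lemma (the case analysis on $U_l$ and the evaluation of $U_k$ from the choice of $k$) is still left to be done.
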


     Lemma \ref{lowerboundbetaGdb} follows by combining Lemmas \ref{betaconcentration} and \ref{lowerboundZk}. Indeed take $t = \frac{\epsilon n}{3\gamma(1-\gamma)d}$ in Lemma \ref{betaconcentration} and compare it to Lemma \ref{lowerboundZk}, we have $\overline{\beta} \geq k-t$. Apply $t = \frac{\varepsilon n}{3\gamma(1-\gamma)d}$ in Lemma \ref{betaconcentration} again, we get $\mathbb{P}\left[\beta(\Gdb) \leq k - 2t\right]$ is exponentially small as $n \to \infty$ which gives Lemma \ref{lowerboundbetaGdb}.

    To show Lemma \ref{betaconcentration}, we notice that $|\beta(G)-\beta(G')| \leq 1+\frac{1}{\gamma}$, where $G'$ is obtained from $G$ by changing some of the edges incident with the vertices in a single $P_{i,L}$ or $P_{i, R}$.  Then Azuma's  inequality  gives Lemma \ref{betaconcentration}.
    
    It remains to show Lemma \ref{lowerboundZk}. 
    \begin{proof}[Proof of Lemma \ref{lowerboundZk}]
    We use the inequality  $\mathbb{P}[Z_k >0] \geq \mathbb{E}[Z_k]^2/\mathbb{E}[Z^2_k]$. To compute the first and the second moment of $X_k$, 
    \begin{align*}
        \mathbb{E}[Z_k]= 
        {n' \choose \gamma k}
        m^{\gamma k} 
        {n' \choose (1-\gamma)k}
        m^{(1-\gamma)k}
        \left(1-\frac{d}{n}\right)^{\gamma(1-\gamma)k^2} 
        =  m^{k} {n' \choose \gamma k} 
        {n' \choose (1-\gamma)k}
        \left(1-\frac{d}{n}\right)^{\gamma(1-\gamma)k^2} 
    \end{align*}
    
    And \begin{align*}
        \mathbb{E}[Z_k^2]&\leq \mathbb{E}[Z_k]\sum_{l=0}^{k} m^{k-l} \sum_{i=0}^l{\gamma k \choose i}{n'-i \choose \gamma k-i}{(1-\gamma)k \choose l-i}{n'-(l-i) \choose (1-\gamma)k-(l-i)}
        \left(1-\frac{d}{n}\right)^{\gamma(1-\gamma)k^2-i(l-i)}
    \end{align*}

    Moreover, we have
    \begin{align*}
        \sum_{i=0}^{l}\frac{ {n'-i \choose \gamma k-i}
        {n'-(l-i) \choose (1-\gamma)k-(l-i)}}{{n' \choose \gamma k}{n' \choose (1-\gamma)k}} \leq  \sum_{i=0}^{l}\left(\frac{\gamma k}{n'} \cdot \frac{(1-\gamma) k}{n'}\right)^i \leq \left(\frac{\gamma(1-\gamma) k}{n'}\right)^l \cdot \frac{1}{k} 
        \leq \left(\frac{\gamma(1-\gamma) k}{n'}\right)^l \cdot \frac{1}{l}
    \end{align*}
Thus,
\begin{align}
    \frac{\mathbb{E}[X_k^2]}{\mathbb{E}[X_k]^2} 
    &\leq 
    \sum_{l=0}^{k} 
    \left(\frac{\gamma(1-\gamma) k}{n'm}\right)^l\frac{1}{l}
    \sum_{i=0}^l {\gamma k \choose i}{(1-\gamma)k \choose l-i}\left(1-\frac{d}{n}\right)^{-i(l-i)} \nonumber\\
    &\leq \sum_{l=0}^{k} 
    \left(\frac{\gamma(1-\gamma) k}{n'm}\right)^l\frac{1}{l}{k \choose l}
    \sum_{i=0}^l\left(1-\frac{d}{n}\right)^{-i(l-i)} \nonumber \\
    &\leq \sum_{l=0}^{k} 
    \left(\frac{\gamma(1-\gamma) k}{n'm}\right)^l\frac{1}{l}{k \choose l}\cdot l \left(1-\frac{d}{n}\right)^{-l^2/4}\nonumber \\
    &= \sum_{l=0}^{k} 
    \left(\frac{\gamma(1-\gamma) k}{n'm}\right)^l{k \choose l} \left(1-\frac{d}{n}\right)^{-l^2/4} \nonumber \\
    &\leq \exp\{4(\log d)^2\}
    \sum_{l=0}^{k} U_l \label{Inq:Exp_lessthan_U_l} \,,
\end{align}
where
\begin{align*}
    U_l = \left(\frac{\gamma(1-\gamma) k}{n'm}\right)^l{k \choose l}
     \exp\left\{\frac{l^2d}{4n}\right\} \,.
\end{align*}
The last inequality holds since 
\begin{align*}
    & ~~~~~~~ \left(1-\frac{d}{n}\right)^{l^2/4} \geq \exp\{-4(\log d)^2\} \exp \left \{-\frac{l^2d}{4n} \right \} \\
    &\Longleftrightarrow
    \frac{l^2}{4} \log \Big(1-\frac{d}{n}\Big)
    \geq
    -4(\log d)^2 -\frac{l^2d}{4n} \\
    &\Longleftrightarrow
    \log \Big(1-\frac{d}{n}\Big)
    \geq
    \frac{-16}{l^2}(\log d)^2-\frac{d}{n}, ~ ~ ~(\text{Notice that } \frac{d}{n} \to 0 \text{ gives } \log(1-\frac{d}{n}) \sim -\frac{d}{n}-\frac{d^2}{n^2})\\
    &\Longleftrightarrow
    -\frac{d}{n} -\frac{d^2}{n^2}\geq \frac{-16}{l^2}(\log d)^2-\frac{d}{n} \\
    &\Longleftrightarrow \frac{l^2}{n^2} \leq 16 \Big(\frac{\log d}{d}\Big)^2\\
    &\text{This is true due to the fact that } l \leq k < \frac{4n \log d}{d} \text{ for large enough } d.
\end{align*}
    
Similar to the upper bound of $u_l$ (up to a constant factor in the $\exp$ function) in~\cite{frieze1990independence}, we can upper bound $U_l$ as follows, 
\begin{align}
    U_l 
    &\leq  \left(\frac{\gamma(1-\gamma) k}{n'm}\right)^l \left(\frac{ke}{l}\right)^{l}
     \exp\left\{\frac{l^2d}{4n}\right\} \nonumber\\
     &\leq
      \left(\frac{\log d}{d}\cdot \frac{ke}{l}\cdot
     \exp\left\{\frac{ld}{4n}\right\}\right)^{l} \label{Inq:upperboundU_l}
\end{align}

\textit{Case 1: }$0 \leq l \leq 2\gamma(1-\gamma)k$.
Here $\exp\left\{\frac{ld}{4n} \right\} \leq \sqrt{d}$. And so, by (\ref{Inq:upperboundU_l}),
\begin{align}
    U_l &\leq \left(\frac{ek\log d}{l\sqrt{d}}\right)^{l} \nonumber \\
    &\leq \exp\left\{\frac{l \log d}{d}\right\}\nonumber \\
    &\leq  \exp\left\{\frac{(\log d)^2n }{\gamma(1-\gamma)d^{\frac{3}{2}}}\right\} \label{Inq:U_l_1}
\end{align}

\textit{Case 2: }$2\gamma(1-\gamma)k< l \leq \frac{4n}{d}(\log d-\log\log d-3)$.
By (\ref{Inq:upperboundU_l}),
\begin{align}
    U_l &\leq \left(\frac{4e\log d}{d} \cdot \exp\left\{\frac{ld}{4n}\right\}\right)^l \nonumber \\
    &\leq \left(\frac{4e\log d}{d} \cdot \frac{d}{e^3 \log d }\right)^l \nonumber \\
    &\leq 1  \label{Inq:U_l_2}
\end{align}

\textit{Case 3: }  $\frac{4n}{d}(\log d-\log\log d-3)<l\leq k$. Now,
\begin{align*}
    \frac{U_l}{U_{l+1}} &= \frac{(l+1)n'm}{(k-l)l}\exp\left\{-\frac{(2l+1)d}{4n}\right\} \\
    & \leq \frac{kn}{(k-l)l} \cdot \frac{e^6 (\log d)^4}{d^2}
\end{align*}

Hence,
\begin{align}
    U_l &\leq \frac{1}{(k-l)!l(l+1)\cdots(k-1)}\left(\frac{kne^6(\log d)^4}{d^2}\right)^{k-l}U_k \nonumber\\
    &\leq \frac{1}{((k-l)!)^2}\left(\frac{kne^6(\log d)^4}{d^2}\right)^{k-l} U_k \nonumber\\
    &\leq \left(\frac{kne^8(\log d)^4}{(k-l)^2d^2}\right)^{k-l} U_k \nonumber\\
    &\leq \exp \left\{2\left(\frac{kne^8(\log d)^4}{d^2}\right)^{1/2}\right\} U_k \nonumber\\
    &\leq \exp \left\{\frac{2e^4}{\sqrt{\gamma(1-\gamma)}}\frac{(\log d)^{\frac{5}{2}}}{d^\frac{3}{2}}\right\} U_k \label{Inq:U_l_3}
\end{align}

Finally, as shown in Inequality (8) in ~\cite{frieze1990independence} for some $\theta(d) \to 0$ as $d \to \infty$, we have 
\begin{align}
    U_k^{-1} = \left(\frac{n'm}{\gamma(1-\gamma)k}\exp\left\{-\frac{kd}{4n} \right\}\right)^k  = \Big[\big(1-\theta(d)\big)e^{\varepsilon/3}\Big]^k\geq 1. \label{Inq:U_l_4}
\end{align}

Now Lemma \ref{lowerboundZk} follows from (\ref{Inq:Exp_lessthan_U_l}) and (\ref{Inq:U_l_1})-(\ref{Inq:U_l_4}).   
\end{proof}

\bibliography{IndSetBib}

\begin{thebibliography}{10}

\bibitem{axenovich2021bipartite}
Maria Axenovich, Jean-S{\'e}bastien Sereni, Richard Snyder, and Lea Weber.
\newblock Bipartite independence number in graphs with bounded maximum degree.
\newblock {\em SIAM Journal on Discrete Mathematics}, 35(2):1136--1148, 2021.

\bibitem{barak2019nearly}
Boaz Barak, Samuel Hopkins, Jonathan Kelner, Pravesh~K Kothari, Ankur Moitra,
  and Aaron Potechin.
\newblock A nearly tight sum-of-squares lower bound for the planted clique
  problem.
\newblock {\em SIAM Journal on Computing}, 48(2):687--735, 2019.

\bibitem{barber2012note}
Ben Barber.
\newblock A note on balanced independent sets in the cube.
\newblock {\em Australasian Journal of Combinatorics}, 52:205--207, 2012.

\bibitem{barbier2013hard}
Jean Barbier, Florent Krzakala, Lenka Zdeborov{\'a}, and Pan Zhang.
\newblock The hard-core model on random graphs revisited.
\newblock In {\em Journal of Physics: Conference Series}, volume 473, page
  012021. IOP Publishing, 2013.

\bibitem{bayati2010combinatorial}
Mohsen Bayati, David Gamarnik, and Prasad Tetali.
\newblock Combinatorial approach to the interpolation method and scaling limits
  in sparse random graphs.
\newblock In {\em Proceedings of the forty-second ACM symposium on Theory of
  computing}, pages 105--114, 2010.

\bibitem{benjamini2001recurrence}
Itai Benjamini and Oded Schramm.
\newblock Recurrence of distributional limits of finite planar graphs.
\newblock {\em Electronic Journal of Probability}, 6:1--13, 2001.

\bibitem{berthet2013complexity}
Quentin Berthet and Philippe Rigollet.
\newblock Complexity theoretic lower bounds for sparse principal component
  detection.
\newblock In {\em Conference on learning theory}, pages 1046--1066. PMLR, 2013.

\bibitem{bordenave2022detection}
Charles Bordenave, Simon Coste, and Raj~Rao Nadakuditi.
\newblock Detection thresholds in very sparse matrix completion.
\newblock {\em Foundations of Computational Mathematics}, pages 1--125, 2022.

\bibitem{boucheron2013concentration}
S~Boucheron, G~Lugosi, and P~Massart.
\newblock Concentration inequalities, 2013.

\bibitem{brandt2022local}
Sebastian Brandt, Yi-Jun Chang, Jan Greb{\'\i}k, Christoph Grunau, V{\'a}clav
  Rozho{\v{n}}, and Zolt{\'a}n Vidny{\'a}nszky.
\newblock Local problems on trees from the perspectives of distributed
  algorithms, finitary factors, and descriptive combinatorics.
\newblock In {\em 13th Innovations in Theoretical Computer Science Conference
  (ITCS 2022)}, volume 215, page~29. Schloss Dagstuhl-Leibniz-Zentrum f{\"u}r
  Informatik, 2022.

\bibitem{brennan2019optimal}
Matthew Brennan and Guy Bresler.
\newblock Optimal average-case reductions to sparse {PCA}: From weak
  assumptions to strong hardness.
\newblock In {\em Conference on Learning Theory}, pages 469--470. PMLR, 2019.

\bibitem{brennan2018reducibility}
Matthew Brennan, Guy Bresler, and Wasim Huleihel.
\newblock Reducibility and computational lower bounds for problems with planted
  sparse structure.
\newblock In {\em Conference On Learning Theory}, pages 48--166. PMLR, 2018.

\bibitem{bresler2022algorithmic}
Guy Bresler and Brice Huang.
\newblock The algorithmic phase transition of random k-{SAT} for low degree
  polynomials.
\newblock In {\em 2021 IEEE 62nd Annual Symposium on Foundations of Computer
  Science (FOCS)}, pages 298--309. IEEE, 2022.

\bibitem{carlson2022computational}
Charlie Carlson, Ewan Davies, Alexandra Kolla, and Will Perkins.
\newblock Computational thresholds for the fixed-magnetization {I}sing model.
\newblock In {\em Proceedings of the 54th Annual ACM SIGACT Symposium on Theory
  of Computing}, pages 1459--1472, 2022.

\bibitem{chakraborti2021extremal}
Debsoumya Chakraborti.
\newblock Extremal bipartite independence number and balanced coloring.
\newblock {\em European Journal of Combinatorics}, 113:103750, 2023.

\bibitem{chen2022sampling}
Zongchen Chen, Andreas Galanis, Daniel {\v{S}}tefankovi{\v{c}}, and Eric
  Vigoda.
\newblock Sampling colorings and independent sets of random regular bipartite
  graphs in the non-uniqueness region.
\newblock In {\em Proceedings of the 2022 Annual ACM-SIAM Symposium on Discrete
  Algorithms (SODA)}, pages 2198--2207. SIAM, 2022.

\bibitem{chen2023almost}
Zongchen Chen, Elchanan Mossel, and Ilias Zadik.
\newblock Almost-linear planted cliques elude the {M}etropolis process.
\newblock In {\em Proceedings of the 2023 Annual ACM-SIAM Symposium on Discrete
  Algorithms (SODA)}, pages 4504--4539. SIAM, 2023.

\bibitem{dembo2017extremal}
Amir Dembo, Andrea Montanari, and Subhabrata Sen.
\newblock Extremal cuts of sparse random graphs.
\newblock {\em The Annals of Probability}, 45(2):1190--1217, 2017.

\bibitem{ding2023subexponential}
Yunzi Ding, Dmitriy Kunisky, Alexander~S Wein, and Afonso~S Bandeira.
\newblock Subexponential-time algorithms for sparse {PCA}.
\newblock {\em Foundations of Computational Mathematics}, pages 1--50, 2023.

\bibitem{alaoui2021local}
Ahmed El~Alaoui, Andrea Montanari, and Mark Sellke.
\newblock Local algorithms for maximum cut and minimum bisection on locally
  treelike regular graphs of large degree.
\newblock {\em Random Structures \& Algorithms}, 2023.

\bibitem{elek2010borel}
G{\'a}bor Elek and G{\'a}bor Lippner.
\newblock Borel oracles. an analytical approach to constant-time algorithms.
\newblock {\em Proceedings of the American Mathematical Society},
  138(8):2939--2947, 2010.

\bibitem{feige2002relations}
Uriel Feige.
\newblock Relations between average case complexity and approximation
  complexity.
\newblock In {\em Proceedings of the thiry-fourth annual ACM symposium on
  Theory of computing}, pages 534--543, 2002.

\bibitem{feige2004hardness}
Uriel Feige and Shimon Kogan.
\newblock Hardness of approximation of the balanced complete bipartite subgraph
  problem.
\newblock Technical report, Technical Report MCS04-04, Department of Computer
  Science and Applied Math~…, 2004.

\bibitem{feldman2017statistical}
Vitaly Feldman, Elena Grigorescu, Lev Reyzin, Santosh~S Vempala, and Ying Xiao.
\newblock Statistical algorithms and a lower bound for detecting planted
  cliques.
\newblock {\em Journal of the ACM (JACM)}, 64(2):1--37, 2017.

\bibitem{feldman2018complexity}
Vitaly Feldman, Will Perkins, and Santosh Vempala.
\newblock On the complexity of random satisfiability problems with planted
  solutions.
\newblock {\em SIAM Journal on Computing}, 47(4):1294--1338, 2018.

\bibitem{franz2003replica}
Silvio Franz and Michele Leone.
\newblock Replica bounds for optimization problems and diluted spin systems.
\newblock {\em Journal of Statistical Physics}, 111:535--564, 2003.

\bibitem{frieze1990independence}
Alan~M Frieze.
\newblock On the independence number of random graphs.
\newblock {\em Discrete Mathematics}, 81(2):171--175, 1990.

\bibitem{gamarnik2021overlapT}
David Gamarnik.
\newblock The overlap gap property: A topological barrier to optimizing over
  random structures.
\newblock {\em Proceedings of the National Academy of Sciences},
  118(41):e2108492118, 2021.

\bibitem{gamarnik2021overlap}
David Gamarnik and Aukosh Jagannath.
\newblock The overlap gap property and approximate message passing algorithms
  for $ p $-spin models.
\newblock {\em The Annals of Probability}, 49(1), 2021.

\bibitem{gamarnik2020low}
David Gamarnik, Aukosh Jagannath, and Alexander~S Wein.
\newblock Low-degree hardness of random optimization problems.
\newblock In {\em 2020 IEEE 61st Annual Symposium on Foundations of Computer
  Science (FOCS)}, pages 131--140. IEEE, 2020.

\bibitem{gamarnik2017limits}
David Gamarnik and Madhu Sudan.
\newblock Limits of local algorithms over sparse random graphs.
\newblock {\em Annals of Probability}, 45(4):2353--2376, 2017.

\bibitem{gamarnik2017performance}
David Gamarnik and Madhu Sudan.
\newblock Performance of sequential local algorithms for the random
  {NAE}-k-{SAT} problem.
\newblock {\em SIAM Journal on Computing}, 46(2):590--619, 2017.

\bibitem{garey1979computers}
Michael~R Garey and David~S Johnson.
\newblock {\em Computers and intractability}, volume 174.
\newblock Freeman San Francisco, 1979.

\bibitem{guerra2002thermodynamic}
Francesco Guerra and Fabio~Lucio Toninelli.
\newblock The thermodynamic limit in mean field spin glass models.
\newblock {\em Communications in Mathematical Physics}, 230:71--79, 2002.

\bibitem{hastad1996clique}
Johan Hastad.
\newblock Clique is hard to approximate within $n^{1- \epsilon}$.
\newblock In {\em Proceedings of 37th Conference on Foundations of Computer
  Science}, pages 627--636. IEEE, 1996.

\bibitem{hatami2014limits}
Hamed Hatami, L{\'a}szl{\'o} Lov{\'a}sz, and Bal{\'a}zs Szegedy.
\newblock Limits of locally--globally convergent graph sequences.
\newblock {\em Geometric and Functional Analysis}, 24(1):269--296, 2014.

\bibitem{hopkins2018statistical}
Samuel Hopkins.
\newblock {\em Statistical inference and the sum of squares method}.
\newblock PhD thesis, Cornell University, 2018.

\bibitem{hopkins2017power}
Samuel~B Hopkins, Pravesh~K Kothari, Aaron Potechin, Prasad Raghavendra, Tselil
  Schramm, and David Steurer.
\newblock The power of sum-of-squares for detecting hidden structures.
\newblock In {\em 2017 IEEE 58th Annual Symposium on Foundations of Computer
  Science (FOCS)}, pages 720--731. IEEE, 2017.

\bibitem{hopkins2017efficient}
Samuel~B Hopkins and David Steurer.
\newblock Efficient {B}ayesian estimation from few samples: community detection
  and related problems.
\newblock In {\em 2017 IEEE 58th Annual Symposium on Foundations of Computer
  Science (FOCS)}, pages 379--390. IEEE, 2017.

\bibitem{hoppen2018local}
Carlos Hoppen and Nicholas Wormald.
\newblock Local algorithms, regular graphs of large girth, and random regular
  graphs.
\newblock {\em Combinatorica}, 38(3):619--664, 2018.

\bibitem{jenssen2020algorithms}
Matthew Jenssen, Peter Keevash, and Will Perkins.
\newblock Algorithms for\#-{BIS}-hard problems on expander graphs.
\newblock {\em SIAM Journal on Computing}, 49(4):681--710, 2020.

\bibitem{jerrum1992large}
Mark Jerrum.
\newblock Large cliques elude the {M}etropolis process.
\newblock {\em Random Structures \& Algorithms}, 3(4):347--359, 1992.

\bibitem{jones2022sum}
Chris Jones, Aaron Potechin, Goutham Rajendran, Madhur Tulsiani, and Jeff Xu.
\newblock Sum-of-squares lower bounds for sparse independent set.
\newblock In {\em 2021 IEEE 62nd Annual Symposium on Foundations of Computer
  Science (FOCS)}, pages 406--416. IEEE, 2022.

\bibitem{karp1976probabilistic}
Richard Karp.
\newblock Probabilistic analysis of some combinatorial search problems.
\newblock In JF~Traub, editor, {\em Algorithms and complexity: New Directions
  and Recent Results}. Academic Press, 1976.

\bibitem{khot2001improved}
Subhash Khot.
\newblock Improved inapproximability results for maxclique, chromatic number
  and approximate graph coloring.
\newblock In {\em Proceedings 42nd IEEE Symposium on Foundations of Computer
  Science}, pages 600--609. IEEE, 2001.

\bibitem{lauer2007large}
Joseph Lauer and Nicholas Wormald.
\newblock Large independent sets in regular graphs of large girth.
\newblock {\em Journal of Combinatorial Theory, Series B}, 97(6):999--1009,
  2007.

\bibitem{liao2019counting}
Chao Liao, Jiabao Lin, Pinyan Lu, and Zhenyu Mao.
\newblock Counting independent sets and colorings on random regular bipartite
  graphs.
\newblock {\em Approximation, Randomization, and Combinatorial Optimization.
  Algorithms and Techniques}, 2019.

\bibitem{linial1992locality}
Nathan Linial.
\newblock Locality in distributed graph algorithms.
\newblock {\em SIAM Journal on computing}, 21(1):193--201, 1992.

\bibitem{luby1985simple}
Michael Luby.
\newblock A simple parallel algorithm for the maximal independent set problem.
\newblock In {\em Proceedings of the seventeenth annual ACM symposium on Theory
  of computing}, pages 1--10, 1985.

\bibitem{lyons2017factors}
Russell Lyons.
\newblock Factors of iid on trees.
\newblock {\em Combinatorics, Probability and Computing}, 26(2):285--300, 2017.

\bibitem{ma2015sum}
Tengyu Ma and Avi Wigderson.
\newblock Sum-of-squares lower bounds for sparse {PCA}.
\newblock {\em Advances in Neural Information Processing Systems}, 28, 2015.

\bibitem{mezard1987mean}
Marc M{\'e}zard and Giorgio Parisi.
\newblock Mean-field theory of randomly frustrated systems with finite
  connectivity.
\newblock {\em Europhysics Letters}, 3(10):1067, 1987.

\bibitem{nguyen2008constant}
Huy~N Nguyen and Krzysztof Onak.
\newblock Constant-time approximation algorithms via local improvements.
\newblock In {\em 2008 49th Annual IEEE Symposium on Foundations of Computer
  Science}, pages 327--336. IEEE, 2008.

\bibitem{park2022note}
Jinyoung Park.
\newblock Note on the number of balanced independent sets in the {H}amming
  cube.
\newblock {\em The Electronic Journal of Combinatorics}, pages P2--34, 2022.

\bibitem{parnas2007approximating}
Michal Parnas and Dana Ron.
\newblock Approximating the minimum vertex cover in sublinear time and a
  connection to distributed algorithms.
\newblock {\em Theoretical Computer Science}, 381(1-3):183--196, 2007.

\bibitem{rahman2017local}
Mustazee Rahman and B{\'a}lint Vir{\'a}g.
\newblock Local algorithms for independent sets are half-optimal.
\newblock {\em Annals of Probability}, 45(3):1543--1577, 2017.

\bibitem{ramras2010balanced}
Mark Ramras.
\newblock Balanced independent sets in hypercubes.
\newblock {\em Australas. J Comb.}, 48:57--72, 2010.

\bibitem{schramm2022computational}
Tselil Schramm and Alexander~S Wein.
\newblock Computational barriers to estimation from low-degree polynomials.
\newblock {\em The Annals of Statistics}, 50(3):1833--1858, 2022.

\bibitem{wein2022optimal}
Alexander~S Wein.
\newblock Optimal low-degree hardness of maximum independent set.
\newblock {\em Mathematical Statistics and Learning}, 4(3):221--251, 2022.

\bibitem{wormald1995differential}
Nicholas~C Wormald.
\newblock Differential equations for random processes and random graphs.
\newblock {\em Ann. Appl. Probab.}, 5(4):1217--1235, 1995.

\bibitem{zdeborova2010conjecture}
Lenka Zdeborov{\'a} and Stefan Boettcher.
\newblock A conjecture on the maximum cut and bisection width in random regular
  graphs.
\newblock {\em Journal of Statistical Mechanics: Theory and Experiment},
  2010(02):P02020, 2010.

\end{thebibliography}
\bibliographystyle{plain}

\end{document}